\newcommand{\GG}[1]{}
\theoremstyle{definition}
\newtheorem{assume}{Assumption}
\newtheorem*{theorem*}{Theorem}
\newtheorem*{rmk*}{Remark}
\newtheorem{thm}{Theorem}
\newtheorem{prop}{Proposition}
\newtheorem{lem}{Lemma}
\newtheorem{example}{Example}
\newtheorem{cor}{Corollary}
\newcommand{\T}[1]{\textup{#1}}
\newcommand{\bb}[1]{\mathbb{#1}}
\newcommand{\cl}[1]{\mathcal{#1}}
\newcommand{\iid}{\overset{\T{i.i.d.}}{\sim}}
\newcommand{\eqd}{\overset{\T{d}}{=}}
\newcommand{\as}{\overset{\T{a.s.}}{\to}}
\newcommand{\ind}{\overset{\T{d}}{\to}}
\newcommand{\inP}{\overset{\bb{P}}{\to}}
\newcommand{\lmt}[1]{\lim_{#1 \to \infty}}
\newcommand{\op}[1]{#1 #1^\transp}
\newcommand{\ot}[1]{1, \ldots,#1}
\newcommand{\vect}[2]{\begin{pmatrix} #1 \\ #2 \end{pmatrix}}
\newcommand{\vecth}[3]{\begin{pmatrix} #1 \\ #2 \\ #3 \end{pmatrix}}
\DeclareMathOperator{\Var}{Var}
\DeclareMathOperator{\Cov}{Cov}
\DeclareMathOperator{\diag}{diag}
\DeclareMathOperator{\tr}{tr}
\DeclareMathOperator{\Unif}{Unif}
\DeclareMathOperator*{\argmax}{\arg\!\max} 
\DeclareMathOperator*{\plim}{plim}
\def\transp{{\textsc{t}}}
\begin{document}
\onehalfspacing 
\title{\bf Randomization Tests for Weak Null Hypotheses in Randomized Experiments} 

\author{
Jason Wu and Peng Ding 
\\
Department of Statistics, University of California, Berkeley
\footnote{
Corresponding author: Peng Ding, Email: pengdingpku@berkeley.edu, 425 Evans Hall, Berkeley, CA 94720 USA. We gratefully acknowledge financial support from the U.S. National Science Foundation (DMS RTG \# 1745640 for Jason Wu; DMS grants \# 1713152 and \# 1945136 for Peng Ding).
We thank the Associate Editor, three reviewers, Xinran Li, Guillaume Basse, Joel Middleton, Zach Branson, and Anqi Zhao for helpful comments.
}}

\date{}
\maketitle

\begin{abstract}
The Fisher randomization test (FRT) is appropriate for any test statistic, under a sharp null hypothesis that can recover all missing potential outcomes. However, it is often sought after to test a weak null hypothesis that the treatment does not affect the units on average. To use the FRT for a weak null hypothesis, we must address two issues. First, we need to impute the missing potential outcomes although the weak null hypothesis cannot determine all of them. Second, we need to choose a proper test statistic. For a general weak null hypothesis, we propose an approach to imputing missing potential outcomes under a compatible sharp null hypothesis. Building on this imputation scheme, we advocate a studentized statistic. The resulting FRT has multiple desirable features. First, it is model-free. Second, it is finite-sample exact under the sharp null hypothesis that we use to impute the potential outcomes. Third, it conservatively controls large-sample type I error under the weak null hypothesis of interest. Therefore, our FRT is agnostic to the treatment effect heterogeneity. We establish a unified theory for general factorial experiments and extend it to stratified and clustered experiments.

\medskip
\noindent {\it Key Words}: Causal inference; Finite population asymptotics; Potential outcomes; Randomization-based inference; Sharp null hypothesis; Studentization 
\end{abstract}
\newpage

\section{Introduction to the Fisher Randomization Test in Experiments}
\subsection{Literature Review}
Randomization is the cornerstone of statistical causal inference \citep[][Section II]{Fisher35}. It creates comparable treatment groups on average. More fundamentally, it justifies the Fisher randomization test (FRT).
Under Fisher's sharp null hypothesis, the treatment does not affect any units whatsoever, and the distribution of any test statistic is known over all randomizations \citep{Fisher35, rubin1980comment, ObsRosenbaum, CausalImbens}.
Therefore, the FRT delivers a finite-sample exact $p$-value. What is more, many parametric and non-parametric tests are approximations to the FRT \citep{eden1933validity, pitman1937significance, kempthorne1952design, box1955permutation, collier1966some, bradley1968distribution, lehmann1975nonparametrics}.

Another formulation of the FRT relies on exchangeability of outcomes under different treatments \citep{pitman1937significance, hoeffding1952large, Romano90}. They called this formulation a ``permutation test''.
\citet{kempthorne1969behaviour} accentuated the importance of the treatment assignment mechanism to justify the FRT, without assuming that the outcomes are exchangeable. \citet{rubin1980comment} extended the FRT using \citet{Neyman23}'s potential outcomes.
He defined a null hypothesis to be sharp if it can determine all missing potential outcomes. One of his insights was that any test statistic has a known distribution under a sharp null hypothesis, and therefore the FRT is finite-sample exact.

Randomized experiments are increasingly popular in the social sciences \citep{duflo2007using, gerber2012field, CausalImbens, ATHEY201773}. In such applications, testing sharp null hypotheses may not answer the researchers' queries.
They often want to test weak null hypotheses that the treatment has zero effects on average. The ideal testing procedure must leave room for treatment effect heterogeneity.
Unfortunately, weak null hypotheses cannot determine all missing potential outcomes, even though the distributions of test statistics depend on them in general. Consequently, simple FRTs may not be directly applicable for testing weak null hypotheses.

Having the FRT test weak null hypotheses is a delicate task. Although sometimes we can still wield the same FRTs, we need to modify the interpretations when the null hypothesis is not sharp \citep{rosenbaum1999reduced, rosenbaum2001effects, Rosenbaum:2003en, caughey2017beyond}.
Not all FRTs can preserve type I errors for weak null hypotheses even asymptotically. The famous Neyman--Fisher controversy ties into this issue for randomized block designs and Latin square designs \citep{neyman1935statistical, sabbaghi2014comments}.
\citet{gail1996design} and \citet{lin2017placement} gave empirical evidence from simulations, and \citet{DD18} gave a theoretical analysis of the one-way layout. Two strategies exist for using FRTs to test weak null hypotheses.
The first strategy realizes that weak null hypotheses become sharp given appropriate nuisance parameters. It maximizes the $p$-values over all values of the nuisance parameters or their confidence sets \citep{Nolen:2011fk, Rigdon:2015, li2016exact, DingTEV16}.
However, it can be computationally expensive and lacks power when the nuisance parameters are high dimensional. The second strategy uses conditional FRTs.
It relies on partitioning the space of all randomizations, and in some subspaces, certain test statistics have known distributions under the weak null hypotheses \citep{athey2015exact, basse2017exact}. It can be restrictive and is not applicable in general settings.

\subsection{Our Contributions}
We propose a strategy for testing a general hypothesis in a completely randomized factorial experiment. The null hypothesis asserts that certain average factorial effects are zero. It is therefore weak and cannot determine all missing potential outcomes. Our strategy has two components.

First, we specify a sharp null hypothesis. It must imply the weak null hypothesis being tested and be compatible with the observed data. Treatment-unit additivity holds under this sharp null hypothesis.
In particular, it implies constant factorial effects of and beyond the weak null hypothesis. Under this sharp null hypothesis, we can impute all missing potential outcomes.

Second, we use the FRT with a studentized test statistic. Like other test statistics, its sampling distribution depends on unknown potential outcomes in general. Thus, this distribution is outside our grasp.
Fortunately, the FRT generates a proxy distribution under the above sharp null hypothesis. This proxy distribution stochastically dominates the unknown one asymptotically.
The stochastic dominance relationship between them enables us to construct an asymptotically conservative test. Therefore, for testing the weak null hypothesis, we recommend the FRT with the studentized statistic.
Barring studentization, the FRT may not control type I error even asymptotically. We examine several existing test statistics that exhibit this unwanted behavior.

The idea of studentization already surfaces in the literature. \citet{neuhaus1993conditional}, \citet{Janssen97}, \citet{janssen1999testing}, \citet{janssen2003bootstrap} and \citet{Romano13} conducted permutation tests with studentization. These tests assumed that the outcomes are independent draws.
In our formulation, the random treatment assignment drives the statistical inference on fixed potential outcomes. We do not assume any exchangeability of outcomes.

In this particular setting, our theory transmits many new features. First, the sampling distribution of the studentized statistic is not asymptotically pivotal, unlike in an independent samples setting. Rather, the approximate distribution generated by the FRT is.
Second, the FRT is conservative for the weak null hypothesis. This aspect of finite-population causal inference \citep{Neyman23, CausalImbens} was absent in the literature on permutation tests.
Third, studentizaion helps us achieve better first order accuracy, i.e., to control asymptotic type I error. \citet{babu1983inference} and \citet{hall1988theoretical}, on the other hand, used it for better second order accuracy in the bootstrap.

The bootstrap is another resampling method for testing weak null hypotheses. Relative to the bootstrap, FRTs have an additional advantage of being finite-sample exact under sharp null hypotheses.
Although the bootstrap has been a workhorse for many other statistical problems, \citet{imbens2018causal} recently fused its ideas with finite population causal inference.

\subsection{Organization and Notation}
Let us preview how the rest of the paper is organized. Section \ref{s:fw} lays out the potential outcomes framework, FRTs, and the null hypotheses of interest. 
Section \ref{s:TestStat} formalizes what kind of test statistic can be used with the FRT to test weak null hypotheses. It then gives our advocated test statistic that meets the criterion, and also other popular test statistics that do not.
Section \ref{s:sc} gives various examples of special cases covered by our results in Section \ref{s:TestStat}. Section \ref{s:ext} shows how our results, with some modifications, can be extended to other classes of experiments.
Section \ref{s:sim} uses simulations to look at the finite sample behavior of the FRT with studentization, complementing our asymptotic theory. Section \ref{s:appl} demonstrates further the application of our results by using data from real-world randomized experiments.
Section \ref{s:disc} wraps up our paper. The online supplementary material has the proofs for all of our results.

Let $1_n $ and $0_n$ be vectors of $n$ $1$'s and $0$'s, respectively. Let $1(\cdot)$ denote the indicator that an event happens.
Let $A \succeq 0$ and $A \succ 0$ if $A$ is positive semi-definite and positive definite, respectively. Write $A \succeq B$ if $A-B \succeq 0$. For a diagonalizable matrix $A$, let $\lambda_j(A)$ be its $j$-th largest eigenvalue. Let diag$\{\cdot \}$ be a diagonal or block-diagonal matrix.
If $(X_N)$ is a sequence of random variables indexed by $N$, write $X_N \ind X$, $X_N \inP X$, $X_n \as X$ for convergence in distribution, probability, and almost surely (often abbreviated ``a.s.''), respectively. For convergence in probability, we may also write $\plim_{N \to \infty}X_N=X$.  For random vectors or matrices, the same notation denotes such convergence, entry by entry.
Let $\Pi_N$ denote the set of permutations of $\{ \ot{N}\}$. Let $\pi$ denote a generic element of $\Pi_N$, which is a mapping from $\{ \ot{N}\}$ to itself. Let Unif$(\Pi_N)$ denote the uniform distribution over $\Pi_N$.
Random variable $B$ stochastically dominates $A$, written $A \leq_\T{st}B$, if their cumulative distribution functions $F_A(x)$ and $F_B(x)$ satisfy $F_A(x) \geq F_B(x)$ for all $x$. Let $\xi_1,\xi_2,\ldots$ be independent and identically distributed (i.i.d.) $\cl{N}(0,1)$ random variables.

\section{Framework}\label{s:fw}
\subsection{Completely Randomized Experiments}\label{sec::cre}
We adhere to the potential outcomes framework \citep{Neyman23, Rubin74}. Let $Y_i(j)$ be the response of unit $i$ if it receives treatment $j$, where $i=\ot{N}$ and $j=\ot{J}$. Vectorize $Y_i =(Y_i(1), \ldots,Y_i(J))^\transp$.
The means of the potential outcomes are $\bar{Y}(j)= \sum_{i=1}^NY_i(j)/N$, vectorized as $\bar{Y}=(\bar{Y}(1), \ldots,\bar{Y}(J))^\transp$.
The covariance between the potential outcomes is $S(j,k)= \sum_{i=1}^N \{ Y_i(j)- \bar{Y}(j)\} \{ Y_i(k)- \bar{Y}(k) \} /(N-1)$, which is a variance if $j=k$. The covariance matrix $S$ has the $(j,k)$-th entry $S(j,k)$.

Let $W_i \in \{ \ot{J}\}$ represent the treatment that unit $i$ actually receives, and define the indicator $W_i(j)=1(W_i=j)$. The $W = (W_1, \ldots,W_N)$ are generated according to a completely randomized experiment (CRE).
The experimenter picks $N_1, \ldots,N_J \geq 2$ that sum to $N$, and assigns treatments randomly so that any realization satisfies $\sum_{i=1}^NW_i(j)=N_j$ for $j=\ot{J}$, and has probability $\prod_{j=1}^JN_j!/N!$.

Unit $i$'s observed outcome is $Y_i^\T{obs}=Y_i(W_i)= \sum_{j=1}^JW_i(j)Y_i(j)$. So the observed means are $\hat{\bar{Y}}(j)= \sum_{i=1}^NW_i(j)Y_i^\T{obs}/N_j$, vectorized as $\hat{\bar{Y}}=( \hat{\bar{Y}}(1), \ldots, \hat{\bar{Y}}(J))^\transp$.
The observed variances are $\hat{S}(j,j)= \sum_{i=1}^NW_i(j) \{ Y_i^\T{obs}- \hat{\bar{Y}}(j) \}^2/(N_j-1)$, which is the sample analog of $S(j,j)$. Because $Y_i(j)$ and $Y_i(k)$ are not jointly observable, there is no sample analog for $S(j,k)$.
In general, we cannot estimate $S(j,k)$ consistently for $j \neq k$. For regularity, we assume $S(j,j)>0$ and $\hat{S}(j,j)>0$ for all $W=(W_1, \ldots,W_N)^\transp$. 

\subsection{Fisher Randomization Tests}
The \emph{Fisher Randomization Test} (FRT) was formulated by \citet{Fisher35} to analyze experimental data. Several flavors of it exist \citep{pitman1937significance, hoeffding1952large, basu1980randomization, Romano90}. We adopt that of \citet{rubin1980comment}. It arises from the potential outcomes described in Section \ref{sec::cre}.

\citet{Rubin05} called the potential outcome matrix $\{  Y_i(j):i=\ot{N},j=\ot{J}\}$ the Science Table. He termed a null hypothesis \emph{sharp} if it, along with the observed data, can determine all the missing items in the Science Table. 
A test statistic is a function of the observed data and the null hypothesis. Under a sharp null hypothesis, any test statistic has a known distribution. In particular, we can cycle through the possible values of $W$, and for each obtain the corresponding realization of observed data, and then compute the value of the test statistic.
In this manner, the test statistic's distribution becomes accessible, as does a $p$-value. FRTs are therefore finite-sample exact for testing sharp null hypotheses, no matter the test statistic or data generating process for the potential outcomes \citep{ObsRosenbaum, CausalImbens}.
In essence, randomization is fundamental for statistical inference. It justifies the FRT, and guarantees the validity of the resulting $p$-value.

Practitioners typically brand sharp null hypotheses as too restrictive. In a general factorial experiment, our mission is to test
\begin{equation}\label{e:HNx}
H_{0 \T{N}}(C,x):C \bar{Y}=x, 
\end{equation}
where $x \in \bb{R}^m$ and $C \in \bb{R}^{m \times J}$ is a full row rank contrast matrix, i.e., $C1_J=0_m$. We pay extra attention to hypotheses where $x=0_m$, but study general $x$ for completeness.
A \emph{weak} hypothesis is any that is not sharp by the definition of \citet{Rubin05}. The hypothesis \eqref{e:HNx} is therefore weak. It is also referred to as an average/Neyman null hypothesis.
It only confines the averages of the potential outcomes. Meanwhile, a sharp/strong/Fisher null hypothesis confines all individual potential outcomes.

Notwithstanding that the FRT is designed for sharp null hypotheses, we ask whether it can test \eqref{e:HNx} also. The FRT mandates that all potential outcomes be filled out. We do so aided by an artificial sharp null hypothesis. A sensible one is
\begin{equation}\label{e:HFx}
H_{0 \T{F}}(C,x, \tilde{C},\tilde{x}): \vect{C}{\tilde{C}}Y_i= \vect{x}{\tilde{x}}\T{ for }i=\ot{N},
\end{equation} 
where the matrix $(C^\transp, \tilde{C}^\transp,1_J)$ is invertible. When $m=J-1$, $\tilde{C}$ and $\tilde{x}$ are empty, as $(C^\transp,1_J)$ already form an invertible square matrix.
When $m<J-1$, we can construct $\tilde{C}$ from $C$ and $1_J$ by Gram--Schmidt orthogonalization. We are then to select $\tilde{x}\in \bb{R}^{J-m-1}$. Whatever we select here does not matter asymptotically, as we see later.
For null hypotheses \eqref{e:HNx} where $x=0_m$, we can go with $\tilde{x}=0_{J-m-1}$ to get the classical sharp null hypothesis of no individual effects whatsoever. Intuitively, the piece $CY_i=x$ of \eqref{e:HFx} is ``of'' the weak null hypothesis \eqref{e:HNx}, and the piece $\tilde{C}Y_i= \tilde{x}$ is ``beyond'' it.
The hypothesis \eqref{e:HFx} induces two key features. The first is the weak null hypothesis \eqref{e:HNx}. The second is strict additivity, i.e., $Y_i(j)-Y_i(k)$ does not depend on the unit $i$, for $j,k=\ot{J}$.

With the sharp null hypothesis \eqref{e:HFx} and some test statistic $T$ that ideally can capture possible deviation from \eqref{e:HFx}, the FRT proceeds as follows. 
\begin{enumerate}[FRT-1.]
\item Calculate $T$ from $\{ W_i,Y_i^\T{obs}:i=\ot{N}\}$.

\item\label{step::impute} Impute potential outcomes:  
\[ Y_i^*= \vecth{Y_i^*(1)}{\vdots}{Y_i^*(J)}=z+(Y_i^\T{obs}-z_{W_i})1_J,
\T{ where }z= \vecth{z_1}{\vdots}{z_J}= \vecth{C}{\tilde{C}}{1_J^\transp}^{-1}\vecth{x}{\tilde{x}}{0}, \]
or, equivalently, $Y_i^*(j)= Y_i^\T{obs}+z_j-z_{W_i}\T{ for }j=\ot{J}$.

\item\label{step::permute} For a permutation $\pi \in \Pi_N$, compute $Y_{\pi,i}^\T{obs} = \sum_{j=1}^JW_{\pi(i)}(j)Y_i^*(j)$ and calculate $T_\pi$ from $\{ W_{\pi(i)},Y_{\pi,i}^\T{obs}:i=\ot{N}\}$ the same way $T$ was calculated. 

\item\label{step::pv} The $p$-value is
$(N!)^{-1}\sum_{\pi \in \Pi_N}1(T_\pi \geq T)$. 

\end{enumerate}

As a sanity check, the imputed potential outcomes in FRT-\ref{step::impute} satisfy \eqref{e:HFx} and $Y_i^*(W_i)=Y_i^\T{obs}$ for all $i$.
Given the Science Table, every realization of treatment assignment $W$ produces data $\{ W_i,Y_i^\T{obs}:i=\ot{N}\}$. Henceforth, we call the values of $T$ that can possibly emerge from these data the \emph{sampling distribution} of $T$.
Conditioning on the original data $\{ W_i,Y_i^\T{obs}:i=\ot{N}\}$, we can fill out missing potential outcomes with FRT-\ref{step::impute}. We call the set of values $\{ T_\pi:\pi \in \Pi_N \}$ defined in FRT-\ref{step::permute} the \emph{randomization distribution} of $T$.
Since this distribution depends on the original data, whose randomness comes solely from $W$, we denote this distribution by $T_\pi|W$.

Under the sharp null hypothesis that the treatment truly does not affect any unit whatsoever, the FRT just described reduces to the classical permutation test. In particular, in FRT-\ref{step::impute}, all potential outcomes are equal to the observed outcome $Y_i^\T{obs}$, and in FRT-\ref{step::permute}, we just need to permute the treatment assignment $W$ because $Y_{\pi,i}^\T{obs} = Y_i^\T{obs}$ for every unit $i$. Under this sharp null hypothesis, the FRT and permutation test are numerically identical. There is an isomorphism between the two in this sense. In general, the FRT admits a broader class of null hypotheses and experimental designs than the permutation test.

Step FRT-\ref{step::pv} conveys that the FRT $p$-value is a right-tail probability. A larger value of $T$ embodies a larger deviation from the null hypothesis.  
Even if $N!$ is too large for a manageable exact computation of the $p$-value, we are able to fall back on random i.i.d. draws from $\Pi_N$ to approximate the $p$-value in FRT-\ref{step::pv} subject to Monte Carlo error. We are thus always at liberty to sample randomly from the randomization distribution.

For any test statistic $T$, the $p$-value in FRT-\ref{step::pv} is valid under \eqref{e:HFx}. Our overarching goal is to investigate whether the FRT can still control type I error for testing $H_{0 \T{N}}(C,x)$.
Roughly speaking, this turns out to be affirmative asymptotically granted an appropriate test statistic $T$. Before continuing, let us be specific that the FRT with $T$ conservatively controls type I error at level $\alpha$ if $\bb{P}\left\{  (N!)^{-1}\sum_{\pi \in \Pi_N}1(T_\pi \geq T) \leq \alpha \right\} \leq \alpha$. 
In words, the true probability of a conservative test incorrectly rejecting the null hypothesis is never greater than the nominal significance level. For conciseness, when we say a test controls type I error, we do not always mention explicitly that it does so conservatively.

\subsection{Asymptotics for Finite Population Inference}
We have contended that the exact sampling distribution of $T$ depends on unknown potential outcomes under $H_{0 \T{N}}(C,x)$ in general. Instead of finite sample theory, we embrace an asymptotic theory. This gives us a feasible approximation to the sampling distribution of $T$.
Imagine a sequence of finite populations of potential outcomes. For each $N \geq 2J$, we fix in advance $N_1, \ldots,N_J \geq 2$. Independently across $N$, we generate $W$ according to a CRE, from which we get $Y_i^\T{obs}$ and calculate a test statistic. We denote a sequence indexed by $N$ with $N \to \infty$ by $(\cdot)$ or $(\cdot)_{N \geq 2J}$.
Technically, we should index finite population quantities by $N$, and also index observed quantities by $N_1, \ldots,N_J$. For cleaner notation, and with a nod to the precedent of earlier authors, we drop these extra subscripts, unless to emphasize the dependence on $N$. We now state our assumptions on the sequence of potential outcomes.
\begin{assume}\label{AsuA}
The sequence $(N_j/N)$ converges to $p_j \in (0,1)$ for all $j= \ot{J}$.
The sequences $( \bar{Y}_N)$ and $(S_N)$ converge to $\bar{Y}_\infty < \infty$ and $S_\infty$, where $S_\infty$ has finite entries and positive main diagonal entries. Further, $\lmt{N}\max_{j=\ot{J}}\max_{i=\ot{N}}\big\{ Y_i(j)- \bar{Y}(j) \big\}^2/N=0$.
\end{assume}
\begin{assume}\label{AsuB}
Same as Assumption \ref{AsuA} except the last sentence is replaced by: Further, there exists an $L< \infty$ such that $\sum_{i=1}^N \{ Y_i(j)- \bar{Y}(j) \}^4/N \leq L$ for all $j=\ot{J}$ and $N \geq 2J$.
\end{assume}
\begin{prop}\label{p:BimpA}
Assumption \ref{AsuB} implies Assumption \ref{AsuA}.
\end{prop}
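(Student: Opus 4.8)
The plan is to notice that Assumptions \ref{AsuA} and \ref{AsuB} share their first two sentences verbatim: both impose $(N_j/N) \to p_j \in (0,1)$, $(\bar Y_N) \to \bar Y_\infty$, and $(S_N) \to S_\infty$ with the stated properties. The two assumptions differ only in their last sentence. Consequently, to prove the implication it suffices to show that the bounded fourth-moment condition of Assumption \ref{AsuB}, namely $\sum_{i=1}^N \{ Y_i(j) - \bar Y(j) \}^4 / N \leq L$ for all $j = \ot{J}$ and $N \geq 2J$, forces the maximal-negligibility condition $\lmt{N} \max_{j} \max_{i} \{ Y_i(j) - \bar Y(j) \}^2 / N = 0$ that appears in Assumption \ref{AsuA}. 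Since $J$ is fixed, the outer maximum is over finitely many indices, so it is enough to establish the limit for each fixed $j$ and then pass to the maximum over $j$.

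First I would fix $j$ and abbreviate $d_i = \{ Y_i(j) - \bar Y(j) \}^2 \geq 0$, so that the fourth-moment bound reads $\sum_{i=1}^N d_i^2 \leq LN$. The single inequality driving the argument is that a maximum of nonnegative numbers is dominated by their sum: $(\max_i d_i)^2 = \max_i d_i^2 \leq \sum_{i=1}^N d_i^2 \leq LN$. Taking square roots gives $\max_i d_i \leq (LN)^{1/2}$, and dividing by $N$ yields $\max_i d_i / N \leq (L/N)^{1/2}$, which tends to $0$ as $N \to \infty$. Taking the maximum over the finitely many $j = \ot{J}$ and using that a finite maximum of sequences each tending to zero also tends to zero, I obtain $\max_j \max_i \{ Y_i(j) - \bar Y(j) \}^2 / N \to 0$, which is exactly the last sentence of Assumption \ref{AsuA}. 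Combined with the shared first two sentences, this gives the claimed implication.

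I do not expect any genuine obstacle; the proof is essentially one line once the problem is reduced to its last sentence. The only points worth verifying are bookkeeping: that the centering $\bar Y(j)$ is the same finite-population mean at size $N$ in both the fourth-moment condition and the maximal condition (it is), and that it is the \emph{uniformity} of the constant $L$ across $N$ that makes the bound $(L/N)^{1/2}$ vanish rather than merely stay bounded. If the bound on $\sum_i d_i^2 / N$ were allowed to grow with $N$, the conclusion could fail, so this uniformity is where the strength of Assumption \ref{AsuB} is actually used.
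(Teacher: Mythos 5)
Your proof is correct and is essentially identical to the paper's: both bound the maximum of $\{Y_i(j)-\bar{Y}(j)\}^4$ by the sum $\sum_{i=1}^N\{Y_i(j)-\bar{Y}(j)\}^4 \leq LN$ and conclude $\max_{i,j}\{Y_i(j)-\bar{Y}(j)\}^2/N \leq (L/N)^{1/2}\to 0$. Your remark that the uniformity of $L$ in $N$ is the operative point is a nice, accurate observation.
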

The design of experiments often guarantees the existence of $p_j\in (0,1)$ because all treatment groups have comparable sizes in realistic scenarios. We can weaken the existence of $\bar{Y}_\infty$ and $S_\infty$ by standardizing the potential outcomes. Just as we drop $N$, we might drop subscripts $\infty$. 
For instance, $S$ can mean either the finite population covariance matrix or its limiting value, which will be clear from context. Intuitively, Assumption \ref{AsuA} requires more than two moments, and Assumption \ref{AsuB} requires four moments. Assumption \ref{AsuB} is thus stronger than Assumption \ref{AsuA}.
Below are our principal asymptotic tools, which are consequences of \cite{DingCLT}.
\begin{prop}\label{p:inP}
Under Assumption \ref{AsuA},
$\hat{\bar{Y}}- \bar{Y}=(\hat{\bar{Y}}(1)- \bar{Y}(1), \ldots, \hat{\bar{Y}}(J)- \bar{Y}(J))^\transp \inP 0_J$, and $\hat{S}(j,j)- S(j,j)  \inP 0 \T{ for }j=\ot{J}$.
\end{prop}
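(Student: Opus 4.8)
The plan is to exploit that, under a CRE, the set of units assigned to treatment $j$ is a simple random sample of size $N_j$ drawn without replacement from the $N$ fixed numbers $Y_1(j),\ldots,Y_N(j)$. Then $\hat{\bar{Y}}(j)$ and $\hat{S}(j,j)$ are exactly the sample mean and sample variance of such a sample, so I can import the classical first- and second-moment formulas of finite-population sampling theory and finish with Chebyshev's inequality. Since there are only finitely many coordinates $j$, entrywise convergence in probability yields the joint statement, and I treat each $j$ separately.

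First I would handle the means. Under simple random sampling without replacement, $\mathbb{E}\{\hat{\bar{Y}}(j)\}=\bar{Y}(j)$ and $\Var\{\hat{\bar{Y}}(j)\}=\{S(j,j)/N_j\}(1-N_j/N)$. Assumption \ref{AsuA} makes $S(j,j)$ bounded (it converges to a finite limit) and forces $N_j\to\infty$ (because $N_j/N\to p_j\in(0,1)$), so this variance is of order $1/N$ and tends to $0$. Chebyshev's inequality then gives $\hat{\bar{Y}}(j)-\bar{Y}(j)\inP 0$.

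Next, for the variances I would write $a_i=Y_i(j)$ and use the identity $\hat{S}(j,j)=(N_j-1)^{-1}\sum_{i:W_i=j}\{a_i-\bar{Y}(j)\}^2-\{N_j/(N_j-1)\}\{\hat{\bar{Y}}(j)-\bar{Y}(j)\}^2$. The correction term is the product of $N_j/(N_j-1)\to1$ and $\{\hat{\bar{Y}}(j)-\bar{Y}(j)\}^2$, which converges to $0$ in probability by the mean result (indeed in $L^1$, since its expectation is the variance computed above). The leading term is, up to the ratio $N_j/(N_j-1)\to1$, the SRS sample mean of $b_i=\{a_i-\bar{Y}(j)\}^2$. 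Its population mean is $\{(N-1)/N\}S(j,j)\to S(j,j)$, and its sampling variance is at most $S_b^2/N_j$, where $S_b^2=(N-1)^{-1}\sum_{i=1}^N(b_i-\bar{b})^2$ is the population variance of the $b_i$. Another Chebyshev argument closes the proof, provided $S_b^2/N_j\to0$.

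The hard part will be showing $S_b^2/N_j\to0$ using only Assumption \ref{AsuA}, which supplies the Lindeberg-type bound $\max_i\{a_i-\bar{Y}(j)\}^2/N\to0$ rather than a bounded fourth moment. The trick is to bound $S_b^2\le(N-1)^{-1}\sum_i b_i^2=(N-1)^{-1}\sum_i\{a_i-\bar{Y}(j)\}^4$ and then factor $\sum_i\{a_i-\bar{Y}(j)\}^4\le\big(\max_i\{a_i-\bar{Y}(j)\}^2\big)\sum_i\{a_i-\bar{Y}(j)\}^2$. The maximum factor is $o(N)$ by the Lindeberg condition and $\sum_i\{a_i-\bar{Y}(j)\}^2=(N-1)S(j,j)=O(N)$, so $S_b^2=o(N)$, whence $S_b^2/N_j=o(N)/O(N)\to0$. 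This is precisely where Assumption \ref{AsuA} is needed; under the stronger Assumption \ref{AsuB} the conclusion is immediate because $S_b^2$ is then bounded outright.
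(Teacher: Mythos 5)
Your proof is correct. The paper itself does not supply an argument for this proposition: it simply defers to Theorem 1 and Proposition 1 of \citet{DingCLT}. You instead give a self-contained, elementary derivation, and every step checks out: the exact finite-population formulas $\bb{E}\{\hat{\bar{Y}}(j)\}=\bar{Y}(j)$ and $\Var\{\hat{\bar{Y}}(j)\}=\{S(j,j)/N_j\}(1-N_j/N)$ combined with Chebyshev handle the means; the decomposition of $\hat{S}(j,j)$ into the sample mean of $b_i=\{Y_i(j)-\bar{Y}(j)\}^2$ plus a correction term $-\{N_j/(N_j-1)\}\{\hat{\bar{Y}}(j)-\bar{Y}(j)\}^2$ that vanishes in probability is the standard one; and your key device, bounding the population variance of the $b_i$ via $\sum_i b_i^2\leq\big(\max_i\{Y_i(j)-\bar{Y}(j)\}^2\big)\sum_i\{Y_i(j)-\bar{Y}(j)\}^2=o(N)\cdot O(N)$, is exactly what is needed to make the argument go through under the Lindeberg-type condition of Assumption \ref{AsuA} rather than the bounded-fourth-moment condition of Assumption \ref{AsuB}. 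This is essentially the same mechanism used in the cited reference, so your write-up buys transparency and self-containedness at no cost in generality; the paper's citation buys brevity and access to the companion central limit theorem (Proposition \ref{t:Ding5}) from the same source. One small remark: your Chebyshev step for the variances shows the sample mean of the $b_i$ concentrates around $\{(N-1)/N\}S(j,j)$, which differs from $S(j,j)$ by $O(1/N)$, so the conclusion $\hat{S}(j,j)-S(j,j)\inP 0$ follows; it is worth stating that last triangle-inequality step explicitly, but it is not a gap.
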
 

\begin{prop}\label{t:Ding5}
Under Assumption \ref{AsuA}, $N^{1/2}( \hat{\bar{Y}}- \bar{Y}) \ind \cl{N}(0_J,V)$, where
\begin{align}\label{e:ObsCov}
V= \lmt{N}N \cdot \Cov( \hat{\bar{Y}})= \lmt{N}\begin{pmatrix}
\frac{N-N_1}{N_1}S(1,1) & -S(1,2) & \cdots & -S(1,J) \\
-S(2,1) & \frac{N-N_2}{N_2}S(2,2) & \cdots & -S(2,J) \\
\vdots & \vdots & \ddots & \vdots \\
-S(J,1) & -S(J,2) & \cdots& \frac{N-N_J}{N_J}S(J,J) \end{pmatrix}.
\end{align}
\end{prop}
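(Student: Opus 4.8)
The plan is to combine an exact finite-sample covariance calculation with the finite-population central limit theorem of \cite{DingCLT}. First I would note that, because $Y_i^\T{obs}=Y_i(j)$ whenever $W_i(j)=1$, the estimator $\hat{\bar{Y}}(j)=N_j^{-1}\sum_{i=1}^N W_i(j)Y_i(j)$ is exactly the without-replacement sample mean of the finite population $\{Y_i(j):i=\ot{N}\}$ over the $N_j$ units that the CRE assigns to arm $j$. Writing $\hat{\bar{Y}}(j)-\bar{Y}(j)=N_j^{-1}\sum_i W_i(j)\{Y_i(j)-\bar{Y}(j)\}$ and using the assignment moments $\Var(W_i(j))=N_j(N-N_j)/N^2$, $\Cov(W_i(j),W_{i'}(j))=-N_j(N-N_j)/\{N^2(N-1)\}$ for $i\neq i'$, together with $W_i(j)W_i(k)=0$ for $j\neq k$, gives the exact diagonal entries $\frac{N-N_j}{N_j}S(j,j)$ and off-diagonal entries $-S(j,k)$ of $N\cdot\Cov(\hat{\bar{Y}})$. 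This reproduces the matrix inside the limit in \eqref{e:ObsCov}; letting $N\to\infty$ and invoking $N_j/N\to p_j\in(0,1)$ and the convergence of $(S_N)$ from Assumption \ref{AsuA}, it converges entrywise to $V$.

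For the distributional statement I would appeal to the finite-population CLT of \cite{DingCLT}. Since the exact mean and covariance of $\hat{\bar{Y}}$ are already pinned down, the only remaining task is to verify the Lindeberg-type negligibility condition that theorem requires. This is precisely the content of the last sentence of Assumption \ref{AsuA}, namely $\max_{j}\max_{i}\{Y_i(j)-\bar{Y}(j)\}^2/N\to 0$, which prevents any single unit's centered potential outcome from dominating the sum. With this condition verified, \cite{DingCLT} yields the joint limit $N^{1/2}(\hat{\bar{Y}}-\bar{Y})\ind\cl{N}(0_J,V)$. Alternatively, one can reduce to a scalar statement by the Cram\'er--Wold device, applying the one-dimensional version of the same CLT to $a^\transp N^{1/2}(\hat{\bar{Y}}-\bar{Y})$ for each fixed $a\in\bb{R}^J$ and recombining.

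The main obstacle is the dependence across the $J$ arms: the sample means $\hat{\bar{Y}}(1),\ldots,\hat{\bar{Y}}(J)$ are all determined by a single random partition of the units, so they are not independent and a plain arm-by-arm sample-mean CLT does not apply. This coupling is exactly what the general theorem of \cite{DingCLT} is designed to accommodate, and it is also what makes the off-diagonal $-S(j,k)$ terms appear. If one takes the Cram\'er--Wold route, care is needed when $a^\transp V a=0$: there the limit is a point mass at $0$, and convergence to this degenerate normal must be argued separately, e.g.\ by Chebyshev's inequality since the variance of $a^\transp N^{1/2}(\hat{\bar{Y}}-\bar{Y})$ then tends to $0$, so that the device still assembles the stated multivariate limit. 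Matching the scalar limiting variances to the quadratic form $a^\transp V a$ is routine given the exact covariance above.
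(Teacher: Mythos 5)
Your proposal is correct and follows the same route as the paper, which simply invokes Theorem 5 of \citet{DingCLT}; your additional work (the exact Neyman covariance computation, the check that the last condition of Assumption \ref{AsuA} together with $N_j/N\to p_j>0$ and $S(j,j)\to S_\infty(j,j)>0$ delivers the Lindeberg-type condition of that theorem, and the Cram\'er--Wold reduction) just spells out what that citation packages. No gaps.
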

The limiting distribution in Proposition \ref{t:Ding5} depends on unknown quantities. We need to estimate $N \cdot \Cov( \hat{\bar{Y}})$. This covariance, however, depends on $S(j,k)$ ($j \neq k$), which do not have unbiased estimators in general. Prompted by \cite{Neyman23}, we estimate the main diagonal by:
\[ \hat{D}=N \cdot \diag\left\{ \hat{S}(1,1)/N_1, \cdots, \hat{S}(J,J)/N_J \right\} \succ 0. \]
Proposition \ref{p:inP} implies
\begin{equation}\label{e:Dob}
\hat{D}\inP D= \diag\left\{ S(1,1)/p_1, \cdots,S(J,J)/p_J \right\} \succ 0.
\end{equation}
Because $V=D-S \preceq D$, the estimator $\hat{D}$ is asymptotically conservative for $N \cdot \Cov( \hat{\bar{Y}})$ in the sense that $\lmt{N}N \cdot \Cov( \hat{\bar{Y}}) \preceq \plim_{N \to \infty}\hat{D}$. We will encounter this notion time after time. \cite{AronowVar14} brings up tight bounds for covariance estimation in treatment-control randomized experiments with $J=2$. Their results suggest that we can further improve the estimator $\hat{D}$. Nevertheless, we will show that $\hat{D}$ suffices for our goal of testing \eqref{e:HNx} with FRTs. 

\section{Test Statistics}\label{s:TestStat}
We return to our main endeavor: whether the FRT with a test statistic $T$ can control type I error when testing $H_{0 \T{N}}(C,x)$. The next proposition demarcates precisely what kind of $T$ can accomplish this goal.
\begin{prop}\label{p:suitable}
Consider testing $H_{0 \T{N}}(C,x)$. The FRT with test statistic $T$ controls type I error at any level if under $H_{0 \T{N}}(C,x)$, the sampling distribution of $T$ is stochastically dominated by its randomization distribution, that is, $T \leq_\T{st}T_\pi |W$.
\end{prop}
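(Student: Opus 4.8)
The plan is to turn the level-by-level type I error statement into a single distributional comparison and then match it to stochastic dominance. First I would restate ``controls type I error at every level $\alpha$'' as super-uniformity of the $p$-value $p=(N!)^{-1}\sum_{\pi\in\Pi_N}1(T_\pi\geq T)$, namely $\mathbb{P}(p\leq\alpha)\leq\alpha$ for all $\alpha$, i.e. $p$ is stochastically larger than a $\mathrm{Unif}(0,1)$ variable. Writing $\hat G(t)=\mathbb{P}_\pi(T_\pi\geq t\mid W)$ for the conditional upper tail of the randomization distribution, one has $p=\hat G(T)$, where $\hat G$ is non-increasing and left-continuous in $t$. It is convenient to view $T_\pi$ as a fresh draw from the randomization distribution given $W$, so that its marginal law over $W\sim\T{CRE}$ and $\pi\sim\T{Unif}(\Pi_N)$ is precisely the randomization distribution in the statement; I denote the sampling and marginal randomization CDFs by $F$ and $\tilde F$, so that $T\leq_\T{st}T_\pi|W$ reads $F\geq\tilde F$ pointwise.

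For the ``if'' direction I would begin from the benchmark fact that evaluating the upper-tail function at a point drawn from that same distribution is super-uniform, $\mathbb{P}(\hat G(T_\pi)\leq\alpha\mid W)\leq\alpha$, which is the probability integral transform for the upper tail and, crucially, stays a one-sided inequality in the presence of atoms. The work is then to transfer this from a randomization draw to the observed $T$. Since $\hat G$ is non-increasing, the event $\{p\leq\alpha\}=\{\hat G(T)\leq\alpha\}$ is controlled by whether $T$ clears the critical value $a_\alpha(W)=\inf\{t:\hat G(t)\leq\alpha\}$, and the dominance $F\geq\tilde F$ should place $T$ below this level at least as often as a randomization draw, keeping $p$ conservative.

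For the ``only if'' direction I would argue the contrapositive: if $F(t_0)<\tilde F(t_0)$ for some $t_0$, then under the sampling distribution $T$ exceeds $t_0$ strictly more often than a randomization draw does, so the statistic sits too high relative to its own randomization benchmark; tracking this excess through $p=\hat G(T)$ yields $p$-values small enough to violate $\mathbb{P}(p\leq\alpha)\leq\alpha$ at a level $\alpha$ tied to the randomization tail mass at $t_0$, contradicting type I error control.

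The hard part will be that the randomization distribution $T_\pi\mid W$ is itself a function of the same $W$ that determines $T$, so $T$ and its reference distribution are dependent; the comparison in both directions lives at a random, $W$-dependent critical value $a_\alpha(W)$ rather than at a fixed threshold, and a naive fixed-threshold application of $F\geq\tilde F$ does not close the argument. Compounding this, the randomization distribution is discrete with at most $N!$ atoms, and the mass sitting exactly at $a_\alpha(W)$ — the source of the gap between $\geq$ and $>$ — must be accounted for precisely. A tempting shortcut, building a monotone coupling of $T$ and $T_\pi$ from $F\geq\tilde F$ via Strassen's theorem, fails to respect the conditioning on $W$, which is exactly the difficulty. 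I expect to resolve both issues by exploiting the structure of FRT-\ref{step::impute} and FRT-\ref{step::permute}: conditional on the imputed, strictly additive Science Table, the randomization distribution is the sampling distribution of $T$ under a fresh CRE assignment while $T$ is its value at the realized assignment, and $T=T_{\mathrm{id}}$ always lies in the support of its own randomization distribution.
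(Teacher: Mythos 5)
There is a genuine gap, and it sits exactly where you flag the ``hard part.'' You read $T \leq_\T{st} T_\pi|W$ as a comparison between the sampling CDF $F$ of $T$ and the CDF $\tilde F$ of $T_\pi$ \emph{marginalized} over $W$ and $\pi$. That is not the condition the proposition needs: the dominance must hold \emph{conditionally}, i.e.\ $F(t)\geq F_W(t)$ for every $t$ and every realization of $W$, where $F_W(t)=\bb{P}(T_\pi\leq t\mid W)$. The marginal version is logically weaker (it is the $W$-average of the conditional inequality), and with only $F\geq\tilde F$ your ``if'' direction cannot close: as you yourself observe, the event $\{p\leq\alpha\}$ is governed by the $W$-dependent critical value $a_\alpha(W)$, and a fixed-threshold inequality against an averaged reference law says nothing about how often $T$ clears a threshold that is itself a function of the same $W$ that produced $T$. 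The structural facts you invoke to rescue this (strict additivity of the imputed table, $T=T_{\mathrm{id}}$ lying in the support of its own randomization distribution) do not supply the missing step. The ``only if'' direction has the mirror-image defect: negating $F\geq\tilde F$ is a stronger hypothesis than negating the conditional dominance, so you would be establishing necessity of a weaker condition than the one in the statement.

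The repair is to keep the conditioning, after which no critical value, coupling, or appeal to the FRT's internal structure is needed. Write $G(t)=\bb{P}(T<t)$ and $G_W(t)=\bb{P}(T_\pi<t\mid W)$; the $p$-value is exactly $1-G_W(T)$, and the conditional dominance is equivalent to the \emph{pointwise functional} inequality $G_W\leq G$ for every $W$, so $p=1-G_W(T)\geq 1-G(T)$ holds surely, sample point by sample point --- this is what dissolves the dependence between $T$ and its own reference distribution. Then $\bb{P}(p\leq\alpha)\leq\bb{P}\{G(T)\geq 1-\alpha\}\leq\alpha$ by the one-sided, atom-tolerant probability integral transform $G(T)\leq_\T{st}\Unif(0,1)$ --- the super-uniformity fact you already cite, applied to $T$ itself rather than to a fresh randomization draw. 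For necessity, if $F(x)<G_W(x)$ at some $x$ for some $W$, any $\alpha$ strictly between $1-G_W(x)$ and $1-F(x)$ witnesses $\bb{P}(p\leq\alpha)\geq 1-F(x)>\alpha$. This is the paper's route; your critical-value plan becomes workable only after you replace $\tilde F$ by $F_W$, at which point it collapses into the same argument.
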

To test $H_{0 \T{N}}(C,x)$, we use a test statistic $T$, but look upon its randomization distribution $T_\pi |W$ as the reference null distribution. The $p$-value in FRT-\ref{step::pv} is the probability that $T_\pi |W$ is at least the observed value of $T$.
If $T \leq_\T{st}T_\pi |W$, then any quantile of the asymptotic distribution of $T_\pi | W$ is at least that of $T$. Consequently, we have conservative tests at any level.

It is quite burdensome to ensure a meaningful test statistic satisfies the criterion of Proposition \ref{p:suitable}. 
For a candidate statistic $T$, we instead settle for ascertaining whether its randomization distribution stochastically dominates its sampling distribution asymptotically under $H_{0 \T{N}}(C,x)$ for almost all sequences of $W$. Henceforth, we call $T$ \emph{proper} if so.

\subsection{Studentized statistic}
We advocate using the following studentized statistic in the FRT:
\begin{equation}\label{e:X2}
X^2=N(C \hat{\bar{Y}}-x)^\transp (C \hat{D}C^\transp )^{-1}(C \hat{\bar{Y}}-x).
\end{equation}
It is a Wald-type statistic that has a conservative covariance estimator $C \hat{D}C^\transp$ for $N^{1/2}(C \hat{\bar{Y}}-x)$.

Studentized statistics have appeared alongside permutation tests when the outcomes are independent samples. \cite{Romano90} was aware of the problem of test statistics that were not studentized in two-sample tests.  
For \cite{Janssen97}, studentization was an avenue in the Behrens--Fisher problem to control the type I error. \cite{Romano13} studied the same phenomenon when the parameter being cared about could be more general than the mean.
\cite{Pauly15} and \cite{Kon15} embraced an equivalent studentized statistic in general factorial experiments with independent samples. In the aforementioned settings, studentization works because the test statistic is asymptotically pivotal.

As for us, $X^2$ is itself not asymptotically pivotal. Rather, it is stochastically dominated by a pivotal distribution. This is a key reason it is exactly the statistic we seek based on Proposition \ref{p:suitable}. We now formally state our main result that $X^2$ is proper.
\begin{thm}\label{t:CYx}
If Assumption \ref{AsuA} holds, then under $H_{0 \T{N}}(C,x)$, $X^2 \ind \sum_{j=1}^ma_j \xi_j^2$, where each $a_j \in [0,1]$. If Assumption \ref{AsuB} holds, and $\pi \sim \Unif( \Pi_N)$, then $X_\pi^2|W \ind \chi_m^2$ a.s.
\end{thm}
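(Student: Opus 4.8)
The plan is to treat the two convergences separately, obtaining each from the finite-population CLT (Proposition \ref{t:Ding5}), the consistency of $\hat D$ (Proposition \ref{p:inP}), Slutsky's theorem, and a short eigenvalue computation. For the sampling distribution I would first use $C\bar Y = x$ under $H_{0 \T{N}}(C,x)$ to write $C\hat{\bar Y} - x = C(\hat{\bar Y} - \bar Y)$, so that Proposition \ref{t:Ding5} gives $N^{1/2}(C\hat{\bar Y} - x) \ind \cl{N}(0_m, CVC^\transp)$. Since $D \succ 0$ by \eqref{e:Dob} and $C$ has full row rank, $C\hat D C^\transp \inP CDC^\transp \succ 0$, and Slutsky's theorem then yields $X^2 \ind Z^\transp (CDC^\transp)^{-1} Z$ with $Z \sim \cl{N}(0_m, CVC^\transp)$. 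Writing $G = CDC^\transp$ and $H = CVC^\transp$, I would express $Z^\transp G^{-1}Z = \|G^{-1/2}Z\|^2$ with $G^{-1/2}Z \sim \cl{N}(0_m, G^{-1/2}HG^{-1/2})$; diagonalizing this covariance gives $Z^\transp G^{-1}Z \eqd \sum_{j=1}^m a_j\xi_j^2$, where the $a_j$ are the eigenvalues of $G^{-1/2}HG^{-1/2}$. Because $V = D - S$ with $S \succeq 0$, we have $0 \preceq H \preceq G$, hence $0 \preceq G^{-1/2}HG^{-1/2} \preceq I_m$, so every $a_j \in [0,1]$, as claimed.

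For the randomization distribution, the organizing fact is that the imputed Science Table is strictly additive. From FRT-\ref{step::impute}, $Y_i^*(j) = Y_i^\T{obs} + z_j - z_{W_i}$, so the centered outcome $Y_i^*(j) - \bar Y^*(j) = (Y_i^\T{obs} - \bar Y^\T{obs}) - (z_{W_i} - \bar z_W)$ does not depend on $j$, where $\bar Y^\T{obs} = N^{-1}\sum_i Y_i^\T{obs}$ and $\bar z_W = N^{-1}\sum_i z_{W_i}$. Consequently the imputed covariance matrix is $S^* = s^* 1_J 1_J^\transp$ for a common scalar $s^*$, and $C\bar Y^* = x$ exactly because $C1_J = 0_m$. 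Now $X_\pi^2 \mid W$ is nothing but the sampling distribution of $X^2$ for a CRE run on the fixed imputed population, so I would rerun the first-part argument on this population. The decisive simplification is $CS^*C^\transp = s^*(C1_J)(C1_J)^\transp = 0$, which forces $CV^*C^\transp = CD^*C^\transp$, i.e.\ $H = G$ for the imputed population; all eigenvalues then equal $1$ and the limit collapses to $\sum_{j=1}^m \xi_j^2 = \chi_m^2$.

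The hard part will be justifying that the first-part machinery applies conditionally on $W$ for almost every sequence $(W)$, which is exactly where Assumption \ref{AsuB} enters. I would verify each clause of Assumption \ref{AsuA} for the imputed population along the sequence: $N_j/N \to p_j$ is deterministic; $\bar Y^*(j) = \bar Y^\T{obs} + z_j - \bar z_W$ converges almost surely by a strong law for the observed mean under the CRE; the common variance $s^* = (N-1)^{-1}\sum_{i=1}^N\{(Y_i^\T{obs} - \bar Y^\T{obs}) - (z_{W_i} - \bar z_W)\}^2$ converges almost surely to a finite positive limit; and the negligibility condition $\max_{j}\max_i\{Y_i^*(j) - \bar Y^*(j)\}^2/N \to 0$ holds almost surely. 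The fourth-moment bound of Assumption \ref{AsuB} is precisely what drives the strong laws controlling these sample second moments and the maximal term under the randomness of $W$. This almost-sure bookkeeping, rather than the distributional identification, is the technical crux; once Assumption \ref{AsuA} is confirmed almost surely for the imputed population, Propositions \ref{t:Ding5} and \ref{p:inP} apply conditionally, and the additive structure $H = G$ yields $X_\pi^2 \mid W \ind \chi_m^2$ almost surely.
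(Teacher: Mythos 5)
Your proposal is correct and follows essentially the same route as the paper's proof: the finite-population CLT plus $CDC^\transp \succeq CVC^\transp$ for the sampling limit, and strict additivity of the imputed Science Table (so that $CS^*C^\transp = 0$ makes the randomization limit pivotal $\chi_m^2$), with the almost-sure verification of the regularity conditions for the imputed population being the technical crux driven by the fourth-moment bound. The only cosmetic differences are that the paper standardizes the imputed outcomes by $(s^*)^{1/2}$ (so it needs only a lower bound on $s^*$ rather than its convergence) and packages your inline eigenvalue computation as a lemma.
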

Immediate from this theorem is that the FRT using $X^2$ controls the asymptotic type I error under $H_{0 \T{N}}(C,x)$. This test also retains finite sample exactness under the sharp null hypothesis \eqref{e:HFx}. As a result, it is robust for inference on two classes of null hypotheses.

Asymptotically, under $H_{0 \T{N}}(C,x)$, neither the sampling nor randomization distribution of $X^2$ depends on $\tilde{C}$ or $\tilde{x}$, so the choice of $\tilde{x}$ does not matter.
The randomization distribution also does not depend on $H_{0 \T{N}}(C,x)$. A violation of $H_{0 \T{N}}(C,x)$ is likely to inflate the value of $X^2$ but not the values of $X_\pi^2|W$. An appealing consequence of this fact is that the FRT using $X^2$ has power.

Echoing \cite{Romano13} and \cite{Pauly15}, one purpose of studentization for us is to control type I error. Yet, for us, the FRT using $X^2$ is asymptotically conservative, while the corresponding test in an independent samples setting is asymptotically exact.
This stems from our potential outcomes framework: $\{ \hat{\bar{Y}}(1), \ldots, \hat{\bar{Y}}(J) \}$ do not have vanishing correlations, even asymptotically.

Theorem \ref{t:CYx} inspires another asymptotically conservative test besides the FRT. We can reject $H_{0 \T{N}}(C,x)$ if the observed value of $X^2$ exceeds the $1- \alpha$ quantile of $\chi_m^2$.
We call this alternative to the FRT the $\chi^2$ approximation. This is computationally efficient without Monte Carlo. The FRT has an additional property.
It is concurrently finite-sample exact for the sharp null hypothesis \eqref{e:HFx}. Our simulations and practical data examples compare these two classes of tests empirically.

\subsection{Box-Type Statistic}
We now steer toward an alternative statistic, one found in \cite{Brunner97}:
\begin{equation}\label{e:BoxStat}
B=N \hat{\bar{Y}}^\transp M \hat{\bar{Y}}/ \tr(M \hat{D}),
\end{equation}
where $M=C^\transp (CC^\transp )^{-1}C$ is the projection matrix onto the row space of $C$. Because we will deem it as not proper in our context, we can restrict the discussion to $x=0_m$.

Under independent sampling, \cite{Brunner97} approximated the asymptotic behavior of $B$ by an $F$ distribution through ideas from \cite{Box54}, and called it a Box-type statistic. Their simulations found it to enjoy superior empirical small sample properties under their framework.

For our problem, the next result states the behavior of $B$. Recall $V$ in \eqref{e:ObsCov} and define $P= \diag(p_1,\ldots,p_J)$.
\begin{thm}\label{t:Brunner}
If Assumption \ref{AsuA} holds, then under $H_{0 \T{N}}(C,0_m)$, $B \ind \sum_{j=1}^m \lambda_j(MV) \xi_j^2/ \tr(MD)$. If Assumption \ref{AsuB} holds and $\pi \sim \Unif( \Pi_N)$, then $B_\pi |W \ind \sum_{j=1}^m \lambda_j(MP^{-1}) \xi_j^2/ \tr(MP^{-1})$ a.s.
\end{thm}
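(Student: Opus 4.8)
The plan is to prove both convergence statements by a common template: express the numerator $N\hat{\bar{Y}}^\transp M \hat{\bar{Y}}$ as a quadratic form in a centered, $N^{1/2}$-scaled mean vector that obeys the finite-population central limit theorem of Proposition \ref{t:Ding5}, identify the limiting quadratic form as a weighted sum of independent $\chi^2_1$ variables through the eigenvalues of an appropriate product matrix, show the denominator $\tr(M\hat{D})$ converges in probability to a constant via Proposition \ref{p:inP}, and conclude with the continuous mapping theorem and Slutsky's theorem. For the sampling distribution, note that under $H_{0 \T{N}}(C,0_m)$ we have $C\bar{Y}=0_m$, so $M\bar{Y}=C^\transp (CC^\transp )^{-1}C\bar{Y}=0_J$. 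Since $M=M^\transp$ with $M\bar{Y}=0_J$, the cross terms vanish and $\hat{\bar{Y}}^\transp M \hat{\bar{Y}}=(\hat{\bar{Y}}-\bar{Y})^\transp M (\hat{\bar{Y}}-\bar{Y})$. Writing $Z_N=N^{1/2}(\hat{\bar{Y}}-\bar{Y}) \ind Z \sim \cl{N}(0_J,V)$, the continuous mapping theorem gives $N\hat{\bar{Y}}^\transp M \hat{\bar{Y}}=Z_N^\transp M Z_N \ind Z^\transp M Z$. Representing $Z=V^{1/2}g$ with $g \sim \cl{N}(0_J,I_J)$ shows $Z^\transp M Z \eqd g^\transp V^{1/2}MV^{1/2}g$, and since $V^{1/2}MV^{1/2}$ shares its nonnegative eigenvalues with $MV$—of which at most $m=\T{rank}(M)$ are nonzero—we obtain $Z^\transp M Z \eqd \sum_{j=1}^m \lambda_j(MV)\xi_j^2$. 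Combined with $\tr(M\hat{D}) \inP \tr(MD)>0$ from \eqref{e:Dob} and Proposition \ref{p:inP}, Slutsky's theorem yields the first claim.

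For the randomization distribution, I would condition on $W$ and treat the imputed table $\{Y_i^*(j)\}$ from FRT-\ref{step::impute} as the fixed finite population for a fresh CRE driven by $\pi$. The key structural fact is strict additivity: setting $u_i=Y_i^\T{obs}-z_{W_i}$ gives $Y_i^*(j)=u_i+z_j$, so every centered imputed outcome $Y_i^*(j)-\bar{Y}^*(j)=u_i-\bar{u}$ is free of $j$. Hence the finite-population covariance of the imputed table is $S^*=\sigma^2_* 1_J 1_J^\transp$ with $\sigma^2_*=(N-1)^{-1}\sum_i (u_i-\bar{u})^2$, and by \eqref{e:ObsCov} its limiting design covariance is $V^*=\sigma^2_{*,\infty}(P^{-1}-1_J 1_J^\transp )$. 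Because $x=0_m$ forces $Cz=0_m$ and therefore $M\bar{Y}^*=Mz=0_J$, the same projection identity applies, giving $N\hat{\bar{Y}}_\pi^\transp M \hat{\bar{Y}}_\pi=N(\hat{\bar{Y}}_\pi-\bar{Y}^*)^\transp M(\hat{\bar{Y}}_\pi-\bar{Y}^*)$. Applying Proposition \ref{t:Ding5} to the imputed table produces a $\cl{N}(0_J,V^*)$ limit whose quadratic form has eigenvalues drawn from $MV^*=\sigma^2_{*,\infty}MP^{-1}$ (using $M1_J=0_J$), so the numerator converges to $\sigma^2_{*,\infty}\sum_{j=1}^m \lambda_j(MP^{-1})\xi_j^2$. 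For the denominator, Proposition \ref{p:inP} applied to the imputed table gives $\hat{D}_\pi \inP \sigma^2_{*,\infty}P^{-1}$, hence $\tr(M\hat{D}_\pi) \inP \sigma^2_{*,\infty}\tr(MP^{-1})$. The nuisance scale $\sigma^2_{*,\infty}$ cancels in the ratio, leaving the claimed limit $\sum_{j=1}^m \lambda_j(MP^{-1})\xi_j^2/\tr(MP^{-1})$, which notably no longer depends on the potential outcomes.

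The main obstacle is the word ``a.s.'' and the legitimacy of invoking Propositions \ref{t:Ding5} and \ref{p:inP} for the randomization distribution: those propositions require the finite population—here the \emph{random} imputed table—to satisfy Assumption \ref{AsuA}, and this must hold for almost every realized sequence of assignments $W$. I would verify that, under Assumption \ref{AsuB}, the data-dependent quantities $\bar{u}$, $\sigma^2_*$, and the Lindeberg-type maximum $\max_i (u_i-\bar{u})^2/N$ converge to their requisite limits for almost all $W$, establishing in particular that $\sigma^2_{*,\infty}$ exists, is finite, and is strictly positive so the denominator limit is nondegenerate. The fourth-moment bound in Assumption \ref{AsuB} is precisely what controls the second moments entering $\sigma^2_*$ and the negligibility of the maximum term, through Markov's inequality and the Borel--Cantelli lemma applied along the sequence in $N$. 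This almost-sure bookkeeping is essentially identical to that already needed for the randomization distribution of $X^2$ in Theorem \ref{t:CYx}, so I would isolate it as a lemma and reuse it, leaving only the projection-and-trace algebra and the eigenvalue identification as genuinely new to this theorem.
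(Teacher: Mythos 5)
Your proposal is correct and follows essentially the same route as the paper's proof: the finite-population CLT (Proposition \ref{t:Ding5}) plus the quadratic-form eigenvalue representation for the numerator, convergence in probability of $\tr(M\hat{D})$ for the denominator, and, for the randomization distribution, verification that the strictly additive imputed table satisfies Assumption \ref{AsuA} almost surely so that its rank-one covariance structure makes the limit pivotal. The only cosmetic differences are that the paper reduces to the $m$-dimensional vector $C\hat{\bar{Y}}$ and applies its Lemma A1(iii) in one step rather than using the projection identity $M\bar{Y}=0_J$ in $\bb{R}^J$, and that it standardizes the imputed outcomes by $(s^*)^{1/2}$ (needing only that $s^*$ stays bounded away from zero) instead of establishing almost-sure convergence of $s^*$ to a positive limit that cancels in the ratio.
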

The asymptotic mean of $B$ is $\sum_{j=1}^m \lambda_j(MV)/ \tr(MD) \leq 1$ because $V \preceq D$, and the asymptotic mean of $B_\pi |W$ is $\sum_{j=1}^m \lambda_j(MP^{-1})/ \tr(MP^{-1})=1$.
Therefore, the former mean does not exceed the latter. This is necessary but not sufficient for the stochastic dominance criterion of Proposition \ref{p:suitable}, which does not hold.
Hence, the FRT with the Box-type statistic cannot control type I error in general, even asymptotically. This is the subject of a later simulation.

There are two situations where $B$ is proper: equal variances, and testing a one-dimensional hypothesis.
\begin{cor}\label{c:Brunner}
Under Assumption \ref{AsuB}, if $S(1,1)= \cdots=S(J,J)$, then $B$ meets the criterion of Proposition \ref{p:suitable} asymptotically. If $C$ is a row vector, then $B=X^2$.
\end{cor}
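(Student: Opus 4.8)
The plan is to dispatch the two claims separately, leaning on Theorem~\ref{t:Brunner} for the first and on a rank-one computation for the second. For the first claim I would start from the two limits supplied by Theorem~\ref{t:Brunner}: under Assumption~\ref{AsuB} and $H_{0\T{N}}(C,0_m)$, the sampling distribution of $B$ converges to $\sum_{j=1}^m \lambda_j(MV)\xi_j^2/\tr(MD)$, while the randomization distribution converges a.s. to $\sum_{j=1}^m \lambda_j(MP^{-1})\xi_j^2/\tr(MP^{-1})$. The equal-variance hypothesis $S(1,1)=\cdots=S(J,J)=:\sigma^2$ gives $D=\sigma^2 P^{-1}$, so $MD=\sigma^2 MP^{-1}$ and therefore $\lambda_j(MP^{-1})/\tr(MP^{-1})=\lambda_j(MD)/\tr(MD)$ for every $j$. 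The two limiting laws then share the common normalization $\tr(MD)$, and it remains only to show $\lambda_j(MV)\le\lambda_j(MD)$ for all $j=\ot{m}$. This is precisely where the equal-variance assumption earns its keep: it forces $MD\propto MP^{-1}$ and thereby aligns the two sets of weights, an alignment that fails under unequal variances and is exactly why $B$ is not proper in general.

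To compare the eigenvalues I would write $M=UU^\transp$ with $U\in\mathbb{R}^{J\times m}$ having orthonormal columns spanning the row space of $C$. For any symmetric positive semidefinite $A$, the product $MA$ has real nonnegative eigenvalues whose $m$ largest coincide with those of the compression $U^\transp A U$, since $MA=U(U^\transp A)$ shares its nonzero eigenvalues with $(U^\transp A)U=U^\transp A U$. Hence $\lambda_j(MV)=\lambda_j(U^\transp V U)$ and $\lambda_j(MD)=\lambda_j(U^\transp D U)$ for $j=\ot{m}$. Because $V=D-S$ with $S\succeq 0$, we have $V\preceq D$, so $U^\transp V U\preceq U^\transp D U$, and the min--max characterization (Weyl monotonicity under the positive semidefinite order) yields $\lambda_j(U^\transp V U)\le\lambda_j(U^\transp D U)$. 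This delivers the term-by-term domination $\lambda_j(MV)/\tr(MD)\le\lambda_j(MD)/\tr(MD)=\lambda_j(MP^{-1})/\tr(MP^{-1})$. I would then convert weight domination into stochastic dominance by coupling: pairing, for each $j$, the $j$-th largest weight of each sum with the same $\xi_j^2\ge 0$ gives a pointwise inequality, and since permuting the i.i.d. $\xi_j^2$ leaves each weighted-sum law unchanged, the sampling limit is stochastically dominated by the randomization limit. That is exactly the criterion of Proposition~\ref{p:suitable} asymptotically, so $B$ is proper.

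For the second claim, with $C=c^\transp$ a row vector ($m=1$) I would substitute $M=cc^\transp/(c^\transp c)$ into $B$. A direct computation gives $N\hat{\bar{Y}}^\transp M\hat{\bar{Y}}=N(c^\transp\hat{\bar{Y}})^2/(c^\transp c)$ and $\tr(M\hat{D})=(c^\transp\hat{D}c)/(c^\transp c)$, so the factor $c^\transp c$ cancels and $B=N(c^\transp\hat{\bar{Y}})^2/(c^\transp\hat{D}c)$. At $x=0_m$ the quantities $C\hat{\bar{Y}}$ and $C\hat{D}C^\transp$ are scalars, whence $X^2=N(c^\transp\hat{\bar{Y}})^2/(c^\transp\hat{D}c)=B$.

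The hard part will be the eigenvalue comparison together with its passage to stochastic dominance: one must justify reducing $\lambda_j(MV)$ and $\lambda_j(MD)$ to eigenvalues of the $m\times m$ compressions so that Weyl's inequality applies cleanly, and then recognize that what transfers to the weighted-$\chi^2$ limits, via coupling, is term-by-term domination of the \emph{ordered} eigenvalue weights rather than any ordering of the matrices themselves. The equal-variance reduction $D=\sigma^2P^{-1}$ is the structural fact that makes the two normalizing traces coincide, and everything else is routine.
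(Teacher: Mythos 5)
Your proposal is correct and follows essentially the same route as the paper: invoke Theorem \ref{t:Brunner}, use $D=S(1,1)P^{-1}$ under equal variances to identify the two normalizations, establish $\lambda_j(MV)\le\lambda_j(MD)$ from $V\preceq D$, and conclude stochastic dominance of the weighted-$\chi^2$ limits; the rank-one computation for $C$ a row vector is likewise identical. The only difference is that you spell out the eigenvalue-domination step (via the compression $U^\transp AU$ and Weyl monotonicity) and the coupling argument, which the paper simply asserts.
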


\subsection{Statistics from Ordinary Least Squares}
Ordinary least squares (OLS) tools are widespread in the analysis of experimental data \citep[e.g.,][]{morris2010design}. We insert $J$-treatment randomized experiments into the realm of linear models.
We do this by encoding the treatments with dummy variables in the design matrix $\cl{X}= \diag(1_{N_1}, \ldots,1_{N_J})$. The response vector consists of the corresponding observed outcomes from treatment groups $\ot{J}$.  
The OLS coefficients are the entries of $\hat{\bar{Y}}$, which has estimated covariance matrix $\hat{\sigma}^2( \cl{X}^\transp \cl{X})^{-1}$, where $\hat{\sigma}^2=(N-J)^{-1}\sum_{i=1}^N\sum_{j=1}^JW_i(j) \{ Y_i^\T{obs}- \hat{\bar{Y}}(j) \}^2$ is the mean residual sum of squares. The classical $F$ statistic for testing \eqref{e:HNx} is then
\begin{align}\label{e:Fstat}
F=(C \hat{\bar{Y}})^\transp \{ \hat{\sigma}^2C( \cl{X}^\transp \cl{X})^{-1}C^\transp \}^{-1}C \hat{\bar{Y}}/m.
\end{align}
We do not stipulate the usual assumptions of linear regression, but just want a test statistic for the FRT.

We first record a peculiar situation where $F$ is identical to the Box-type statistic $B$. This result will be valuable for our simulations and practical data examples.
\begin{prop}\label{p:BeqF}
$B=F$ if $N_1= \cdots=N_J$ and $M=C^\transp (CC^\transp )^{-1}C$ has the same entries along its main diagonal.
\end{prop}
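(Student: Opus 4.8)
\textbf{The plan} is a direct substitution argument: I would exploit the equal-group-size hypothesis to simplify the three ingredients $\cl{X}^\transp \cl{X}$, $\hat{\sigma}^2$, and $\hat{D}$, exploit the equal-diagonal hypothesis on $M$ to simplify the scalar $\tr(M \hat{D})$, and then show that both $B$ and $F$ collapse to one common expression. Writing $n = N_1 = \cdots = N_J = N/J$, I would first observe that $\cl{X}^\transp \cl{X} = \diag(N_1, \ldots, N_J) = n I_J$, so $(\cl{X}^\transp \cl{X})^{-1} = n^{-1} I_J$. The slightly less obvious ingredient is $\hat{\sigma}^2$: using the defining identity $\sum_{i=1}^N W_i(j) \{ Y_i^\T{obs} - \hat{\bar{Y}}(j) \}^2 = (N_j - 1) \hat{S}(j,j)$, the equal sizes give
\[
\hat{\sigma}^2 = \frac{1}{N-J} \sum_{j=1}^J (n-1) \hat{S}(j,j) = \frac{n-1}{J(n-1)} \sum_{j=1}^J \hat{S}(j,j) = \bar{S}, \qquad \bar{S} := \frac{1}{J} \sum_{j=1}^J \hat{S}(j,j),
\]
so $\hat{\sigma}^2$ is exactly the pooled (averaged) sample variance. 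Finally, $\hat{D} = N \cdot \diag\{ \hat{S}(j,j)/n \} = J \cdot \diag\{ \hat{S}(1,1), \ldots, \hat{S}(J,J) \}$.

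Next I would use the equal-diagonal hypothesis. Since $M = C^\transp (C C^\transp)^{-1} C$ is the orthogonal projection onto the $m$-dimensional row space of $C$, we have $\tr(M) = m$; with $J$ identical diagonal entries this forces $M_{jj} = m/J$ for every $j$. Because $\hat{D}$ is diagonal, this pins down the denominator of $B$:
\[
\tr(M \hat{D}) = J \sum_{j=1}^J M_{jj} \hat{S}(j,j) = m \sum_{j=1}^J \hat{S}(j,j) = J m \bar{S}.
\]
Plugging this into \eqref{e:BoxStat} and using $N = Jn$ gives $B = n\, \hat{\bar{Y}}^\transp M \hat{\bar{Y}}/(m \bar{S})$.

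For $F$, I would substitute the simplified pieces into \eqref{e:Fstat}. The middle matrix becomes $\hat{\sigma}^2 C (\cl{X}^\transp \cl{X})^{-1} C^\transp = \bar{S} n^{-1} C C^\transp$, whose inverse is $(n/\bar{S})(C C^\transp)^{-1}$. Hence
\[
F = \frac{1}{m} (C \hat{\bar{Y}})^\transp \frac{n}{\bar{S}} (C C^\transp)^{-1} C \hat{\bar{Y}} = \frac{n}{m \bar{S}} \hat{\bar{Y}}^\transp C^\transp (C C^\transp)^{-1} C \hat{\bar{Y}} = \frac{n}{m \bar{S}} \hat{\bar{Y}}^\transp M \hat{\bar{Y}},
\]
recognizing $C^\transp (C C^\transp)^{-1} C = M$ in the last step. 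This is identical to the expression obtained for $B$, so $B = F$.

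There is no real obstacle here; the result is a bookkeeping identity. The two points that require the stated hypotheses, and which I would take care to flag, are: (i) identifying $\hat{\sigma}^2$ with the averaged variance $\bar{S}$, which needs the equal group sizes so that the factors $(N_j - 1)$ and $(N - J)$ cancel cleanly; and (ii) using $\tr(M) = m$ together with the equal-diagonal condition to evaluate $\tr(M \hat{D})$, since this is the only place the common diagonal value $m/J$ enters. Everything else is linear-algebraic rearrangement.
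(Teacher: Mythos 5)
Your proof is correct and follows essentially the same route as the paper's: both simplify $\hat{\sigma}^2$ to the average of the $\hat{S}(j,j)$ under equal group sizes, use $\tr(M)=\operatorname{rank}(M)=m$ to pin each diagonal entry of $M$ at $m/J$, and reduce both $B$ and $F$ to the common expression $N_1\,\hat{\bar{Y}}^\transp M \hat{\bar{Y}}/(m\hat{\sigma}^2)$.
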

Except for the scaling by $m$ and the presence of $\hat{\sigma}^2$ in place of each $\hat{S}(j,j)$, $F$ is identical to $X^2$. This pooled variance estimate $\hat{\sigma}^2$ is problematic for the $F$ statistic, spurring it to fall short of the criterion of Proposition \ref{p:suitable}, as we formalize next.
\begin{thm}\label{t:LSF}
If Assumption \ref{AsuA} holds, then under $H_{0 \T{N}}(C,0_m)$, $m \cdot F \ind \sum_{j=1}^m \lambda_j \big( CVC^\transp ( \bar{S}CP^{-1}C^\transp )^{-1}\big) \xi_j^2$ where $\bar{S}= \sum_{j=1}^Jp_jS(j,j)$. If Assumption \ref{AsuB} holds and $\pi \sim \Unif( \Pi_N)$, then $m \cdot F_\pi |W \ind \chi_m^2$ a.s.
\end{thm}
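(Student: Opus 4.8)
The plan is to reduce $m \cdot F$ to a Wald-type quadratic form and then feed it the same finite-population CLT (Proposition~\ref{t:Ding5}) and consistency results (Proposition~\ref{p:inP}) that power Theorem~\ref{t:CYx}, the only difference being that the per-group estimator $\hat{D}$ is replaced by the pooled $\hat{\sigma}^2$. Since $\cl{X}^\transp \cl{X} = \diag(N_1, \ldots, N_J)$, I would first rewrite \eqref{e:Fstat} as
\begin{equation*}
m \cdot F = \big\{ N^{1/2} C \hat{\bar{Y}} \big\}^\transp \big\{ \hat{\sigma}^2 \cdot N C (\cl{X}^\transp \cl{X})^{-1} C^\transp \big\}^{-1} \big\{ N^{1/2} C \hat{\bar{Y}} \big\},
\end{equation*}
and note $N C (\cl{X}^\transp \cl{X})^{-1} C^\transp \to C P^{-1} C^\transp$. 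The key bookkeeping step is to identify the limit of the pooled variance. Writing $\sum_i W_i(j) \{ Y_i^\T{obs} - \hat{\bar{Y}}(j) \}^2 = (N_j - 1) \hat{S}(j,j)$ gives $\hat{\sigma}^2 = (N-J)^{-1} \sum_{j=1}^J (N_j-1)\hat{S}(j,j)$, so Proposition~\ref{p:inP} and Assumption~\ref{AsuA} yield $\hat{\sigma}^2 \inP \sum_{j=1}^J p_j S(j,j) = \bar{S}$.

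For the sampling distribution, under $H_{0\T{N}}(C, 0_m)$ we have $C \bar{Y} = 0_m$, so $N^{1/2} C \hat{\bar{Y}} = N^{1/2} C(\hat{\bar{Y}} - \bar{Y}) \ind \cl{N}(0_m, C V C^\transp)$ by Proposition~\ref{t:Ding5} and the continuous mapping theorem. Combining this with the two probability limits above, Slutsky's theorem gives $m \cdot F \ind Z^\transp (\bar{S} C P^{-1} C^\transp)^{-1} Z$ with $Z \sim \cl{N}(0_m, C V C^\transp)$. I would finish the first claim with the standard quadratic-form representation: for $Z \sim \cl{N}(0_m, \Sigma)$ and symmetric $A \succ 0$, $Z^\transp A Z \eqd \sum_{j=1}^m \lambda_j(A \Sigma) \xi_j^2$, where the eigenvalues of $A\Sigma$ coincide with those of $\Sigma A$. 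Taking $\Sigma = C V C^\transp$ and $A = (\bar{S} C P^{-1} C^\transp)^{-1}$ produces exactly $\sum_{j=1}^m \lambda_j\big( C V C^\transp (\bar{S} C P^{-1} C^\transp)^{-1}\big) \xi_j^2$.

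For the randomization distribution, the decisive observation is the strict additivity of the imputed Science Table: by FRT-\ref{step::impute}, $Y_i^*(j) = U_i + z_j$ with $U_i = Y_i^\T{obs} - z_{W_i}$, so the centered outcomes $Y_i^*(j) - \bar{Y}^*(j) = U_i - \bar{U}$ do not depend on $j$. Hence every entry of the imputed covariance matrix equals the common value $s_U^2 = (N-1)^{-1} \sum_i (U_i - \bar{U})^2$, i.e.\ the matrix is $s_U^2 \, 1_J 1_J^\transp$. Treating $\{ Y_i^*(j) \}$ as the fixed population on which $\pi$ induces a fresh CRE, Proposition~\ref{t:Ding5} applied to this population (together with $C 1_J = 0_m$, which annihilates the rank-one off-diagonal part of the limiting covariance) gives $N^{1/2} C \hat{\bar{Y}}_\pi^* \ind \cl{N}(0_m, s_{U,\infty}^2 \, C P^{-1} C^\transp)$, while the pooled variance satisfies $\hat{\sigma}_\pi^2 \inP \sum_{j=1}^J p_j s_U^2 \to s_{U,\infty}^2$, where $s_{U,\infty}^2 = \lmt{N} s_U^2$. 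The numerator covariance and the denominator scaling therefore cancel exactly, so the limiting quadratic form is $Z^\transp \Sigma^{-1} Z$ with $Z \sim \cl{N}(0_m, \Sigma)$ and $\Sigma = s_{U,\infty}^2 C P^{-1} C^\transp$, which is $\chi_m^2$ since all eigenvalues equal $1$.

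I expect the main obstacle to lie not in these limit computations but in discharging the regularity conditions needed to invoke Proposition~\ref{t:Ding5} for the imputed population \emph{almost surely} over the sequence of original assignments $W$: one must show that $s_U^2$ together with the Lindeberg/fourth-moment control for $\{ U_i \}$ converges a.s., which is precisely where Assumption~\ref{AsuB} enters and where the argument should reuse the a.s.\ finite-population limits already established in the course of proving Theorem~\ref{t:CYx}.
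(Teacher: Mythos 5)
Your proposal is correct and follows essentially the same route as the paper: the same Wald-form rewriting with $\hat{\sigma}^2 \inP \bar{S}$ and $(\cl{X}^\transp\cl{X}/N)^{-1} \inP P^{-1}$ for the sampling distribution, and the same exploitation of strict additivity (rank-one imputed covariance, annihilated off-diagonal via $C1_J = 0_m$, matching numerator and denominator scales) for the randomization distribution. The only cosmetic difference is that the paper normalizes the imputed outcomes by $(s^*)^{1/2}$ so that the standardized population has variance exactly one (needing only that $s^*$ stays bounded away from zero a.s.), whereas you pass to the a.s.\ limit $s_{U,\infty}^2$ directly; both are discharged by the same Lemmas on a.s.\ convergence and the Lindeberg-type condition for the imputed table, exactly as you anticipated.
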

The classical linear model assumes a constant treatment effect for all units \citep{kempthorne1952design}. This necessitates equal variances under all treatment levels. Yet, such homoscedasticity is not built into the potential outcomes framework. The assumptions underlying the $F$ statistic are not compatible with the potential outcomes framework in general. If the potential outcomes do have equal variance, then it is not surprising that $F$ is proper.
\begin{cor}\label{c:LSF}
Under Assumption \ref{AsuB}, if $S(1,1)= \cdots=S(J,J)$, then $F$ meets the criterion of Proposition \ref{p:suitable} asymptotically.
\end{cor}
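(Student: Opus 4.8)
The plan is to lean on Theorem \ref{t:LSF}, which already pins down both limiting distributions, and reduce the corollary to an eigenvalue bound. By Theorem \ref{t:LSF}, under Assumption \ref{AsuB} and $H_{0 \T{N}}(C,0_m)$, the sampling distribution obeys $m \cdot F \ind \sum_{j=1}^m \lambda_j \xi_j^2$ with $\lambda_j = \lambda_j\big( CVC^\transp ( \bar{S}CP^{-1}C^\transp )^{-1}\big)$, while the randomization distribution obeys $m \cdot F_\pi |W \ind \chi_m^2 = \sum_{j=1}^m \xi_j^2$ almost surely. Since multiplication by the positive constant $m$ preserves stochastic dominance, it suffices to show that the first limit is stochastically dominated by the second, and for that it is enough to prove $\lambda_j \in [0,1]$ for every $j$.

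First I would use the equal-variance hypothesis to collapse the denominator matrix. Writing $s$ for the common value $S(1,1)= \cdots =S(J,J)$, we get $\bar{S}= \sum_{j=1}^J p_j S(j,j)=s$ because $\sum_j p_j=1$, and the conservative limit in \eqref{e:Dob} becomes $D= \diag\{ s/p_1, \ldots, s/p_J \}=sP^{-1}$. Hence $\bar{S}CP^{-1}C^\transp = sCP^{-1}C^\transp = CDC^\transp$, so the matrix governing the $\lambda_j$ is exactly $CVC^\transp (CDC^\transp)^{-1}$. This identification $D=\bar{S}P^{-1}$ is the essential point: under equal variances the $F$ statistic's implicit denominator lines up with the conservative bound $CDC^\transp$ that makes the advocated statistic $X^2$ proper in Theorem \ref{t:CYx}.

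Next I would invoke the ordering $0 \preceq V \preceq D$. The upper bound $V \preceq D$ holds since $V=D-S$ with $S \succeq 0$, exactly as noted after \eqref{e:Dob}; the lower bound $V \succeq 0$ holds because $V$ is a limit of the covariance matrices $N \cdot \Cov( \hat{\bar{Y}})$. As $C$ has full row rank and $P^{-1} \succ 0$, the matrix $CDC^\transp \succ 0$ is invertible, and congruence gives $0 \preceq CVC^\transp \preceq CDC^\transp$. Conjugating by $(CDC^\transp)^{-1/2}$ shows that the symmetric matrix $(CDC^\transp)^{-1/2} CVC^\transp (CDC^\transp)^{-1/2}$, which is similar to $CVC^\transp (CDC^\transp)^{-1}$ and therefore shares its eigenvalues, lies between $0$ and $I$ in the Loewner order. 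Thus $\lambda_j \in [0,1]$ for all $j$.

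Finally I would close with a coupling argument: realizing both limits on the same i.i.d.\ $\cl{N}(0,1)$ variables $\xi_1, \ldots, \xi_m$, the bounds $0 \leq \lambda_j \leq 1$ give $\sum_{j=1}^m \lambda_j \xi_j^2 \leq \sum_{j=1}^m \xi_j^2$ pointwise almost surely, hence $\sum_{j=1}^m \lambda_j \xi_j^2 \leq_\T{st} \chi_m^2$. This is precisely the asymptotic stochastic dominance of the randomization distribution over the sampling distribution, so $F$ meets the criterion of Proposition \ref{p:suitable} asymptotically. I expect the only substantive step to be the eigenvalue bound of the previous paragraph; everything else is bookkeeping, and an equally valid shortcut would be to observe that under equal variances $\hat{\sigma}^2 \inP \bar{S}=s$ forces $m \cdot F$ to be asymptotically equivalent to $X^2$, whence properness follows directly from Theorem \ref{t:CYx}.
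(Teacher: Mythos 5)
Your proof is correct and follows essentially the same route as the paper's: both use the equal-variance assumption to identify $\bar{S}CP^{-1}C^\transp$ with $CDC^\transp$, then deduce $\lambda_j \in [0,1]$ from $0 \preceq V \preceq D$ and conclude stochastic dominance of $\chi_m^2$ over $\sum_j \lambda_j \xi_j^2$ via Theorem \ref{t:LSF}. The only difference is that you spell out the congruence/conjugation argument for the eigenvalue bound and the coupling argument for stochastic dominance, which the paper delegates to Lemma \ref{l:DD18-LA}(iii) and leaves implicit, respectively.
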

Huber--White covariance estimation for the OLS coefficients is frequently quoted as a fix to the classical $F$ statistic.
Econometricians are especially inclined to such an estimate of the covariance when the linear model is possibly misspecified or the error terms are heteroscedastic.
Define the residual $\hat{\epsilon}_i=Y_i^\T{obs}- \hat{\bar{Y}}(W_i)$. The Huber--White estimator for $N \cdot \Cov( \hat{\bar{Y}})$ is
\[ \begin{split}
\hat{D}_\T{HW}=& N( \cl{X}^\transp \cl{X})^{-1}\cl{X}^\transp \diag\left\{ \hat{\epsilon}_1^2, \ldots, \hat{\epsilon}_N^2 \right\} \cl{X}( \cl{X}^\transp \cl{X})^{-1}\\
=& N \cdot \diag\left\{ \frac{N_1-1}{N_1^2}\hat{S}(1,1), \ldots, \frac{N_J-1}{N_J^2}\hat{S}(J,J) \right\} .
\end{split}\]
If we replace $\hat{\sigma}^2( \cl{X}^\transp \cl{X})^{-1}$ by $\hat{D}_\T{HW}$ in \eqref{e:Fstat} and dismiss the scaling by $m$, we get
\[ X_\T{HW}^2=N(C \hat{\bar{Y}})^\transp (C \hat{D}_\T{HW}C^\transp)^{-1}C \hat{\bar{Y}}. \]
$\hat{D}_\T{HW} $ is nearly identical to $\hat{D}$ if $N_j \approx N_j-1$ for $j=\ot{J}$. Therefore, $X_\T{HW}^2$ is asymptotically akin to $X^2$. By this, the Huber--White covariance estimator successfully repairs the $F$ statistic.

\section{Special Cases}\label{s:sc}
Section \ref{s:TestStat} devises a strategy for testing weak null hypotheses in general experiments. The contents there speak directly to many worthwhile settings.

\subsection{One-Way Analysis of Variance with Multi-Valued Treatments}\label{s:anova}
In the one-way analysis of variance (ANOVA), the goal is to test $H_{0 \T{N}}: \bar{Y}(1)= \cdots= \bar{Y}(J)$. It is a special case of the null hypothesis \eqref{e:HNx} with $x=0_{J-1}$ and any contrast matrix $C \in \bb{R}^{(J-1) \times J}$ for instance $C=(1_{J-1},-I_{J-1})$. Here, $m=J-1$, which spares us from having to construct $\tilde{C}$ or select $\tilde{x}$.

We impute potential outcomes in FRT-\ref{step::impute} as $Y_i^*(j)=Y_i^\T{obs}$ for $i=\ot{N}$ and $j=\ot{J}$ under $H_{0 \T{F}}: Y_i(1)= \cdots= Y_i(J)$, for $i=\ot{N}$. To test $H_{0 \T{F}}$, \cite{Fisher25} crafted the statistic
\begin{equation}\label{e:DD18-F}
F=  \frac{\sum_{j=1}^JN_j \{ \hat{\bar{Y}}(j)- \bar{Y}_\cdot^\T{obs} \}^2/(J-1)}{\sum_{j=1}^J(N_j-1) \hat{S}(j,j)/(N-J)}, 
\T{ where }\bar{Y}_\cdot^\T{obs}= \frac{1}{N}\sum_{i=1}^NY_i^\T{obs}.
\end{equation}
He argued that $F_{J-1,N-J}$ approximates the sampling distribution of $F$. \cite{DD18} attested that \eqref{e:DD18-F} is not proper but
\begin{equation}\label{e:DD18-X2}
X^2= \sum_{j=1}^J \frac{N_j}{\hat{S}(j,j)}\{ \hat{\bar{Y}}(j)- \bar{Y}_S^\T{obs}\}^2,
\T{ where }\bar{Y}_S^\T{obs}= \frac{\sum_{j=1}^JN_j \hat{\bar{Y}}(j)/ \hat{S}(j,j)}{\sum_{j=1}^JN_j/ \hat{S}(j,j)}
\end{equation}
is for testing $H_{0 \T{N}}$ with the FRT. See \citet{schochet2018multi} for a related discussion.

It is immediate from the next proposition that our framework encompasses these results as special cases.
\begin{prop}\label{p:eqX2}
In the one-way ANOVA, the $X^2$ in \eqref{e:X2} and \eqref{e:DD18-X2} coincide, as do the $F$ in \eqref{e:Fstat} and \eqref{e:DD18-F}.
\end{prop}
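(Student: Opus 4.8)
The plan is to recognize both the $X^2$ of \eqref{e:X2} and the $F$ of \eqref{e:Fstat} as weighted least squares quadratic forms, and to match them against the explicit weighted sums in \eqref{e:DD18-X2} and \eqref{e:DD18-F}. The crucial structural fact is that $m=J-1$ in the one-way ANOVA, so the row space of the full-row-rank contrast matrix $C$ (which satisfies $C1_J=0_m$) is exactly the orthogonal complement $\{1_J\}^\perp$. This is what lets a single algebraic identity drive both equalities.

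The key step I would isolate is the following lemma: for any $A\succ0$ and any full-row-rank $C$ whose row space is $\{1_J\}^\perp$,
\[(Cy)^\transp(CAC^\transp)^{-1}(Cy)=\min_{\mu\in\bb{R}}(y-\mu1_J)^\transp A^{-1}(y-\mu1_J),\]
with minimizer the $A^{-1}$-weighted mean $\mu^*=(1_J^\transp A^{-1}1_J)^{-1}1_J^\transp A^{-1}y$. I would prove this by setting $\tilde y=A^{-1/2}y$ and $B=CA^{1/2}$: the left side becomes $\tilde y^\transp B^\transp(BB^\transp)^{-1}B\tilde y$, the Euclidean projection of $\tilde y$ onto the column space of $B^\transp=A^{1/2}C^\transp$, while the right side becomes $\tilde y^\transp(I-P_g)\tilde y$ with $g=A^{-1/2}1_J$ and $P_g$ the projection onto $g$. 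The two coincide once I check that $A^{1/2}\{1_J\}^\perp$ equals $\{g\}^\perp$, which holds because $(A^{1/2}v)^\transp g=v^\transp1_J=0$ for every $v\perp1_J$ and both subspaces have dimension $J-1$.

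Then I would apply the lemma twice. Taking $A=\hat{D}$, so that $\hat{D}^{-1}=N^{-1}\diag\{N_1/\hat{S}(1,1),\ldots,N_J/\hat{S}(J,J)\}$, the right side becomes $N^{-1}\min_\mu\sum_j\{N_j/\hat{S}(j,j)\}\{\hat{\bar{Y}}(j)-\mu\}^2$; its minimizer is exactly $\bar{Y}_S^{\T{obs}}$ of \eqref{e:DD18-X2}, and the leading $N$ in \eqref{e:X2} cancels the $N^{-1}$, producing the first identity. Taking instead $A=(\cl{X}^\transp\cl{X})^{-1}=\diag(1/N_1,\ldots,1/N_J)$, the right side becomes $\min_\mu\sum_j N_j\{\hat{\bar{Y}}(j)-\mu\}^2$, whose minimizer is the grand mean $\bar{Y}_\cdot^{\T{obs}}$ because $\sum_j N_j\hat{\bar{Y}}(j)=\sum_i Y_i^{\T{obs}}$. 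Combined with the simplification $\hat{\sigma}^2=(N-J)^{-1}\sum_j(N_j-1)\hat{S}(j,j)$, this turns \eqref{e:Fstat} into \eqref{e:DD18-F}.

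The main obstacle is really just the lemma; once the contrast quadratic form is identified with the profiled (weighted least squares) form, everything else is bookkeeping — recognizing $\hat{D}^{-1}$ and $(\cl{X}^\transp\cl{X})^{-1}$ as the correct diagonal weight matrices, verifying that the two weighted means reduce to $\bar{Y}_S^{\T{obs}}$ and $\bar{Y}_\cdot^{\T{obs}}$, and confirming that $\hat{\sigma}^2$ collapses to the pooled denominator of \eqref{e:DD18-F}. I anticipate no difficulty beyond invertibility, which is guaranteed by $A\succ0$, the regularity assumption $\hat{S}(j,j)>0$, and the full row rank of $C$.
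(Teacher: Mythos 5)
Your proof is correct, but it takes a genuinely different route from the paper's. The paper works with the specific choice $C=(1_{J-1},\,-I_{J-1})$, computes $C\hat{D}C^\transp$ by block multiplication, inverts it via the Sherman--Morrison formula, and then recognizes the resulting quadratic form as $Q$ times the variance of $\{\hat{\bar{Y}}(1),\ldots,\hat{\bar{Y}}(J)\}$ under the probabilities $Q_j/Q$ with $Q_j=N_j/\hat{S}(j,j)$; the $F$ identity is handled by the same computation with $N_j,N$ in place of $Q_j,Q$. You instead prove a single projection lemma identifying $(Cy)^\transp(CAC^\transp)^{-1}(Cy)$ with the profiled weighted least squares criterion $\min_\mu (y-\mu 1_J)^\transp A^{-1}(y-\mu 1_J)$ whenever the row space of $C$ is $\{1_J\}^\perp$, and then substitute $A=\hat{D}$ and $A=(\cl{X}^\transp\cl{X})^{-1}$. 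Your lemma is correct: the change of variables $\tilde y=A^{-1/2}y$ turns both sides into complementary orthogonal projections of $\tilde y$, and the dimension count plus the identity $(A^{1/2}v)^\transp A^{-1/2}1_J=v^\transp 1_J$ closes the argument. What your approach buys is generality and uniformity: it makes transparent that both statistics are independent of which full-row-rank contrast matrix spanning $\{1_J\}^\perp$ is used (the paper's claim that ``any contrast matrix'' works is only implicit in its computation for one particular $C$), and it dispatches $X^2$ and $F$ by the same lemma rather than by parallel calculations. What the paper's approach buys is elementary self-containedness --- everything is explicit matrix algebra with no appeal to projection geometry --- and the probabilistic reading of \eqref{e:DD18-X2} as a weighted variance, which your weighted-least-squares reading mirrors exactly (the minimized sum of squares about the weighted mean \emph{is} that weighted variance up to the factor $Q$). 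No gaps.
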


\subsection{Treatment-Control Experiments}\label{s:tc}
In the treatment-control setting, $J=2$, and unit $i$ either receives the treatment (then $Y_i^\T{obs}=Y_i(1)$) or control (then $Y_i^\T{obs}=Y_i(2)$).
A parameter we might inquire about is the average treatment effect $\tau= \bar{Y}(1)- \bar{Y}(2)$. The weak null hypothesis is $H_{0 \T{N}}(C,0): \tau=0$. This matches \eqref{e:HNx}, where $C=(1,-1)$ is a row vector. Thus, treatment-control is a special case of the one-way layout of Section \ref{s:anova}.
A popular statistic is $| \hat{\tau}|$, where $\hat{\tau}= \hat{\bar{Y}}(1)- \hat{\bar{Y}}(2)$ is the sample difference-in-means of outcomes. However, \citet{DD18} showed that $| \hat{\tau}|$ is not proper for testing $H_{0 \T{N}}$.
\begin{cor}\label{c:tc}
In the treatment-control setting,
\begin{equation}\label{e:X2-tc}
X^2=B= \frac{\{ \hat{\bar{Y}}(1)- \hat{\bar{Y}}(2) \}^2}{\hat{S}(1,1)/N_1+ \hat{S}(2,2)/N_2}= \frac{\hat{\tau}^2}{\hat{S}(1,1)/N_1+ \hat{S}(2,2)/N_2} = t^2,
\end{equation}
where $t$ is the studentized statistic, i.e., \cite{Neyman23}'s estimator of the average causal effect divided by its standard error.
Under Assumption \ref{AsuB}, for almost all sequences of $W$, $B=X^2$ can asymptotically control type I error, but $F$ and $| \hat{\tau}|$ cannot, unless $N_1=N_2$ or $S(1,1)=S(2,2)$.
\end{cor}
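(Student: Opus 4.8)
For the algebraic identity \eqref{e:X2-tc} the plan is direct substitution. I would set $C=(1,-1)$, $x=0$, and $m=1$ in the definition \eqref{e:X2}. Since $\hat{D}=N\diag\{\hat{S}(1,1)/N_1,\hat{S}(2,2)/N_2\}$, the scalar $C\hat{D}C^\transp$ collapses to $N\{\hat{S}(1,1)/N_1+\hat{S}(2,2)/N_2\}$, while $C\hat{\bar{Y}}=\hat{\bar{Y}}(1)-\hat{\bar{Y}}(2)=\hat{\tau}$; substituting and cancelling the common factor $N$ yields $X^2=\hat{\tau}^2/\{\hat{S}(1,1)/N_1+\hat{S}(2,2)/N_2\}$. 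The equality $B=X^2$ is then immediate from the second half of Corollary \ref{c:Brunner}, because $C$ is a row vector. Recognizing the denominator as \citet{Neyman23}'s conservative variance estimator of $\hat{\tau}$ identifies the expression as $t^2$, the square of the studentized statistic.

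For the type I error claim, the validity of $B=X^2$ is free: Theorem \ref{t:CYx} already shows $X^2$ is proper, so by Proposition \ref{p:suitable} its FRT controls asymptotic type I error. The work lies with $F$ and $|\hat{\tau}|$. I would specialize Theorem \ref{t:LSF} to $J=2$, $m=1$: there the sampling distribution is $F\ind\lambda_1\xi_1^2$ with $\lambda_1=CVC^\transp/(\bar{S}\,CP^{-1}C^\transp)$, and the randomization distribution is $\chi_1^2=\xi_1^2$ almost surely, so by Proposition \ref{p:suitable} the statistic $F$ is proper if and only if $\lambda_1\leq 1$. For $|\hat{\tau}|$, which the earlier theorems do not cover, I would obtain both limiting laws directly: Proposition \ref{t:Ding5} gives $N^{1/2}|\hat{\tau}|\ind\sqrt{CVC^\transp}\,|\xi_1|$ for the sampling distribution, and I would show that permuting the additively imputed outcomes produces $N^{1/2}|\hat{\tau}_\pi|\,\big|\,W\ind\sqrt{\bar{S}\,CP^{-1}C^\transp}\,|\xi_1|$ almost surely. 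Since a half-normal law is stochastically ordered by its scale, $|\hat{\tau}|$ obeys exactly the same criterion $\lambda_1\leq 1$ as $F$, which is why the two are grouped together.

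It then remains to analyze $\lambda_1$. Writing $c=CDC^\transp/(\bar{S}\,CP^{-1}C^\transp)$ and recalling that $X^2\ind a_1\xi_1^2$ with $a_1=CVC^\transp/CDC^\transp\leq 1$ (from $V\preceq D$), I have the factorization $\lambda_1=a_1 c$. A short scalar computation with $C=(1,-1)$ and $p_1+p_2=1$ shows $c=1$ whenever $p_1=p_2$, i.e.\ $N_1=N_2$, or $S(1,1)=S(2,2)$; either case forces $\lambda_1=a_1\leq 1$ and makes both $F$ and $|\hat{\tau}|$ proper, which establishes the ``unless'' half. Conversely, the same computation gives $c-1\propto(p_2-p_1)\{S(1,1)-S(2,2)\}$ with a positive proportionality factor, so when the larger variance sits in the smaller group we get $c>1$; driving the variance ratio to an extreme while keeping $a_1$ bounded away from $0$ yields $\lambda_1>1$, so both tests become anti-conservative. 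A single explicit heteroscedastic, unbalanced configuration suffices to exhibit the failure.

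The main obstacle is the non-studentized statistic $|\hat{\tau}|$, which is not supplied by the earlier theorems, so I must pin down its randomization distribution independently. The delicate step is verifying that the empirical pooled variance of the additively imputed outcomes converges to $\bar{S}=p_1 S(1,1)+p_2 S(2,2)$, so that the randomization scale coincides with the denominator appearing in $\lambda_1$; this requires checking that the between-group mean discrepancy introduced by imputation is asymptotically negligible under the weak null, which I would control using Proposition \ref{p:inP} together with the $N^{1/2}$ scaling in Proposition \ref{t:Ding5}. Everything else reduces to bookkeeping with the scalars $a_1$, $c$, and $\lambda_1$.
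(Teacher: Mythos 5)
Your proposal is correct, and for the first half it coincides with the paper's argument: the identity \eqref{e:X2-tc} is indeed just the substitution $C=(1,-1)$ into \eqref{e:X2}, the equality $B=X^2$ is the row-vector case of Corollary \ref{c:Brunner}, and the properness of $X^2$ is inherited from Theorem \ref{t:CYx} via Proposition \ref{p:suitable}. Where you diverge is the negative claim about $F$ and $|\hat{\tau}|$: the paper disposes of this entirely by citing \citet{DD18}, whereas you reconstruct the argument inside the paper's own framework. Your reconstruction is sound. Specializing Theorem \ref{t:LSF} to $J=2$ gives the correct eigenvalue $\lambda_1=CVC^\transp/(\bar{S}\,CP^{-1}C^\transp)$, the factorization $\lambda_1=a_1c$ with $c=CDC^\transp/(\bar{S}\,CP^{-1}C^\transp)$ is exact, and the scalar identity $c-1\propto(p_2-p_1)\{S(1,1)-S(2,2)\}$ (with positive proportionality constant $p_1p_2/\{p_1S(2,2)+p_2S(1,1)\}$, say) does establish the ``unless'' direction. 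Your treatment of $|\hat{\tau}|$ is also consistent with the paper's machinery: the randomization scale you need is exactly $s^*\to\bar{S}$, and the ``delicate step'' you flag — that the term $\sum_j N_j\{\hat{\bar{Y}}(j)-\bar{Y}^*(j)\}^2/(N-1)$ in \eqref{e:biasvariance} vanishes under the weak null — is precisely what the paper's proof of the main theorems handles via Lemma \ref{l:asconv}. One small point worth making explicit in the converse direction: since $a_1\leq 1$ always, $c>1$ alone does not force $\lambda_1>1$, so your concluding counterexample must control $a_1$ as well; taking $S(1,2)$ near $\sqrt{S(1,1)S(2,2)}$ and $S(2,2)$ small relative to $S(1,1)$ with $p_1$ small gives $\lambda_1\approx p_2^2/p_1>1$, which does the job. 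The payoff of your route is a self-contained proof that makes visible exactly which scalar inequality fails; the cost is that it duplicates work the paper deliberately delegates to prior literature.
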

Because $t$ is a monotone transform of $X^2$, the FRT with $|t|$ is asymptotically conservative in the finite population setup. It also leads to exact type I errors for the sharp null hypothesis $H_{0\text{F}}: Y_i(1) = Y_i(2)$ for all $i.$
Not only is the statistic $| \hat{\tau}|$ not proper, but it also has other ``paradoxical'' shortcomings \citep{DingParadox}; see also the comment of \citet{loh2017apparent}.
Corollary \ref{c:tc} declares that a balanced design can salvage the $F$ and $| \hat{\tau}|$ statistics, even without homoscedasticity. Perhaps counter to intuition, this protection does not endure when $J>2$, as our simulations will soon demonstrate.

\subsection{Trend Tests}\label{s:trend}
Our perspective has been on type I error under null hypotheses without specifying alternative hypotheses. In experiments for dose-response relationships, we have ordered treatment $1\leq \cdots \leq J$ and often specify the null and alternative hypotheses as $H_{0 \T{N}}$ and $H_{1 \T{N}}: \bar{Y}(1) \leq  \cdots \leq \bar{Y}(J)$ with at least one strict inequality.
We can still carry forward the results in Section \ref{s:anova} on ANOVA. Power might shrink for the test if we do not account for the ordering of the dose-response relationship. Motivated by \citet{armitage1955tests} and \citet{page1963ordered}, we first choose doses $(a_1,\ldots, a_J)$ for treatment levels $( \ot{J})$. Then the test statistic
$ 
r=  \sum_{j=1}^Ja_j \{ \hat{\bar{Y}}(j)- \bar{Y}_\cdot^\T{obs}\} =C \hat{\bar{Y}}
$
is plausible, where $C=(a_1-a_+N_1/N, \ldots,a_J-a_+N_J/N) \in \bb{R}^{1 \times J}$ is a contrast vector, and $a_+= \sum_{j=1}^Ja_j$. In effect, we are testing $H_{0\T{N}}(C,0):C \bar{Y}=0$. Previous theory suggests that $r$ is not proper but the studentized statistic is:
\[ 
t= \frac{ C \hat{\bar{Y}}}{ (C\hat{D}C^\transp /N)^{1/2} }
=  \frac{ \sum_{j=1}^Ja_j \{  \hat{\bar{Y}}(j)- \bar{Y}_\cdot^\T{obs}\}    }{  \{   \sum_{j=1}^J    (a_j - a_+ N_j/N)^2 \hat{S}(j,j)/N_j  \}^{1/2} }.
\]
Note that under $H_{0 \T{N}}$, we impute all missing potential outcomes as $Y_i^\T{obs}$ for each unit $i$, albeit we fix a particular contrast vector $C$ to construct the studentized statistic.
Moreover, in this case, we conduct a one-sided test, rejecting $H_{0 \T{N}}$ if $t$ is larger than the $1- \alpha$ quantile of its randomization distribution.

\subsection{Binary Outcomes}
The theory for $X^2$ statistics does not insist that the outcome be of a particular type as long as the regularity conditions hold.
In particular, it applies directly to binary outcomes. However, binary outcomes have a special feature that $S(j,j) = N \bar{Y}(j)\{ 1- \bar{Y}(j) \} /(N-1)$, i.e., the mean $\bar{Y}(j)$ determines the variance $S(j,j)$.
Therefore, under the null hypothesis $H_{0 \T{N}}: \bar{Y}(1)= \cdots = \bar{Y}(J)$, the variances are all the same too: $S(1,1)= \cdots =S(J,J)$.
For binary outcomes, the difference-in-means statistic $| \hat{\tau}|$ for $J=2$ in Section \ref{s:tc}, the $F$ statistic for general $J$ in Section \ref{s:anova}, and the $r$ statistic in Section \ref{s:trend} are all proper for testing $H_{0 \T{N}}$.
As pointed out by \citet{DingParadox}, for this weak null hypothesis, we do not need studentization to guarantee correct asymptotic type I error.
However, this does not hold for general weak null hypotheses $H_{0 \T{N}}(C,x)$ of binary potential outcomes because $C \bar{Y}=x$ does not imply they have equal variances.  
In general, we always recommend using $X^2$.

\subsection{$2^K$ Factorial Designs}
$2^K$ factorial designs seek to analyze $K$ binary treatment factors simultaneously. In total, we have $J = 2^K$ possible treatment combinations. \cite{DasFact15} tied these designs and the potential outcomes framework together. We summarize this setup.
To do so, it is helpful to introduce the model matrix $G \in \{ \pm 1 \}^{(J-1) \times J}$. Let $*$ denote the component-wise product. \cite{FactLu16b} constructed the rows of $G$, which we call $g_1^\transp, \ldots,g_{J-1}^\transp$, as follows: 
\begin{itemize}
\item for $j=\ot{K}$, let $g_j^\transp$ be $-1_{2^{K-j}}^\transp,1_{2^{K-j}}^\transp$ repeated $2^{j-1}$ times;
\item the next $\binom{K}{2}$ values of $g_j$'s are $g_{k(1)}*g_{k(2)}$ where $k(1) \neq k(2) \in \{ \ot{K}\}$;
\item the next $\binom{K}{3}$ are component-wise products of triplets of distinct $g_1, \ldots,g_K$, etc;
\item the bottom row is $g_{J-1}=g_1* \cdots *g_K$.
\end{itemize}
The matrix $G$ has rows orthogonal to each other and to $1_J$, i.e., $GG^\transp =J \cdot I_{J-1}$ and $G1_J=0_{J-1}$. Let $\tilde{G}\in \{ \pm 1 \}^{K \times J}$ be the first $K$ rows of $G$. Call its columns $z_1, \ldots,z_J$, which are the possible treatment combinations. An example elucidates the setup.
\begin{example}\label{ex:fact2}
When $K=2$, we have
\[ G= \begin{pmatrix}
-1 & -1 & 1 & 1 \\
-1 & 1 & -1 & 1 \\
1 & -1 & -1 & 1
\end{pmatrix}= \vecth{g_1^\transp}{g_2^\transp}{g_3^\transp}= \vect{\tilde{G}}{g_3^\transp}= \begin{pmatrix}
z_1 & z_2 & z_3 & z_4 \\
1 & -1 &  -1 & 1
\end{pmatrix}. \]
The four possible treatment combinations are $z_1=(-1,-1)^\transp$, $z_2=(-1,1)^\transp$, $z_3=(1,-1)^\transp$, and $z_4=(1,1)^\transp$. We read these off from the first two rows of $G$.\qed
\end{example}
The rows of $G$ define factorial effects. Namely, $g_1, \ldots,g_K$ correspond to main effects, $g_{K+1}, \ldots,g_{K+ \binom{K}{2}}$ correspond to two-way interactions, etc, and $g_{J-1}$ corresponds to the $K$-way interaction.
Let $Y_i(j)=Y_i(z_j)$ be the response of unit $i$ if it receives the treatment combination $z_j$. Then we can transfer our previous notation to $2^K$ factorial designs.
The general factorial effect for unit $i$ indexed by $g_j$ is $\tau_{ij}=2g_j^\transp Y_i/J$, and the corresponding average factorial effect is $\tau_j= \sum_{i=1}^N \tau_{ij}/N=2g_j^\transp \bar{Y}/J$.
Vectorize these quantities: $\tau_i =( \tau_{i1}, \ldots, \tau_{i,J-1})^\transp =2GY_i/J$ and $\tau =( \tau_1, \ldots, \tau_{J-1})^\transp =2G \bar{Y}/J$.

We may perform inference on $\tau$ or any subset of its entries. Let $A= \{a(1), \ldots,a(m) \} \subseteq \{ \ot{J-1}\}$ be the target subset, and let $C \in \{ \pm 1 \}^{m \times J}$ have rows $g_{a(1)}^\transp, \ldots,g_{a(m)}^\transp$. 
Then $\tau_A =( \tau_{a(1)}, \ldots, \tau_{a(m)})^\transp =2C \bar{Y}/J$. Testing whether $\tau_A=2x/J$ is equivalent to testing $H_{0 \T{N}}(C,x)$. The FRT with $X^2$ is proper.
The factorial design stimulates a natural choice of $\tilde{C}$ for the imputation step FRT-\ref{step::impute}. We let $g_j^\transp$ be a row of $\tilde{C}$ whenever $j \notin A$.

\citet{FactLu16b} discussed both randomization-based and regression-based inferences for $2^K$ factorial designs. He fixated on point estimation and proposed using the Huber--White covariance estimator.
We have likewise highlighted that it is imperative to use the Huber--White covariance estimator and the $F$ statistic together in the FRT.

\subsection{Hodges--Lehmann Estimation}
Up to this stage, our developments have been on hypothesis testing. Drawing upon the duality between testing and estimation, our previous results shed light on the estimation of $C \bar{Y}$.
This strategy is sometimes referred to as Hodges--Lehmann estimation \citep{HL63, ObsRosenbaum}. For a fixed $x$, we can by means of the FRT obtain a $p$-value for the null hypothesis $H_{0 \T{N}}(C,x)$. Let us denote this $p$-value by $p(x)$ to delineate its dependence on $x$.

The Hodges--Lehmann point estimator $\hat{\tau}_\T{HL}$ for $C \bar{Y}$ is the $x \in \bb{R}^m$ that results in the least significant $p$-value for testing $H_{0 \T{N}}(C,x)$. In symbols, $\hat{\tau}_\T{HL}\in \argmax_{x \in \bb{R}^m}p(x)$.
Note that $x=C \hat{\bar{Y}}$ implies $X^2=0$, which in turn implies $p(x)=1$. Thus $\hat{\tau}_\T{HL}=C \hat{\bar{Y}}$, the usual unbiased estimator. Because $X^2$ is proper, the duality between hypothesis testing and confidence sets assures the following corollary.

\begin{cor}\label{c:CI-FRT}
For $\alpha \in (0,1)$ and almost all sequences of $W$, an asymptotically conservative $(1- \alpha)$ confidence set for $C \bar{Y}$ is
$
\T{CR}_\alpha = \big\{ x \in \bb{R}^m:p(x)> \alpha \big\}, 
$
in the sense that $\lmt{N}\bb{P}\{ C \bar{Y}\in \T{CR}_\alpha \} \geq 1-\alpha$.
\end{cor}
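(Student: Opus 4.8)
The plan is to exploit the standard duality between hypothesis tests and confidence sets, leveraging the asymptotic conservativeness of the FRT with $X^2$ that was established in Theorem \ref{t:CYx}. The confidence set $\T{CR}_\alpha$ is, by construction, exactly the set of null values $x$ that the level-$\alpha$ FRT fails to reject. Thus the coverage event $\{ C\bar{Y}\in \T{CR}_\alpha \}$ is identical to the event that the FRT does not reject when the tested value $x$ equals the \emph{true} value $x^* := C\bar{Y}$. The heart of the argument is therefore to show that, when we test the true hypothesis $H_{0\T{N}}(C,x^*)$, the FRT rejects with asymptotic probability at most $\alpha$.

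First I would fix the true value $x^* = C\bar{Y}$ and observe that $H_{0\T{N}}(C,x^*)$ holds by definition; this is precisely the setting of Theorem \ref{t:CYx}. By the first conclusion of that theorem, under Assumption \ref{AsuA} the sampling distribution of $X^2$ (computed with $x = x^*$) converges to $\sum_{j=1}^m a_j \xi_j^2$ with each $a_j \in [0,1]$. By the second conclusion, under Assumption \ref{AsuB} and for almost all sequences of $W$, the randomization distribution $X^2_\pi \mid W$ converges to $\chi_m^2 = \sum_{j=1}^m \xi_j^2$. Since $a_j \leq 1$ for every $j$, the sampling limit is stochastically dominated by the randomization limit, so $X^2$ is proper. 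Next I would translate this stochastic dominance into control of the $p$-value: because $p(x^*)$ is the right-tail probability of $X^2$ under the reference distribution $X^2_\pi \mid W$, the propriety of $X^2$ gives $\lmt{N}\bb{P}\{ p(x^*) \leq \alpha \} \leq \alpha$, which is exactly the statement that the FRT controls asymptotic type I error at level $\alpha$ (the conclusion already recorded after Theorem \ref{t:CYx}).

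Finally I would assemble the pieces. Since $\T{CR}_\alpha = \{ x : p(x) > \alpha \}$, we have the set identity
\[
\{ C\bar{Y} \in \T{CR}_\alpha \} = \{ p(x^*) > \alpha \},
\]
and therefore
\[
\lmt{N}\bb{P}\{ C\bar{Y} \in \T{CR}_\alpha \} = \lmt{N}\bb{P}\{ p(x^*) > \alpha \} = 1 - \lmt{N}\bb{P}\{ p(x^*) \leq \alpha \} \geq 1 - \alpha,
\]
which is the claimed coverage. The ``almost all sequences of $W$'' qualifier is inherited directly from the almost-sure convergence of the randomization distribution in Theorem \ref{t:CYx}.

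I expect the only genuine subtlety to lie in passing from the stochastic-dominance-of-limits statement to the $p$-value inequality $\lmt{N}\bb{P}\{ p(x^*) \leq \alpha \} \leq \alpha$. This requires care because the rejection threshold is the $(1-\alpha)$ quantile of the randomization distribution $X^2_\pi \mid W$, which is itself random (it depends on $W$ and hence on $N$), whereas the statistic $X^2$ has its own limiting law; one must align the a.s.\ convergence of the random reference quantiles with the convergence in distribution of $X^2$ and handle the boundary behavior at the quantile, using continuity of the limiting $\chi_m^2$ distribution function to rule out mass concentrating exactly at the threshold. Everything else is the routine duality bookkeeping encapsulated in the set identity above.
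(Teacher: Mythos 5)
Your proposal is correct and follows essentially the same route as the paper, which proves this corollary purely by the duality between testing and confidence sets: since $H_{0\T{N}}(C,x^*)$ holds at the true value $x^*=C\bar{Y}$, non-coverage is exactly rejection of the true null, and the asymptotic type I error control furnished by Theorem \ref{t:CYx} (properness of $X^2$ via Proposition \ref{p:suitable}) gives the coverage bound. The subtlety you flag about aligning the a.s.\ convergence of the randomization quantiles with the convergence in distribution of $X^2$, using continuity of the $\chi_m^2$ limit, is precisely the content the paper leaves implicit in declaring the corollary ``assured'' by duality.
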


Determining $\T{CR}_\alpha$ can be computationally intensive, so it is expedient to have the asymptotic approximation 
\begin{equation}\label{e:CIapprox}
\T{CR}_\alpha \approx \left\{ x:N(C \hat{\bar{Y}}-x)^\transp (C\hat{D}C^\transp )^{-1}(C \hat{\bar{Y}}-x) \leq \chi_{m,\alpha}^2 \right\},
\end{equation}
where $\chi_{m,\alpha}^2$ is the $1- \alpha$ quantile of $\chi_m^2$. Because the $X^2$ statistic is a quadratic form, $\T{CR}_\alpha$ is an ellipsoid centered at $C \hat{\bar{Y}}$.
The set $\T{CR}_\alpha$ can serve either directly as a $1- \alpha$ approximate confidence set or as an initial guess in searching for the exact confidence region by inverting FRTs. We undertake this later by a simulation.

\subsection{Testing Inequalities} 

FRTs can also handle hypotheses of inequalities:
\begin{equation}\label{e:HNx-ineq}
\tilde{H}_{0 \T{N}}(C,x): C \bar{Y}\geq x.
\end{equation}
We commence at the case where $C \in \bb{R}^{1 \times J}$ is a row vector with $C1_J=0$, and $x \in \bb{R}$ is a scalar.
\begin{example}
In the two-sample problem with $J=2$, we can test $\bar{Y}(2)- \bar{Y}(1) \geq 0$: whether treatment level $1$ results in smaller outcomes than treatment level $2$ on average. In this case, $C = (-1,1)$ and $x=0$.\qed
\end{example}
\begin{example}\label{Ex:3arm}
In a gold standard design for three arms, let level $1$ be the placebo control, level $2$ be the active control, and level $3$ be the experimental treatment. Suppose that smaller outcomes are more desirable, and we know that $\bar{Y}(2)> \bar{Y}(1)$ from previous studies.
Given $\Delta >0$, the goal is to test the hypothesis $\bar{Y}(1)- \bar{Y}(3) \leq \Delta \{ \bar{Y}(1)- \bar{Y}(2) \}$. When $\Delta >1$, this is a superiority test, and when $\Delta \in (0,1)$, this is a non-inferiority test \citep{Mutze3arm}.
This null hypothesis is equivalent to $\tilde{H}_{0 \T{N}}(C,0): ( \Delta -1) \bar{Y}(1)- \Delta \bar{Y}(2)+ \bar{Y}(3) \geq 0$ with $C=( \Delta -1,- \Delta, 1)$.\qed
\end{example}
To impute the missing potential outcomes, we pretend that the null hypothesis is $H_{0 \T{N}}(C,x)$ and utilize \eqref{step::impute} as we did before.
The statistic $X^2$ is not suitable here because it is intended for two-sided tests. For instance, $X^2$ can be large, even under $\tilde{H}_{0 \T{N}}(C,x)$. Instead we use a truncated statistic $t_+= \max(t,0)$ where
\[ t=N^{1/2}(x-C \hat{\bar{Y}})/(C \hat{D}C^\transp )^{1/2}. \]
The FRT with $t$ also works for $p$-values at most $0.5$. \citet{Mutze3arm} used the special case of $t$ in the setting of Example \ref{Ex:3arm}. We choose $t_+$ so that Proposition \ref{p:suitable} directly covers our situation. We summarize the results below.
\begin{cor}\label{c:CYx-ineq}
Consider testing $\tilde{H}_{0 \T{N}}(C,x)$ in \eqref{e:HNx-ineq}, where $C \in \bb{R}^{1 \times J}$ and $x \in \bb{R}$. If Assumption \ref{AsuA} holds, then under $H_{0 \T{N}}(C,x)$ in \eqref{e:HNx}, we have $t \ind \cl{N}(0,a)$ for some $a \in [0,1]$.
If Assumption \ref{AsuB} holds and $\pi \sim \Unif( \Pi_N)$, then $t_\pi |W \ind \cl{N}(0,1)$ a.s. In particular, the FRT with test statistic $t_+$ can asymptotically control type I error under $\tilde{H}_{0 \T{N}}(C,x)$ a.s.
\end{cor}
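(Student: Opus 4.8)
The plan is to reduce everything to the one-dimensional studentized statistic $t$ and then invoke Proposition \ref{p:suitable}. First I would establish the two stated limiting statements. For the sampling distribution of $t$, note that under $H_{0 \T{N}}(C,x)$ we have $C\bar{Y}=x$, so $x - C\hat{\bar{Y}} = -C(\hat{\bar{Y}}-\bar{Y})$. By Proposition \ref{t:Ding5}, $N^{1/2}(\hat{\bar{Y}}-\bar{Y}) \ind \cl{N}(0_J, V)$, hence $N^{1/2}(x-C\hat{\bar{Y}}) \ind \cl{N}(0, CVC^\transp)$. Meanwhile, by \eqref{e:Dob} and Slutsky, the denominator $(C\hat{D}C^\transp)^{1/2}$ converges in probability to $(CDC^\transp)^{1/2}$. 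Combining these gives $t \ind \cl{N}(0, a)$ with $a = CVC^\transp / CDC^\transp$. Since $V = D - S \preceq D$ with $S\succeq 0$, the matrix inequality forces $0 \le CVC^\transp \le CDC^\transp$, so $a \in [0,1]$. This mirrors exactly the $m=1$ scalar version of the argument behind Theorem \ref{t:CYx}, where $X^2 = t^2$ converges to $a\,\xi_1^2$.

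Next I would handle the randomization distribution. Here Assumption \ref{AsuB} is in force and $\pi \sim \Unif(\Pi_N)$. The claim $t_\pi|W \ind \cl{N}(0,1)$ a.s.\ is the scalar analogue of the second half of Theorem \ref{t:CYx}, which states $X_\pi^2|W \ind \chi_m^2$ a.s.; for $m=1$ this is $X_\pi^2|W \ind \chi_1^2$, i.e., $(t_\pi|W)^2 \ind \xi_1^2$. To upgrade the squared statement to the signed one I would track the sign through the same argument: under the imputation FRT-\ref{step::impute} and permutation of treatment labels, $N^{1/2}(x - C\hat{\bar{Y}}_\pi)$ has an asymptotically symmetric Gaussian randomization law centered at zero (the imputed Science Table satisfies strict additivity, so the permutation-induced fluctuations of $C\hat{\bar{Y}}_\pi$ are symmetric about their conditional mean), and the studentizing denominator self-normalizes it to unit variance. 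This yields $t_\pi|W \ind \cl{N}(0,1)$ a.s., consistent with the pivotal limit already asserted for the two-sided case.

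With both limits in hand, the stochastic dominance for the one-sided test follows. Because $a \le 1$, the limiting sampling law $\cl{N}(0,a)$ is stochastically dominated by the limiting randomization law $\cl{N}(0,1)$ in the upper tail for nonnegative values: for every $c \ge 0$, $\bb{P}\{\cl{N}(0,a) \ge c\} \le \bb{P}\{\cl{N}(0,1) \ge c\}$, which gives $t_+ \leq_\T{st} t_{\pi,+}|W$ asymptotically since truncation at $0$ preserves this ordering. Applying Proposition \ref{p:suitable} to the proper statistic $t_+$ then delivers asymptotic conservative control of type I error under $\tilde{H}_{0 \T{N}}(C,x)$ for almost all sequences of $W$. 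Finally I would note the monotonicity argument that extends control from the boundary $C\bar{Y}=x$ to the full inequality region $C\bar{Y} \ge x$: increasing $C\bar{Y}$ only shifts the sampling distribution of $t = N^{1/2}(x-C\hat{\bar{Y}})/(C\hat{D}C^\transp)^{1/2}$ further toward $-\infty$, so the rejection probability at any point of $\tilde{H}_{0\T{N}}(C,x)$ is bounded above by its value at the boundary.

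I expect the main obstacle to be the sign-tracking in the randomization distribution: Theorem \ref{t:CYx} is phrased for the squared (two-sided) quantity $X_\pi^2|W$, so the one-sided conclusion requires verifying that the signed statistic $t_\pi|W$ is asymptotically symmetric and not merely that its square is $\chi_1^2$. The symmetry should follow from the structure of the imputed potential outcomes under strict additivity together with the exchangeability of treatment labels under $\pi$, but this is the step where the signed refinement genuinely goes beyond what Theorem \ref{t:CYx} supplies directly.
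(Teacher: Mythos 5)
Your proposal is correct and follows essentially the same route as the paper: the finite-population CLT plus Slutsky for the sampling limit $\cl{N}(0,a)$ with $a = CVC^\transp/CDC^\transp \in [0,1]$, the CLT applied to the imputed Science Table for the $\cl{N}(0,1)$ randomization limit, stochastic dominance of $t_+$ combined with Proposition \ref{p:suitable}, and a monotonicity argument reducing the composite inequality null to its boundary. The one obstacle you flag---upgrading the squared randomization limit to the signed one---is not actually an issue, because the paper's proof of Theorem \ref{t:CYx} already establishes the Gaussian limit of the linear statistic $(N/s^*)^{1/2}(C\hat{\bar{Y}}_\pi - x)$ as an intermediate step, and the corollary's proof simply cites that, so no separate symmetry argument is required.
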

When $C \in \bb{R}^{m \times J}$ and $x \in \bb{R}^m$ for $m>1$, we can interpret \eqref{e:HNx-ineq} as component-wise inequalities. Neither $X^2$ nor $t_+$ are acceptable when $m>1$. An elementary workaround is to test each component using $t_+$ and apply a Bonferroni correction.

\subsection{Cluster-Randomized Experiments}
In many applied settings, the $N$ units are partitioned into $L$ clusters (e.g., classrooms in educational studies, villages in public health studies). All units belonging to a cluster must receive the same treatment. A cluster-randomized experiment assigns treatments to clusters, i.e. it is a CRE treating clusters as units.
For $l=\ot{L}$, let $\breve{W}_l \in \{ \ot{J}\}$ represent the treatment that cluster $l$ receives, and define the indicator $\breve{W}_l(j)=1( \breve{W}_l=j)$. There are $L!/ \prod_{j=1}^JL_j!$ possible realizations of $( \breve{W}_1, \ldots, \breve{W}_L )$. 
The mechanism of treatment assignment to clusters is identical to that to individuals in a CRE.

\cite{MiddletonCl15} stressed that we cannot implement the same analysis as if we had a CRE on the $N$ units. For instance, $\hat{\bar{Y}}(j)$ is no longer an unbiased estimator for $\bar{Y}(j)$ if the cluster sizes vary. 
Both \cite{MiddletonCl15} and \cite{DingCLT} advised a CRE-like analysis. Let $X_i \in \{ \ot{L}\}$ represent the cluster membership of unit $i$. Define cluster level aggregated potential outcomes $\{ A_l(j):l=\ot{L},j=\ot{J}\}$, where $A_l(j)= \sum_{i=1}^N1(X_i=l)Y_i(j)$.
Define $A_l=(A_l(1), \ldots,A_l(J))^\transp$, $A_l^\T{obs}$, $\bar{A}=( \bar{A}(1), \ldots, \bar{A}(J))^\transp$, $\hat{\bar{A}}=( \hat{\bar{A}}(1), \ldots, \hat{\bar{A}}(J))^\transp$ to align with our previous notation for a CRE.
Aggregated potential outcomes resolve the problem of unbiased estimation of $\bar{Y}$: $\bb{E} (L \hat{\bar{A}}/N) = L \bar{A}/N = \bar{Y}$. Define $\hat{S}_A(j,j)= \sum_{l=1}^L \breve{W}_l(j) \{ A_l^\T{obs}- \hat{\bar{A}}(j) \}^2/(L_j-1)$ and $\hat{D}_A=L \cdot \diag\{ \hat{S}_A(1,1)/L_1, \ldots, \hat{S}_A(J,J)/L_J \}$. We revise the $X^2$ statistic as
\[ X_A^2=L(C \hat{\bar{A}}-Nx/L)^\transp (C \hat{D}_AC^\transp )^{-1}(C \hat{\bar{A}}-Nx/L). \]
Then Theorem \ref{t:CYx} tells us that $X_A^2$ is proper for $H_{0 \T{N}}(C,x)$ as $L \to \infty$ if Assumption \ref{AsuB} holds for the aggregated potential outcomes.

\section{Extensions}\label{s:ext}
\subsection{Stratified Randomized Experiments}
We extend previous results to the stratified randomized experiment (SRE), also called the randomized block design. The overall setup from the CRE still applies, but now for each unit we also observe an associated covariate $X_i \in \{ \ot{H}\}$. Thus, our data are $\{ Y_i^\T{obs},X_i,W_i:i= \ot{N}\}$. 
The treatment does not affect this covariate. The $W_i$'s remain the sole source of randomness. For $h=\ot{H}$, the $h$-th stratum consists of all units $i$ where $X_i=h$, whose size is $N_{[h]}= \sum_{i=1}^N1(X_i = h)$ and proportion is $\omega_{[h]}=N_{[h]}/N$.
For $h=\ot{H}$ and $j=\ot{J}$, the experimenter predetermines the sample sizes $N_{[h]j}= \sum_{i=1}^N1(X_i=h,W_i=j) \geq 2$. In a SRE, we assign treatments within each stratum just as we did in a CRE, and independently among different strata \citep{CausalImbens}.

To define within-stratum means and covariances, we mirror previous notation. For $h=\ot{H}$, the mean vector is $\bar{Y}_{[h]}\in \bb{R}^J$, which has $j$-th entry $\bar{Y}_{[h]}(j)= \sum_{i=1}^N1(X_i=h)Y_i(j)/N_{[h]}$.
The covariance $S_{[h]}$ has $(j,k)$-th entry $S_{[h]}(j,k)= \sum_{i=1}^N1(X_i= h) \{ Y_i(j)- \bar{Y}_{[h]}(j) \} \{ Y_i(k)- \bar{Y}_{[h]}(k) \} /(N_{[h]}-1)$. We impose Assumption \ref{AsuB} on all strata.
\begin{assume}\label{Asu-bl}
For $h=\ot{H}$, (1) $\lmt{N}N_{[h]}/N= \omega_{[h]}\geq 0$ and $\lmt{N}N_{[h]j}/N_{[h]}=p_{[h]j}>0$; (2) the sequences $( \bar{Y}_{[h]})$ and $(S_{[h]})$ converge to $\bar{Y}_{[h] \infty}$ and $S_{[h] \infty}$;
(3) the matrix $S_{[h] \infty}$ has strictly positive main diagonal entries; (4) there exists an $L< \infty$ such that $\sum_{i=1}^N1(X_i=h) \{ Y_i(j)- \bar{Y}_{[h]}(j) \}^4/N_{[h]}\leq L$ for all $N$ and $j=\ot{J}$.
\end{assume}
We do not distinguish between Assumptions \ref{AsuA} and \ref{AsuB} in the SRE for convenience. Tolerating a tiny abuse of notation, $\omega_{[h]}$ stands for both $N_{[h]}/N$ and its limit.
The sample mean vector is $\hat{\bar{Y}}_{[h]}\in \bb{R}^J$, which has $j$-th entry $\hat{\bar{Y}}_{[h]}(j)= \sum_{i=1}^N1(X_i=h,W_i=j)Y_i^\T{obs}/N_{[h]j}$.
The sample variance is $\hat{S}_{[h]}(j,j)= \sum_{i=1}^N1(X_i=h,W_i=j) \{ Y_i^\T{obs}- \hat{\bar{Y}}_{[h]}(j) \}^2/(N_{[h]j}-1)$.
Under Assumption \ref{Asu-bl}, we have from Proposition \ref{t:Ding5} that, inside stratum $h$, the standardized stratum-wise sample mean $N_{[h]}^{1/2}( \hat{\bar{Y}}_{[h]}-  \bar{Y}_{[h]})$ is asymptotically Normal with mean $0$ and a covariance we denote $V_{[h]}$. A conservative estimator for $V_{[h]}$ is 
\[ \hat{D}_{[h]}=N_{[h]}\cdot \diag\{ \hat{S}_{[h]}(1,1)/N_{[h]1}, \ldots, \hat{S}_{[h]}(J,J)/N_{[h]J}\} . \]
An unbiased estimator for $\bar{Y}$ is $\breve{\bar{Y}}= \sum_{h=1}^H \omega_{[h]}\hat{\bar{Y}}_{[h]}$. Owing to the independence of treatment assignment across different strata, $N^{1/2}( \breve{\bar{Y}}- \bar{Y})$ is asymptotically Normal with mean $0$ and covariance $\sum_{h=1}^H \omega_{[h]}V_{[h]}$. A conservative variance estimator is $\breve{D}= \sum_{h=1}^H \omega_{[h]}\hat{D}_{[h]}$.

We are now positioned to make an adjustment to $X^2$ that is proper when used with the FRT in a SRE:
\begin{align}\label{e:blX2}
X^2 &=N(C \breve{\bar{Y}}-x)^\transp (C \breve{D}C^\transp)^{-1}(C \breve{\bar{Y}}-x) \nonumber \\
&=N \left( C \sum_{h=1}^H \omega_{[h]}\hat{\bar{Y}}_{[h]}-x \right)^\transp \left( \sum_{h=1}^H \omega_{[h]}C \hat{D}_{[h]}C^\transp \right)^{-1}\left( C \sum_{h=1}^H \omega_{[h]}\hat{\bar{Y}}_{[h]}-x \right)
\end{align}
The special case $h=1$ and \eqref{e:X2} agree, so the same notation $X^2$ for this statistic is logical. Besides the form of the test statistic, the FRT entails two more modifications in the case of an SRE. First, we impute the potential outcomes stratum by stratum under the sharp null hypothesis
\[ H_{0 \T{F}}(C,x_{[1]}, \ldots,x_{[H]}, \tilde{C}, \tilde{x}_{[1]}, \ldots, \tilde{x}_{[H]}): \vect{C}{\tilde{C}}Y_i^*= \vect{x_{[h]}}{\tilde{x}_{[h]}}, \T{ whenever }X_i=h. \]
Since we still aim to test \eqref{e:HNx}, the above null hypothesis must satisfy $\sum_{h=1}^H \omega_{[h]}x_{[h]}=x$. If $x=0_m$, it is natural to choose $x_{[h]}=x$ and $\tilde{x}_{[h]}=0_{J-m-1}$ for each $h$.
Under the above sharp null hypothesis, we can impute all potential outcomes: for units in stratum $h$,
\[ Y_i^*= \vecth{Y_i^*(1)}{\vdots}{Y_i^*(J)}=z_{[h]}+(Y_i^\T{obs}-z_{[h],W_i})1_J,
\T{ where }z_{[h]}= \vecth{z_{[h],1}}{\vdots}{z_{[h],J}}= \vecth{C}{\tilde{C}}{1_J^\transp}^{-1}\vecth{x_{[h]}}{\tilde{x}_{[h]}}{0}, \]
or, equivalently, $Y_i^*(j)=Y_i^\T{obs}+z_{[h],j}-z_{[h],W_i}$. Second, we ought to permute the treatment indicators within strata, independently across strata. 
Let $\Pi_{N, \T{S}} \subseteq \Pi_N$ be all such permutations from a SRE. The $p$-value is $\left( \prod_{h=1}^HN_{[h]}! \right)^{-1}\sum_{\pi \in \Pi_{N, \T{S}}}1(X_\pi^2 \geq X^2)$. 
\begin{thm}\label{t:CYx-bl}
In a SRE, suppose Assumption \ref{Asu-bl} holds. Under $H_{0 \T{N}}(C,x)$, $X^2 \ind \sum_{j=1}^ma_j \xi_j^2$, where each $a_j \in [0,1]$. If $\pi \sim \Unif( \Pi_{N, \T{S}})$, then $X_\pi^2|W \ind \chi_m^2$ a.s.
In particular, the FRT with test statistic $X^2$ can asymptotically control type I error because the condition of Proposition \ref{p:suitable} holds. 
\end{thm}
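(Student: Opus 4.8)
The plan is to reduce Theorem \ref{t:CYx-bl} to the single-stratum result of Theorem \ref{t:CYx} by treating the stratified design as an aggregation of $H$ independent completely randomized experiments and then invoking the already-established CRE asymptotics within each stratum. The two convergence claims---the sampling distribution $X^2 \ind \sum_{j=1}^m a_j \xi_j^2$ and the randomization distribution $X_\pi^2 | W \ind \chi_m^2$ a.s.---will be handled in parallel, since both ultimately rest on a central limit theorem for $\breve{\bar Y}$ (or its permuted analogue) together with consistency of the variance estimator $\breve D$.

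First I would establish the \emph{sampling distribution} limit. The key structural fact is that, under Assumption \ref{Asu-bl}, the stratum-wise estimators $\hat{\bar Y}_{[1]}, \ldots, \hat{\bar Y}_{[H]}$ are mutually independent because treatment is assigned independently across strata. Applying Proposition \ref{t:Ding5} inside each stratum gives $N_{[h]}^{1/2}(\hat{\bar Y}_{[h]} - \bar Y_{[h]}) \ind \mathcal{N}(0_J, V_{[h]})$, and summing the independent contributions with weights $\omega_{[h]}$ yields $N^{1/2}(\breve{\bar Y} - \bar Y) \ind \mathcal{N}(0_J, \sum_h \omega_{[h]} V_{[h]})$, as the excerpt already asserts. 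Under $H_{0\T{N}}(C,x)$ we have $C\bar Y = x$, so $N^{1/2}(C\breve{\bar Y} - x) \ind \mathcal{N}(0_m, C(\sum_h \omega_{[h]} V_{[h]})C^\transp)$. Meanwhile Proposition \ref{p:inP} applied stratum-wise gives $\hat D_{[h]} \inP D_{[h]} := \diag\{S_{[h]}(j,j)/p_{[h]j}\}$, hence $\breve D \inP \sum_h \omega_{[h]} D_{[h]}$. The limiting statistic is then a quadratic form $Z^\transp (C \bar D C^\transp)^{-1} Z$ with $Z \sim \mathcal{N}(0_m, C \bar V C^\transp)$, where $\bar V = \sum_h \omega_{[h]} V_{[h]}$ and $\bar D = \sum_h \omega_{[h]} D_{[h]}$. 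The coefficients $a_j$ are the eigenvalues of $(C\bar D C^\transp)^{-1/2} C\bar V C^\transp (C\bar D C^\transp)^{-1/2}$, and a Slutsky argument delivers $X^2 \ind \sum_j a_j \xi_j^2$. That each $a_j \in [0,1]$ follows because $V_{[h]} = D_{[h]} - S_{[h]} \preceq D_{[h]}$ in every stratum (the same decomposition used after Proposition \ref{t:Ding5}), so $\bar V \preceq \bar D$, forcing $C\bar V C^\transp \preceq C\bar D C^\transp$ and hence all eigenvalues at most one; positivity is immediate from $\bar V \succeq 0$.

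Next I would treat the \emph{randomization distribution}. Conditioning on the observed data and permuting within strata, the imputed outcomes satisfy strict within-stratum additivity under the sharp null used for imputation, so the permuted stratum estimators behave exactly like sample means from a finite population with equal variances across treatment arms within each stratum. The payoff is that the stratum-wise permutation covariance matches the conservative estimator asymptotically: the same computation that produced $X_\pi^2|W \ind \chi_m^2$ in Theorem \ref{t:CYx} shows that within stratum $h$ the permuted standardized mean converges to a Gaussian whose covariance coincides with the probability limit of $\hat D_{[h]}$. Summing independent stratum contributions (the permutations are independent across strata by construction of $\Pi_{N,\T{S}}$) gives a limiting Gaussian $Z_\pi \sim \mathcal{N}(0_m, C\bar D C^\transp)$ whose covariance exactly matches the quadratic-form normalizer $C\breve D C^\transp \to C\bar D C^\transp$, collapsing the quadratic form to a genuine $\chi_m^2$. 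The almost-sure qualifier is inherited from the stratum-wise a.s.\ statements in Theorem \ref{t:CYx}, using that a finite union of probability-one events (over $h=\ot{H}$) still has probability one. The final conclusion then follows from Proposition \ref{p:suitable}, since $\sum_j a_j \xi_j^2 \leq_\T{st} \sum_j \xi_j^2 \eqd \chi_m^2$ whenever all $a_j \in [0,1]$.

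The main obstacle I anticipate is \textbf{handling the permutation CLT with the correct conditional covariance across strata simultaneously}, rather than the sampling-distribution CLT, which is a routine sum of independent pieces. Specifically, one must verify that the a.s.\ convergence of each stratum's randomization distribution, which in Theorem \ref{t:CYx} holds along almost every sequence of $W$, can be combined into a single a.s.\ statement for the joint within-strata permutation scheme; this requires checking that the finite-population moment conditions of Assumption \ref{Asu-bl}(4) transfer to the imputed potential outcomes stratum by stratum and that the stratum sizes $N_{[h]}$ all grow proportionally to $N$ (guaranteed by $\omega_{[h]} > 0$, or else a stratum with $\omega_{[h]} = 0$ must be shown to contribute negligibly). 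The bookkeeping---tracking that $\breve D$, which normalizes $X_\pi^2$, converges to the \emph{same} $C\bar D C^\transp$ that appears as the covariance of the permutation limit, so the two cancel into an exact $\chi_m^2$---is where the argument must be executed carefully, but it is conceptually a direct lift of the single-stratum computation.
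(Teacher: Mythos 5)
Your overall route is the same as the paper's: apply Proposition \ref{t:Ding5} and Proposition \ref{p:inP} stratum by stratum, use independence of assignments (and of within-stratum permutations) across strata to sum the Gaussian limits, and invoke Lemma-type arguments (Slutsky plus the eigenvalue bound from $V_{[h]} \preceq D_{[h]}$) for the quadratic form; the paper likewise verifies the regularity conditions for the imputed outcomes stratum by stratum (via its Lemmas on $\max_{i,j}\{Y_i^*(j)-\bar{Y}^*_{[h]}(j)\}^2/N_{[h]} \to 0$ and the almost-sure convergence of $\hat{\bar{Y}}_{[h]}$, $\hat{S}_{[h]}(j,j)$, hence of $s_{[h]}^*$) before fixing a good sequence of $W$ and running the permutation CLT. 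So the structure is correct and matches the paper.

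One point in your randomization-distribution step is misstated and worth correcting. You identify both the covariance of the permutation Gaussian limit and the limit of the quadratic-form normalizer with $C\bar{D}C^\transp$, where $\bar{D}=\sum_h \omega_{[h]}D_{[h]}$ is the probability limit of the \emph{original} $\breve{D}$. That is not what happens. The normalizer inside $X_\pi^2$ is $\breve{D}_\pi=\sum_h \omega_{[h]}\hat{D}_{[h],\pi}$, recomputed from the permuted data, and because the imputed science table is strictly additive within each stratum one gets $\hat{D}_{[h],\pi} \inP s_{[h]}^*P^{-1}$, where $s_{[h]}^*$ is the common variance of the imputed outcomes in stratum $h$ (a mixture of the $\hat{S}_{[h]}(j,j)$ plus bias terms, not equal to any $S_{[h]}(j,j)$ in general). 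The permutation CLT gives $N^{1/2}(C\breve{\bar{Y}}_\pi - x) \ind \cl{N}\big(0_m, \sum_h \omega_{[h]} s_{[h]}^* C P^{-1} C^\transp\big)$, and it is this quantity --- not $C\bar{D}C^\transp$ --- that cancels against $\plim C\breve{D}_\pi C^\transp$ to yield $\chi_m^2$. The distinction is not cosmetic: the exact matching occurs precisely because additivity of the imputed outcomes makes the Neyman-style estimator asymptotically exact under permutation, whereas on the original data $\breve{D}$ only dominates the true covariance; if you normalized $X_\pi^2$ by the original $\breve{D}$ the limit would not be pivotal. With that substitution your argument goes through exactly as in the paper. (Your remark about strata with $\omega_{[h]}=0$ contributing negligibly is a fair observation; the paper's proof treats all $\omega_{[h]}$ as the limits in Assumption \ref{Asu-bl} and does not dwell on this degenerate case.)
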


Even if the original experiment is a CRE, if a discrete covariate $X$ is available, we can condition on the number of treated and control units landing in each stratum. Then the treatment assignment is identical to a SRE.
Therefore, in a CRE, we can still permute the treatment indicators within each stratum of $X$. This plan is billed as a conditional randomization test.
\citet{zheng2008multi} and \citet{hennessy2016conditional} perceived that conditional randomization tests typically enhance the power as long as the covariates are predictive of the outcomes.

We have focused on the SRE with large strata, i.e., $N_{[h]}\to \infty$ for $h \in\ot{H}$, and $H$ is fixed. Our theory does not encapsulate SREs with many small strata, i.e., the $N_{[h]}$'s are bounded but $H \to \infty$ \citep{fogarty2018mitigating}.
Although we conjecture that similar results hold in such cases, we defer technical details to future research.

\subsection{Multiple Outcomes and Multiple Testings}
We can lengthen the reach of our framework to the case where all potential outcomes $Y_i(j) \in \bb{R}^d$ are vectors. Define $\bar{Y}(j)$ and $ \hat{\bar{Y}}(j) \in \bb{R}^d$ as before. It is convenient to gather these into long vectors
\[ 
\bar{Y}= \vecth{\bar{Y}(1)}{\vdots}{\bar{Y}(J)}\in \bb{R}^{dJ}, \qquad
\hat{\bar{Y}}= \vecth{\hat{\bar{Y}}(1)}{\vdots}{\hat{\bar{Y}}(J)}\in \bb{R}^{dJ}. 
\] 
The covariances $S(j,k)= \sum_{i=1}^N \{ Y_i(j)- \bar{Y}(j) \} \{ Y_i(k)- \bar{Y}(k) \}^\transp /(N-1)$ and $\hat{S}(j,j)= \sum_{i=1}^NW_i(j) \op{\{ Y_i^\T{obs}- \hat{\bar{Y}}(j) \}}/(N_j-1)$ are now matrices, for $j,k=\ot{J}$.
The overall covariance matrix $S \in \bb{R}^{dJ \times dJ}$ has $(j,k)$-th block $S(j,k)$. Assume $S(j,j)$ and $\hat{S}(j,j)$ are both positive definite for all realizations of $W$. 

Let $Y_i(j)_1, \ldots,Y_i(j)_d$ be the $d$ components of the potential outcomes $Y_i(j)$ for all $i$ and $j$. We wish to test the weak null hypothesis
\begin{equation}\label{e:HNx-vec}
H_{0 \T{N}}(C_1, \ldots,C_d,x_1, \ldots,x_d):
C_1 \vecth{\bar{Y}(1)_1}{\vdots}{\bar{Y}(J)_1}=x_1, \ldots,
C_d \vecth{\bar{Y}(1)_d}{\vdots}{\bar{Y}(J)_d}=x_d,
\end{equation} 
where $C_1, \ldots,C_d$ are contrast matrices that have $J$ columns and possibly varying row counts. 
We can condense notation via the Kronecker product: define
\[ 
C= \vecth{C_1 \otimes e_1^\transp}{\vdots}{C_d \otimes e_d^\transp}, \qquad x= \vecth{x_1}{\vdots}{x_d}, 
\]
where $\{ e_1, \ldots,e_d \}$ are the standard basis vectors of $\bb{R}^d$. We can then write \eqref{e:HNx-vec} in the form $H_{0 \T{N}}(C,x):C \bar{Y}=x$. It looks exactly like \eqref{e:HNx}, but $C$ cannot be an arbitrary contrast matrix.
\begin{example}
We lay out some possible contrast matrices when $J=3$ and $d=2$. The hypothesis $H_0: \bar{Y}(1)= \bar{Y}(2)= \bar{Y}(3)$ has the contrast matrix
\[ \begin{pmatrix}
1 & 0 & -1 & 0 & 0 & 0 \\
1 & 0 & 0 & 0 & -1 & 0 \\
0 & 1 & 0 & -1 & 0 & 0 \\
0 & 1 & 0 & 0 & 0 & -1
\end{pmatrix}= \vect{C_1 \otimes e_1^\transp}{C_1 \otimes e_2^\transp}, \T{ where }C_1= \begin{pmatrix}
1 & -1 & 0 \\
1 & 0 & -1 \\
\end{pmatrix}\]
Here, we test the same hypothesis entry by entry, and an equivalent contrast matrix is $C_1 \otimes I_2$.  
We can also test different hypotheses entry by entry, for instance $H_0: \bar{Y}(1)_1= \bar{Y}(2)_1$, $\bar{Y}(2)_2= \bar{Y}(3)_2$. This hypothesis has the contrast matrix
\[ \begin{pmatrix}
1 & 0 & -1 & 0 & 0 & 0 \\
0 & 0 & 0 & 1 & 0 & -1 \\
\end{pmatrix}= \vect{C_1 \otimes e_1^\transp}{C_2 \otimes e_2^\transp}, \T{ where }C_1=(1,-1,0) \T{ and }C_2= (0,1,-1). \qed \]
\end{example}
The potential outcomes framework cannot withstand comparison of different entries under different treatments, for instance $H_0: \bar{Y}(1)_1= \bar{Y}(2)_2$. Null hypotheses like these do not have a clear causal interpretation here. 
Under i.i.d. sampling, \cite{LongFriedr17} allow for a general contrast matrix $C$, and even for the length of $Y_i(j)$ to depend on treatment $j$. We constrain the contrast matrices $C$ that we accept, as we have just detailed.

Under i.i.d. sampling and vector potential outcomes, \cite{ManovaRomano16} address the two-sample problem with permutation tests. \citet{srivastava2013tests}, \citet{Kon15} and \citet{MatsFriedr18} test general linear hypotheses with bootstrap methods.
We will use the FRT for \eqref{e:HNx-vec}. It is not a sharp null hypothesis, so we concoct one:
\[ H_{0 \T{F}}: \vect{C_1}{\tilde{C}_1}\vecth{Y_i(1)_1}{\vdots}{Y_i(J)_1}= \vect{x_1}{\tilde{x}_1}, \ldots,
\vect{C_d}{\tilde{C}_d}\vecth{Y_i(1)_d}{\vdots}{Y_i(J)_d}= \vect{x_d}{\tilde{x}_d}, \T{ for }i=\ot{N}, \]
where the matrices $(C_1^\transp, \tilde{C}_1^\transp,1_J)$ through $(C_d^\transp, \tilde{C}_d^\transp,1_J)$ are invertible. We construct the $\tilde{C}$'s and $\tilde{x}$'s for each component of the outcome in the same way as the scalar case.
In the hypothesis $H_{0 \T{F}}$, our notation does not reflect its dependence on the $C$'s, $\tilde{C}$'s, $x$'s and $\tilde{x}$'s. We impute potential outcomes as if $H_{0 \T{F}}$ were the reality. For the first component:
\begin{equation}\label{e:impute-vec}
\vecth{Y_i^*(1)_1}{\vdots}{Y_i^*(J)_1}=z_1+(Y_{i,1}^\T{obs}-z_{1W_i})1_J, \T{ where }
z_1= \vecth{z_{11}}{\vdots}{z_{1J}}= \vecth{C_1}{\tilde{C}_1}{1_J^\transp}^{-1}\vecth{x_1}{\tilde{x}_1}{0}
\end{equation}
and similarly for the second through the $d$-th entries, replacing all subscripts $1$ by $2, \ldots,d$.

For vector potential outcomes, we tweak $X^2$ in \eqref{e:X2}:
\[ X^2 =N(C \hat{\bar{Y}}-x)^\transp (C \hat{D}C^\transp )^{-1}(C \hat{\bar{Y}}-x), \]
where the block diagonal matrix $\hat{D}=N \cdot \diag\{ \hat{S}(1,1)/N_1, \ldots,\hat{S}(J,J)/N_J \}$ is an asymptotically conservative estimator of $N \cdot \Cov( \hat{\bar{Y}})$. This is in sync with \eqref{e:Dob}. 
The FRT with $X^2$ can control the asymptotic type I error under \eqref{e:HNx-vec}. We first give the asymptotic requirements and then adapt Theorem \ref{t:CYx} to the vector case. Let $| \cdot |$ be the Euclidean norm, which reduces to the usual absolute value for scalars.
\begin{assume}\label{AsuA-vec}
The sequence $(N_j/N)$ converges to $p_j \in (0,1)$ for all $j=\ot{J}$. The sequences $( \bar{Y}_N)$ and $(S_N)$ converge to $\bar{Y}_\infty$ and $S_\infty$, where $| \bar{Y}_\infty |< \infty$, $S_\infty$ is positive semi-definite, and $S_\infty(j,j)$ is positive definite for all $j=\ot{J}$.
Further, $\lmt{N}\max_{j=\ot{J}}\max_{i=\ot{N}}|Y_i(j)- \bar{Y}(j)|^2/N=0$.
\end{assume}
\begin{assume}\label{AsuB-vec}
Same as Assumption \ref{AsuA-vec} except the last sentence is replaced by: Further, there exists an $L< \infty$ such that $\sum_{i=1}^N|Y_i(j)- \bar{Y}(j)|^4/N \leq L$ for all $j=\ot{J}$ and $N \geq (d+1)J$.
\end{assume}
\begin{prop}\label{p:BimpA-vec}
Assumption \ref{AsuB-vec} implies Assumption \ref{AsuA-vec}.
\end{prop}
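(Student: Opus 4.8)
The plan is to observe that Assumptions \ref{AsuA-vec} and \ref{AsuB-vec} share their first two sentences verbatim---the convergence of $(N_j/N)$ to $p_j \in (0,1)$, and the convergence of $(\bar{Y}_N)$ and $(S_N)$ to limits with the stated positive-(semi)definiteness properties. These hold automatically under Assumption \ref{AsuB-vec}. Hence the entire task reduces to deriving the last sentence of Assumption \ref{AsuA-vec}, namely $\lmt{N} \max_{j=\ot{J}} \max_{i=\ot{N}} |Y_i(j) - \bar{Y}(j)|^2/N = 0$, from the fourth-moment bound $\sum_{i=1}^N |Y_i(j) - \bar{Y}(j)|^4/N \leq L$ of Assumption \ref{AsuB-vec}.

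The key step is the elementary fact that the maximum of finitely many nonnegative numbers is at most their sum. Fixing $j$ and writing $d_i = |Y_i(j) - \bar{Y}(j)|$, I would bound
\[
\max_{i=\ot{N}} d_i^4 \leq \sum_{i=1}^N d_i^4 \leq LN,
\]
where the last inequality is the fourth-moment bound. Taking square roots gives $\max_i d_i^2 \leq (LN)^{1/2}$, so that
\[
\frac{\max_{i=\ot{N}} d_i^2}{N} \leq \frac{(LN)^{1/2}}{N} = \left( \frac{L}{N} \right)^{1/2} \to 0.
\]
Since there are only finitely many treatment levels $j = \ot{J}$, the maximum over $j$ of these quantities is itself bounded by $(L/N)^{1/2}$, which vanishes as $N \to \infty$. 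This yields the desired max condition and completes the implication.

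I do not anticipate a genuine obstacle here: the argument mirrors the scalar Proposition \ref{p:BimpA} essentially line for line, and the only adjustment is that $|\cdot|$ now denotes the Euclidean norm rather than the absolute value. The max-versus-sum bound is indifferent to this change, so no new ingredient---no coordinate-wise decomposition and no appeal to norm equivalence on $\bb{R}^d$---is required. The mild threshold $N \geq (d+1)J$ appearing in Assumption \ref{AsuB-vec} is likewise immaterial, since the conclusion of Assumption \ref{AsuA-vec} is purely asymptotic and unaffected by the finitely many small values of $N$.
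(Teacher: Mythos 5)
Your proof is correct and follows essentially the same route as the paper, which omits the vector case precisely because it is the scalar argument of Proposition \ref{p:BimpA} (bound the max of the fourth powers by their sum, invoke the moment bound, take square roots) with $|\cdot|$ read as the Euclidean norm. Nothing further is needed.
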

\begin{thm}\label{t:MPO}
If Assumption \ref{AsuA-vec} holds, then under $H_{0 \T{N}}(C,x)$, $X^2 \ind \sum_{j=1}^ma_j \xi_j^2$, where each $a_j \in [0,1]$.
If Assumption \ref{AsuB-vec} holds and $\pi \sim \Unif( \Pi_N)$, then $X_\pi^2|W \ind \chi_m^2$ a.s. In particular, the FRT with test statistic $X^2$ can asymptotically control type I error a.s.
\end{thm}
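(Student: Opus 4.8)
The plan is to reduce the vector problem to the scalar template of Theorem~\ref{t:CYx} by stacking the $d$-dimensional outcomes into a single $dJ$-dimensional object, since $X^2$ keeps its algebraic form once $\hat{\bar{Y}}\in\bb{R}^{dJ}$, $\hat{D}$ is read as block diagonal, and $C$ carries its Kronecker structure. For the sampling distribution, I would first invoke the multivariate finite-population central limit theorem of \cite{DingCLT}, applied to the stacked potential outcome vectors under Assumption~\ref{AsuA-vec}, to obtain $N^{1/2}(\hat{\bar{Y}}-\bar{Y})\ind\cl{N}(0_{dJ},V)$, where $V$ has diagonal blocks $\{(1-p_j)/p_j\}S(j,j)$ and off-diagonal blocks $-S(j,k)$. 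Together with the block analog of Proposition~\ref{p:inP}, namely $\hat{D}\inP D=\diag\{S(1,1)/p_1,\ldots,S(J,J)/p_J\}$, Slutsky's theorem and the continuous mapping theorem give, under $H_{0\T{N}}(C,x)$ (so that $C\hat{\bar{Y}}-x=C(\hat{\bar{Y}}-\bar{Y})$),
\[
X^2\ind Z^\transp(CDC^\transp)^{-1}Z,\qquad Z\sim\cl{N}(0_m,CVC^\transp).
\]

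Second, I would diagonalize this quadratic form to write the limit as $\sum_{j=1}^m a_j\xi_j^2$ with $a_j=\lambda_j\big((CDC^\transp)^{-1}CVC^\transp\big)$. The claim $a_j\in[0,1]$ rests on two facts. Since $C$ has full row rank (guaranteed by the admissible Kronecker construction) and each block $S(j,j)$ of $D$ is positive definite, $CDC^\transp\succ0$, so the weights are well defined and nonnegative. For the upper bound, the direct block computation yields $D-V=S$, the full $dJ\times dJ$ population covariance matrix, which is positive semidefinite as a genuine covariance; hence $V\preceq D$, $CVC^\transp\preceq CDC^\transp$, and every $a_j\le1$.

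Third, for the randomization distribution I would work with the imputed Science Table, which satisfies strict additivity in every component by construction: $Y_i^*(j)_c-Y_i^*(k)_c=z_{cj}-z_{ck}$ does not depend on $i$. Consequently all covariance blocks of the imputed outcomes coincide, $S^*(j,k)=\Sigma$, so $S^*=(1_J1_J^\transp)\otimes\Sigma$ in the treatment-major ordering. Applying the finite-population CLT to the fresh CRE generated by $\pi\sim\Unif(\Pi_N)$ yields a limiting quadratic form with weights $a_j^*=\lambda_j\big((CD^*C^\transp)^{-1}CV^*C^\transp\big)$. The decisive simplification is $CS^*C^\transp=0$: each row of $C$ has the form $C_c\otimes e_c^\transp$ with $C_c1_J=0$, and the Kronecker mixed-product rule gives
\[
(C_c\otimes e_c^\transp)\big[(1_J1_J^\transp)\otimes\Sigma\big]=\big[C_c(1_J1_J^\transp)\big]\otimes\big[e_c^\transp\Sigma\big]=0 .
\]
Therefore $CV^*C^\transp=CD^*C^\transp$, all $a_j^*=1$, and $X_\pi^2\mid W\ind\chi_m^2$.

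Finally, I would combine the two limits: since $a_j\le1$, we have the stochastic dominance $\sum_{j=1}^m a_j\xi_j^2\leq_\T{st}\sum_{j=1}^m\xi_j^2=\chi_m^2$, so the asymptotic sampling distribution is dominated by the asymptotic randomization distribution, and Proposition~\ref{p:suitable} delivers asymptotic type I error control. The main obstacle is the almost-sure randomization CLT: establishing that, for almost all sequences of $W$, the imputed outcomes satisfy the Lindeberg/fourth-moment conditions needed to push the finite-population CLT through. This is exactly where the stronger Assumption~\ref{AsuB-vec} is used---one must show via a law-of-large-numbers argument over the randomness of $W$ that the within-group empirical second and fourth moments of the imputed outcomes converge a.s.\ to their population analogs (the common $\Sigma$ structure), mirroring the scalar proof of Theorem~\ref{t:CYx} but carried out blockwise, with the Cram\'er--Wold device reducing the vector CLT to its scalar projections.
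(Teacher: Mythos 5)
Your proposal is correct and follows essentially the same route as the paper's proof: the vector finite-population CLT plus $\hat D\inP V+S$ for the sampling limit, strict additivity of the imputed outcomes giving the $(1_J1_J^\transp)\otimes\Sigma$ covariance structure whose contribution is annihilated by the Kronecker form of $C$ (hence the pivotal $\chi^2_m$ randomization limit), and stochastic dominance via Proposition \ref{p:suitable}. The almost-sure verification you flag as the main obstacle is exactly what the paper's Lemmas \ref{l:asconv-vec} and \ref{l:maxto0-vec} supply, using the fourth-moment bound of Assumption \ref{AsuB-vec} on the \emph{original} outcomes to get a.s.\ convergence of the sample means and covariances that determine $\bar Y^*$ and $S^*(1,1)$.
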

Theorem \ref{t:MPO} puts in place a foundation for a single FRT for multiple outcomes. As done in \citet[][Section 4]{ManovaRomano16}, we can join Theorem \ref{t:MPO} and the closure procedure for multiple testings. We omit the details.
  
To conduct the FRT with $X^2$ at all, we require all realizations of $\hat{S}(j,j)$ to be invertible, for which it is necessary that $N_j \geq d+1$.
\cite{MatsFriedr18} instead tried $\tilde{X}^2 = N(C \hat{\bar{Y}}-x)^\transp (C \tilde{D}C^\transp )^{-1}(C \hat{\bar{Y}}-x)$ with a bootstrap, where $\tilde{D}$ is a diagonal matrix whose main diagonal is the same as $\hat{D}$. 
However, $\tilde{X}^2$ is not proper for the FRT because the asymptotic distribution of $\tilde{X}_\pi^2|W$ is not pivotal. So it is flawed for the same reason the Box type statistic $B$ in \eqref{e:BoxStat} is. We reserve FRTs with $d \to \infty$ for future research.

\section{Simulations}\label{s:sim}

\subsection{Type I Error Rates of FRTs with Different Statistics}

We perceive from previous sections that $X^2$ is proper, but $B$ and $F$ are not. As a complement to this asymptotic fact, simulations reveal their finite sample behavior. To drive this point, we repeat the simulations with varying sample sizes.
All the test statistics we brought up had other specific purposes in the literature. Thus, the simulations also serve to compare their efficacy with the FRT for testing weak null hypotheses. 

\subsubsection{Simulation Setup}
We decided on the ANOVA with $J=3$ and the $2^2$ Factorial  with $J=4$ setup, which we refer to as ``ANOVA'' and ``Factorial'' for short. The null hypotheses being tested, written in the form of \eqref{e:HNx}, are
\[ H_{0 \T{N}}: \begin{pmatrix}
1 & -1 & 0 \\
1 & 0 & -1 \end{pmatrix}\bar{Y} = 0\T{ for ANOVA, and }H_{0 \T{N}}: \begin{pmatrix}
-1 & -1 & 1 & 1 \\
-1 & 1 & -1 & 1 \end{pmatrix}\bar{Y} = 0 \T{ for } \T{Factorial.}\]
In words, the former tests for no effects of any treatments on average. The latter tests for no main effects of either of the two factors on average. Both setups shall have a balanced design $N_j=N/J$ for all $j$. We then gain from Proposition \ref{p:BeqF} that $B=F$.
Thus, a comparison of $X^2$ and $B$ suffices. In all cases, we compel $\bar{Y}(1)= \cdots= \bar{Y}(J)=0$, so the weak null hypothesis of no treatment effects on average holds.
We also compel force the covariance structure $S=uu^\transp$ on the potential outcomes. For the ANOVA case, $u^\transp =(u_1,u_2,u_3)=(1,2,3)$, and for the Factorial case, $u^\transp =(u_1,u_2,u_3,u_4)=(3,1,1,3)$.
We deliberately avoid any sharp null hypothesis being true by design. Otherwise, all test statistics would have correct type I error control.

Explicitly, we first generate $Y_i(1) \iid \cl{N}(0,1)$ for $i=\ot{N}$, center them, and scale them according to $Y_i(j)=u_jY_i(1)$. For the hypothesis test itself, we simulate 10000 different realizations of the observed outcomes. 
For each set of $(W_i , Y_i^\T{obs})_{i=1}^N$, we run the FRT with both $X^2$ and $B$, calculating $p$-values from 2500 permutations.

For these potential outcomes, we compute the eigenvalues in Theorems \ref{t:CYx} and \ref{t:Brunner} to derive that the asymptotic sampling distributions of $X^2$ and $2B$ under $H_{0 \T{N}}$ are
\begin{align}
X^2 \ind \xi_1^2+0.758 \xi_2^2, & \qquad 2B \ind 1.423 \xi_1^2+0.434 \xi_2^2, & \T{(ANOVA)}, \label{e:SimDistr}\\
X^2 \ind \xi_1^2+ \xi_2^2 \eqd \chi^2_2, & \qquad 2B \ind 1.8 \xi_1^2+0.2 \xi_2^2, & \T{(Factorial)}; \nonumber
\end{align}
their randomization distributions are both asymptotically $\chi_2^2$ in both the ANOVA and factorial designs. This provides an illustrative and simple numerical example of our main results. Each weight for $X^2$ is at most $1$, while the weights for $2B$ are only at most $1$ on average. 
In the Factorial case, the FRT with $X^2$ is actually asymptotically exact because both the sampling and randomization distributions of $X^2$ approach $\chi^2_2$.

We can naturally broaden the simulations just performed to SREs. We keep the ANOVA and Factorial setup, but now incorporate a SRE with $H=2$ strata. Remember that this means the observed data come from running a CRE within each stratum separately.
The first stratum of potential outcomes shall be identical to those of the ANOVA simulation above. The second stratum shall be identical to the first, except a unit constant is added to all its potential outcomes.
This between stratum effect merits a SRE analysis. We proceed with the $X^2$ statistic in \eqref{e:blX2}, and only permute data within each stratum when obtaining $p$-values.

The textbook suggestion \cite{morris2010design} for testing the our null hypotheses in the SRE case involves the $F$ statistic from a linear regression of the observed response on stratum and treatment indicators, i.e., $J+H$ predictors.
Although \cite{morris2010design} has reiterated the usual OLS assumptions that justify the $F$ test, practitioners do not always check them. We therefore would like to compare $X^2$ and $F$ in this SRE setting. 
From Theorem \ref{t:CYx-bl}, we know $X^2$ in \eqref{e:blX2} has the same asymptotic behavior as listed in \eqref{e:SimDistr}. By intuition from \cite{Lin13}, we anticipate that $2F$ also has the same asymptotic behavior as before.

In all four settings we have put forth, we also fix three different sample size settings to pinpoint the rate that asymptotics take effect.  

\subsubsection{Results}

\begin{figure}[t]  
\centering
\includegraphics{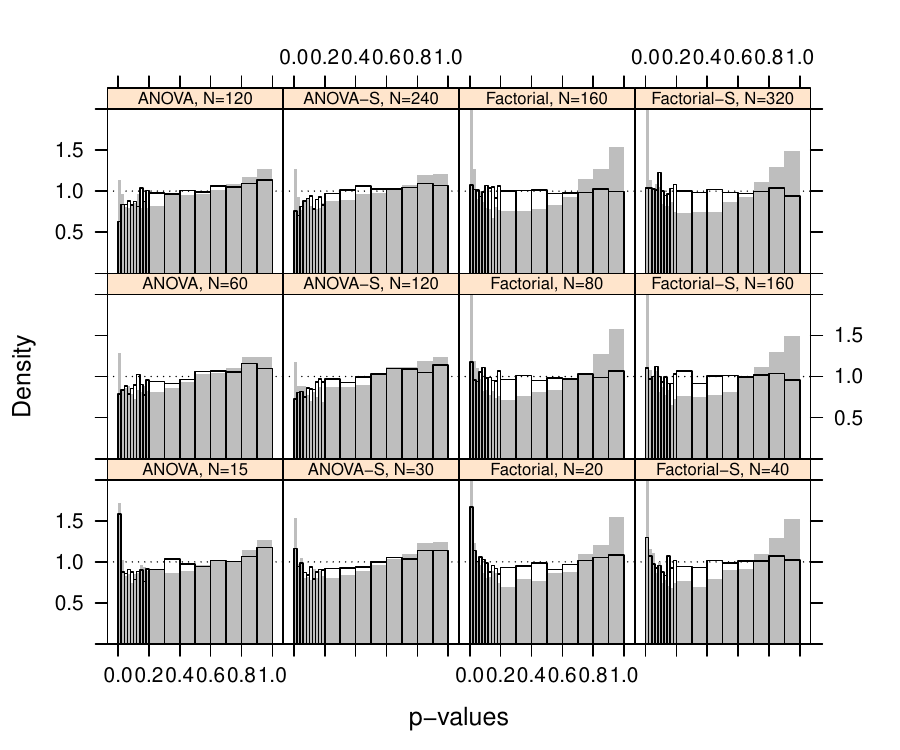}
\caption{Histograms of FRT $p$-values under various settings and sample sizes, with ``S'' indicating the stratified cases. Gray bars indicate $p$-values from a $F$ statistic, while transparent bars indicate $p$-values from the $X^2$ statistic. We display smaller $p$-values with a finer resolution because most hypothesis tests are conducted at levels close to 0. A dashed line indicating the $\Unif(0,1)$ density is added for reference purposes.}
\label{f:VaryingN}
\end{figure}

Figure \ref{f:VaryingN} contains the simulation results. For each setting and sample size, we plot histograms of $p$-values from the FRT with $X^2$ and $B$ or $F$. In all histograms, the left-most bin of $p$-values ranging from 0 to 2\% is most informative.
For a successful control of type I error, the density of $p$-values here should not surpass 1 by much. From the bottom row of Figure \ref{f:VaryingN}, $N_1$ or $N_{[1]1}=5$ (bottom row) is evidently far from the asymptotic regime.
When $N_1$ or $N_{[1]1}=20$ (middle row), it appears that we move much closer to the expected behavior dictated by asymptotics.
This is because, when these counts are 40 (top row), the histograms do not change much from the row below. That is, the first and second rows have a similar pattern.
The similarity of the SRE histograms to the corresponding CRE ones buttresses our intuition that $X^2$ and $F$ have similar distributions as their ``unstratified'' twins for our simulated potential outcomes.

It is also confirmed that the FRT with $B$ or $F$ fails to control type I error at small $p$-values for any sample size. We recollect from our theory that heteroscedasticity hampers its suitability.
We have elected to balance the designs, so that it surfaces that, when $J>2$, balanced designs do not guarantee the suitability of $B$ or $F$ as they do in treatment-control experiments (refer to Corollary \ref{c:tc}).
Of course, forgoing balanced designs can cause both $B$ and $F$ to fail more seriously. \cite{DD18} compare $X^2$ and $F$ in such cases through extensive simulation.

\subsection{Confidence Regions}
Our next simulation constructs confidence regions alluded to by Corollary \ref{c:CI-FRT}. At the same time, we seize the opportunity to compare the FRT and $\chi^2$ approximations that are both asymptotically valid by Theorem \ref{t:CYx}.
We decided on a balanced $2^2$ factorial design ($K=2$, $J=2^2=4$) where $N_j=10$ for $j=\ot{4}$. We seek to infer the main effects $\tau_1$, $\tau_2$, both individually and jointly.  
Take $Y_i(j) \iid U^2-1/3$ where $U \sim \Unif(0,1)$, and center so that each $\bar{Y}(j)=0$. This way, the true parameter values are $\tau_1= \tau_2=0$, but takeaways of this simulation generalize to arbitrary $\tau_1, \tau_2$.  Next, multiply each $Y_i$ by the same matrix 
\[ \left( \begin{matrix}
2 & 1 & 3/2 & 1 \\
0 & \sqrt{5} & \sqrt{5}/2 & 2/ \sqrt{5}\\
0 & 0 & 3/ \sqrt{2} & 1/ \sqrt{2}\\
0 & 0 & 0 & \sqrt{3.7}
\end{matrix}\right) \]
to inject correlation into the potential outcomes.

We assign treatments to units according to the CRE, and construct the confidence regions by means of a single realization of observed outcomes. The set $\T{CR}_\alpha$ in \eqref{e:CIapprox} is a means to compute an asymptotic confidence region for $\tau_1$, $\tau_2$.
After finding it, we spread a grid of points centered at $\hat{\tau}_1$, $\hat{\tau}_2$ that comfortably envelops this asymptotic region.
At each point ($x_1,x_2$) of this grid, we run the FRT with $X^2$ to test $\tau_1=x_1$, $\tau_2=x_2$, both individually and jointly. We induct the point into our confidence region if and only if the $p$-value exceeds $\alpha=0.05$.

Figure \ref{f:1D-CI} shows the results for the marginal hypothesis tests. The behavior is very regular: the $p$-value crests near $\hat{\tau}_1$ or $\hat{\tau}_2$, and decays monotonically to the left and right. The FRT and $\chi^2$ approximation confidence intervals are nearly indistinguishable.
Figure \ref{f:ConfSet} shows the result for the joint test. The left graph shows the FRT confidence region is again close to its asymptotic approximation, but not as close as in the 1D case.
In particular, the former is noticeably larger. The right graph explains this by exposing that the $p$-values calculated from the FRT tend to be larger than those from the $\chi^2$ approximation.

Due to the duality between hypothesis testing and confidence regions, the empirical coverage of our confidence region is the proportion of time it includes $\tau_1= \tau_2=0$ over all realizations of the observed data.
From the simulations in the previous section, which deals with the false rejection rate of the FRT, we expect this proportion to be at least 0.95.
The closeness of the confidence regions to their asymptotic approximations suggests our results generalize to other realizations of the observed data. That is, those confidence regions will be centered at $( \hat{\tau}_1, \hat{\tau}_2)$, but have similar shape.

\begin{figure}[h]
        \begin{subfigure}{\textwidth}
\includegraphics[width=0.48\linewidth]{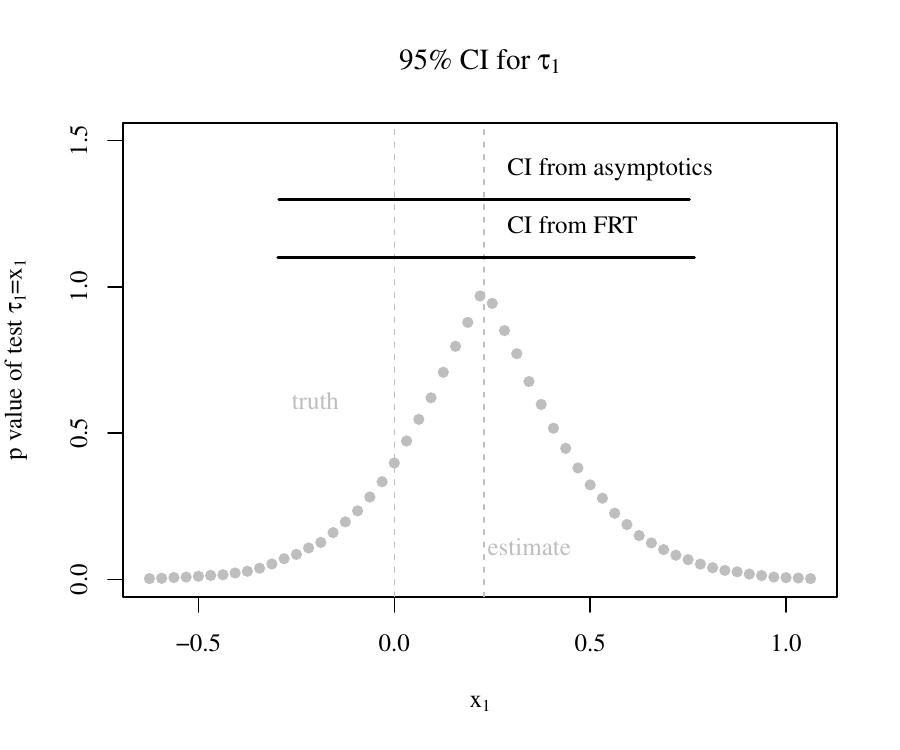}\hfill
\includegraphics[width=0.48\linewidth]{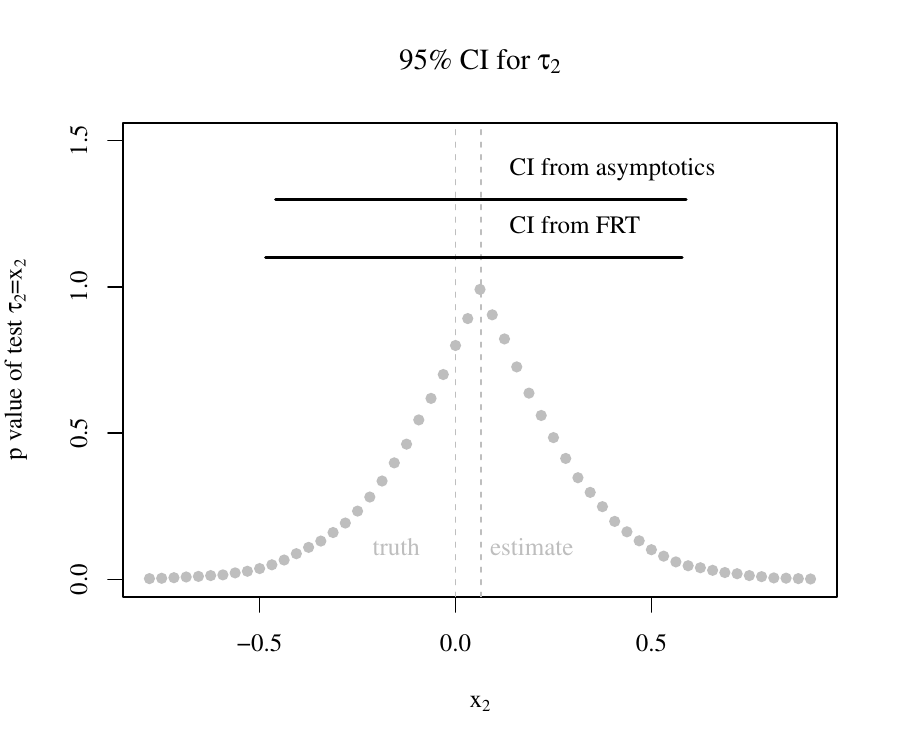}
\caption{For $\tau_1$ and $\tau_2$ individually, the FRT and asymptotic approximation give nearly identical confidence intervals (CI). For the second main effect, the FRT confidence interval is shifted due to the discrete resolution.}  
\label{f:1D-CI}
        \end{subfigure}%

        \begin{subfigure}{\textwidth}
\includegraphics[width=0.48\linewidth]{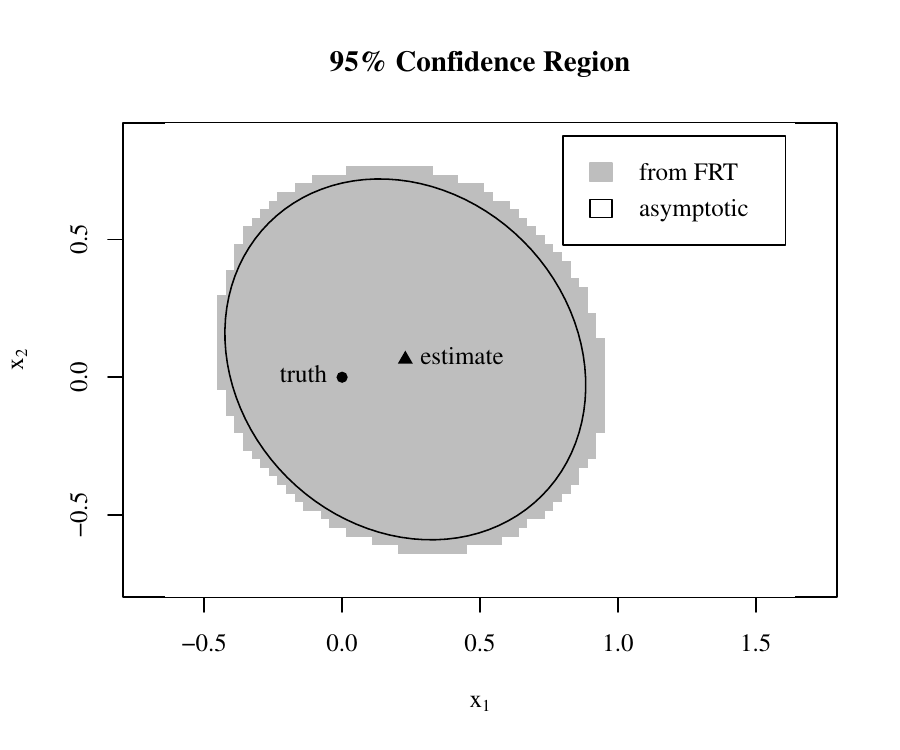}\hfill
\includegraphics[width=0.48\linewidth]{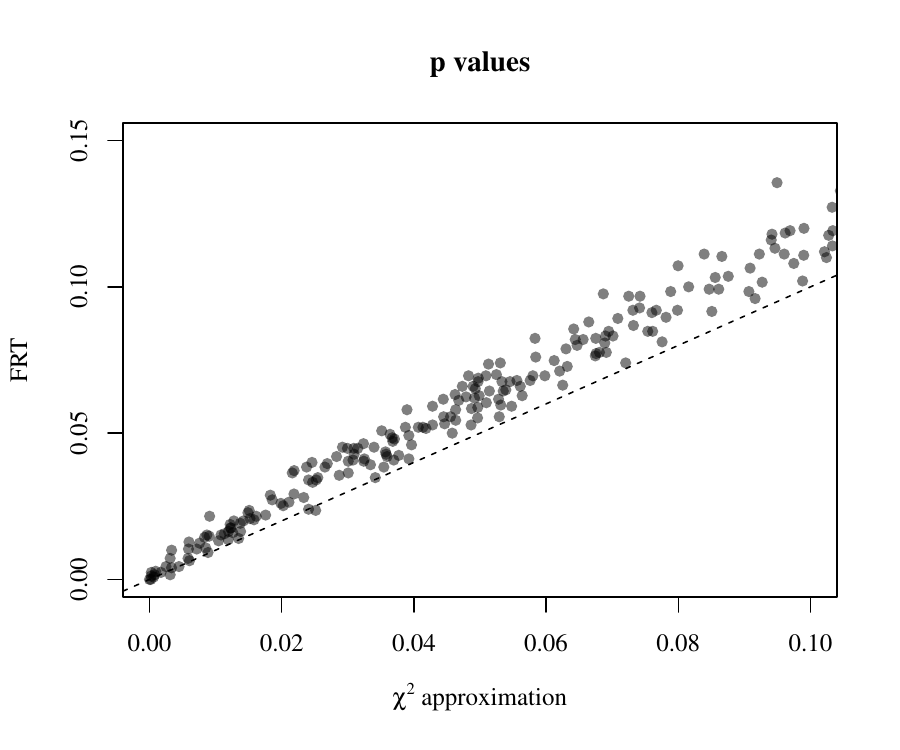}
\caption{The left graph shows the FRT confidence region is again close to its asymptotic approximation, but the former is noticeably larger. The right graph is a scatter plot of $p$-values from testing $\tau_1= \tau_2=0$ repeatedly from the original set of potential outcomes, zooming in on the region where they are less than 0.1.}
\label{f:ConfSet}
        \end{subfigure}%
        \caption{Simulation for confidence regions}\label{f:confidence}
\end{figure}

\section{Applications}\label{s:appl}
We now try out our method on practical datasets, under a variety of possible weak null hypotheses. Our goal is not to do complete data analyses. We do not delve into issues of multiple comparisons. We pretend each null hypothesis is tested in isolation.

\subsection{Financial Incentives for Exercise}
\cite{Charness09} were interested in whether financial incentives caused college students to exercise more. They randomly assigned 40 students each to one of three possible treatments: no financial incentive (control), a small one, or a large one.
We henceforth index these groups by $j=1,2,3$, respectively. Then $N_1=N_2=N_3=40$. For each student, the response was the average number of weekly gym visits after the study minus that before the study. Let $Y_i(j)$ denote this quantity for the $i$-th student, if s/he received treatment $j$.
Many students had $Y_i^\T{obs}=0$. This would be troublesome for the FRT with $X^2$ if, after a certain permutation, all permuted observations in a group were 0. To preclude this, we added a minuscule amount of random noise to all the $Y_i^\T{obs}$. 
For this dataset, the sample means are $-0.029, 0.054, 0.640$, and the sample variances are $0.152,0.386,1.489$, for groups $j=1,2,3$, respectively.
Mere inspection of these numbers posits that a large financial incentive has a positive effect while a small one does not. It is also apparent that the data are heteroscedastic.

We test these four hypotheses at level 1\%: whether the two magnitudes of financial incentives have any effect on average, whether financial incentives have any effect ignoring the division between large and small, whether financial incentives have any effect, and whether small financial incentives have any effect.
In symbols, these are $2 \bar{Y}(1)= \bar{Y}(2)+ \bar{Y}(3)$, $\bar{Y}(1)= \bar{Y}(2,3)$ (here we collapse treatment levels $j=2,3$ to one), $\bar{Y}(1)= \bar{Y}(2)= \bar{Y}(3)$, and $\bar{Y}(1)= \bar{Y}(2)$ (here we ignore the $W_i=3$ observations), respectively.

We use the $X^2$ and $F$ statistics, and get $p$-values both by the FRT and the $\chi^2$ (or $F$) approximation. As we brought up earlier, $p$-values from FRTs are also finite-sample exact for testing Fisher's sharp null hypothesis. Consult Table \ref{tb::charness09} for the results.
The class of hypothesis test (FRT and $\chi^2$ (or $F$) approximation) holds little sway. It seems, for $X^2$, the FRT is slightly more conservative. For $F$, the FRT is slightly less conservative.

\begin{table}
\centering 
\caption{Analyzing \cite{Charness09}'s data with $p$-values as percents. We calculate the FRT $p$-values using $10^4$ Monte Carlo simulations and the asymptotic $p$-values based on $\chi^2$ or $F$ approximations.}\label{tb::charness09}
\begin{tabular}{|c||c|c||c|c|}\hline
Hypothesis &   $X^2 \ind \chi_m^2$  & FRT using $X^2$ &  $F \ind F_{m,N-J}$  & FRT using $F$ \\ \hline
$2 \bar{Y}(1)= \bar{Y}(2)+ \bar{Y}(3)$ &  0.25 & 0.27 &   1.97 & 1.59 \\ \hline
$\bar{Y}(1)= \bar{Y}(2)= \bar{Y}(3)$ &   0.42 & 0.49 &  0.06 & 0.01 \\ \hline
$\bar{Y}(1)= \bar{Y}(2,3)$ &   0.34 & 0.49   & 2.45 & 2.34 \\ \hline
$\bar{Y}(1)= \bar{Y}(2)$ &   47.15 & 47.93 &   47.37 & 47.93 \\ \hline
\end{tabular}
\end{table}
Testing the first two hypotheses, financial incentives have a statistically significant impact on gym attendance. Guided by Theorems \ref{t:CYx} and \ref{t:LSF}, we should trust the $p$-values from $X^2$ more than those from $F$. The latter statistic seems to have overly conservative behavior for this dataset. 
Testing the third hypothesis suggests that the treated group ($j=2$ or $3$) has different behavior from the control in a statistically significant way. 

Seeing evidence that financial incentives might be helpful, we test the fourth hypothesis only comparing the control and small incentive groups, and get insignificant $p$-values. Note, in this case, $X^2=F$  by Corollary \ref{c:tc}, thanks to the balanced design.
To wrap up, we concur with the findings of \cite{Charness09}, that large financial incentives seem to induce people to visit the gym more often, but not small ones.

\subsection{A $2^2$ Factorial Experiment for Grades}
We now undertake a similar analysis as in the previous section on another dataset. \cite{AngristStar09} wondered whether academic support services and/or financial incentives caused college students to improve their grades.
Their data consisted of student grades for a certain semester on a 100 point scale. In that semester, students were either in a control group, offered a fellowship, offered services, or both.
We thus have a $2^2$ factorial experiment, and henceforth index these treatment groups by $j=1,2,3,4$, respectively. As opposed to the allocation in the previous section, this one is imbalanced: $(N_1,N_2,N_3,N_4)=(854,219,212,119)$. 
The sample means are $63.9, 65.8, 64.1, 66.1$, and the sample variances are $145,124,160,114$, for groups $j=1,2,3,4$, respectively. By eye, there is less heteroscedasticity, and the sample means are less markedly off from each other than those of the previous section.

We test the following five hypotheses at level 1\%: financial services have no effect, services have no effect, neither has an effect, no interactions, and that all group means are the same.
In symbols, these are $\bar{Y}(1)+ \bar{Y}(2)= \bar{Y}(3)+ \bar{Y}(4)$, $\bar{Y}(1)+ \bar{Y}(3)= \bar{Y}(2)+ \bar{Y}(4)$, both of the previous two, $\bar{Y}(1)+ \bar{Y}(4)= \bar{Y}(2)+ \bar{Y}(3)$, and $\bar{Y}(1)= \bar{Y}(2)= \bar{Y}(3)= \bar{Y}(4)$.

We again use the $X^2$ and $F$ statistics, and get $p$-values both by the FRT and the $\chi^2$ (or $F$) approximation. As we discussed earlier, $p$-values from FRTs are also exact for testing Fisher's sharp null hypothesis.
Consult Table \ref{tb::angriststar09} for the results. The class of hypothesis test again holds little sway. The FRT seems as a whole slightly more conservative, though there are a few exceptions. 
\begin{table}
\centering
\caption{Analyzing \cite{AngristStar09}'s data with $p$-values as percents. We calculate the FRT $p$-values using $10^4$ Monte Carlo simulations and the asymptotic $p$-values based on $\chi^2$ or $F$ approximations.}\label{tb::angriststar09}
\begin{tabular}{|c||c|c||c|c|}\hline
Hypothesis &   $X^2\ind \chi_m^2$   & FRT using $X^2$ &   $F\ind F_{m,N-J}$  & FRT using $F$ \\ \hline  
No effect from services &   72.84 & 72.34 &   73.92 & 73.58 \\ \hline
No effect from incentives &   1.19 & 1.43 &   1.60 & 1.80 \\ \hline
No effects from either &  3.65 & 3.99 &   5.26 & 5.28 \\ \hline
No interaction &   99.53 & 99.47 &   99.55 & 99.5 \\ \hline
$\bar{Y}(1)= \bar{Y}(2)= \bar{Y}(3)= \bar{Y}(4)$ &   3.88 & 4.31 &   5.85 & 5.71 \\ \hline
\end{tabular}
\end{table}
We cannot reject any of these null hypotheses at level 1\%. From the second and fourth hypotheses, the data do not seem to suggest services have any effect, or that there is a non-additive effect from combining incentives and services.
We do, however, almost reject the hypothesis of no effect from incentives alone: the $p$-values are just over 1\%.

Our finding that the effect of incentives is more significant than the effect of others conforms with the conclusions of \cite{AngristStar09}. They went on to conduct subgroup analysis, and discovered that the observed effects on grades come nearly exclusively from female students.

\section{Discussion}\label{s:disc}
We have proposed a strategy for using the FRT to test a weak null hypothesis. It imputes the missing potential outcomes under a compatible sharp null hypothesis, and then uses the studentized statistic in the FRT.
It furthers the current literature in two directions. First, it complements the tests centered on asymptotic distributions. Our FRT is also finite-sample exact under the sharp null hypothesis.
Second, it guides the choice of test statistic for the sharp null hypothesis. Although the finite-sample exactness property of the FRT holds for any test statistic, the $p$-values are sensitive to this choice.
For example, all the $p$-values in Tables \ref{tb::charness09} and \ref{tb::angriststar09} are valid for Fisher's sharp null hypothesis. Unfortunately, these $p$-values range above and below the nominal significance level.
This can be confusing in practice. Therefore, we cannot overstate the crucial role of weak null hypotheses and studentized statistics. Our FRTs can control asymptotic type I error under weak null hypotheses and have power under corresponding alternative hypotheses.

Our theory ignores covariates. The analysis of covariance is a classical topic \citep{Fisher35} and still attracts attention \citep{Lin13, lu2016covariate, fogarty2018regression, fogarty2018mitigating, middleton2018unified}.
\citet{LassoTE16} and \citet{lei2018regression} widened it to the case where the number of covariates grows with the sample size.
\citet{tukey1993tightening} and \citet{CovAdjRosen02} discussed strategies for testing sharp null hypotheses. It is important to extend the theory to test weak null hypotheses with covariate adjustment, plus to the case with high dimensional covariates. We leave this to future work.

We have focused on completely randomized factorial experiments and extended the theory to stratified and clustered experiments. We conjecture that the strategy is also applicable for experiments with general treatment assignment mechanisms \citep{mukerjee2018using}.
\citet{fogarty2016studentized} also used the idea of studentization in sensitivity analysis of matched observational studies.

\bibliographystyle{plainnat}
\bibliography{refs}

\begin{thebibliography}{81}
\providecommand{\natexlab}[1]{#1}
\providecommand{\url}[1]{\texttt{#1}}
\expandafter\ifx\csname urlstyle\endcsname\relax
  \providecommand{\doi}[1]{doi: #1}\else
  \providecommand{\doi}{doi: \begingroup \urlstyle{rm}\Url}\fi

\bibitem[Angrist et~al.(2009)Angrist, Lang, and Oropoulos]{AngristStar09}
J.~Angrist, D.~Lang, and P.~Oropoulos.
\newblock Incentives and services for college achievement: Evidence from a
  randomized trial.
\newblock \emph{American Economic Journal: Applied Economics}, 1:\penalty0
  136--163, 2009.

\bibitem[Armitage(1955)]{armitage1955tests}
P.~Armitage.
\newblock Tests for linear trends in proportions and frequencies.
\newblock \emph{Biometrics}, 11:\penalty0 375--386, 1955.

\bibitem[Aronow et~al.(2014)Aronow, Green, and Lee]{AronowVar14}
P.~M. Aronow, D.~P. Green, and D.~K. Lee.
\newblock Sharp bounds on the variance in randomized experiments.
\newblock \emph{The Annals of Statistics}, 42:\penalty0 850--871, 2014.

\bibitem[Athey and Imbens(2017)]{ATHEY201773}
S.~Athey and G.~W. Imbens.
\newblock \emph{The Econometrics of Randomized Experiments}, volume~1 of
  \emph{Handbook of Economic Field Experiments}, chapter~3, pages 73--140.
\newblock Elsevier B.V, 2017.

\bibitem[Athey et~al.(2018)Athey, Eckles, and Imbens]{athey2015exact}
S.~Athey, D.~Eckles, and G.~W. Imbens.
\newblock Exact $p$-values for network interference.
\newblock \emph{Journal of the American Statistical Association}, 113:\penalty0
  230--240, 2018.

\bibitem[Babu and Singh(1983)]{babu1983inference}
G.~J. Babu and K.~Singh.
\newblock Inference on means using the bootstrap.
\newblock \emph{The Annals of Statistics}, 11:\penalty0 999--1003, 1983.

\bibitem[Basse et~al.(2019)Basse, Feller, and Toulis]{basse2017exact}
G.~Basse, A.~Feller, and P.~Toulis.
\newblock Randomization tests of causal effects under interference.
\newblock \emph{Biometrika}, 106:\penalty0 487--494, 2019.

\bibitem[Basu(1980)]{basu1980randomization}
D.~Basu.
\newblock Randomization analysis of experimental data: The {Fisher}
  randomization test.
\newblock \emph{Journal of the American Statistical Association}, 75:\penalty0
  575--582, 1980.

\bibitem[Bloniarz et~al.(2016)Bloniarz, Liu, Zhang, Sekhon, and Yu]{LassoTE16}
A.~Bloniarz, H.~Liu, C.~Zhang, J.~Sekhon, and B.~Yu.
\newblock Lasso adjustments of treatment effect estimates in randomized
  experiments.
\newblock \emph{Proceedings of the National Academy of Sciences of the United
  States of America}, 113:\penalty0 7383--7390, 2016.

\bibitem[Box(1954)]{Box54}
G.~Box.
\newblock Some theorems on quadratic forms applied in the study of analysis of
  variance problems.
\newblock \emph{Annals of Mathematical Statistics}, 25:\penalty0 290--302,
  1954.

\bibitem[Box and Andersen(1955)]{box1955permutation}
G.~E.~P. Box and S.~L. Andersen.
\newblock Permutation theory in the derivation of robust criteria and the study
  of departures from assumption.
\newblock \emph{Journal of the Royal Statistical Society, Series B
  (Methodological)}, 17:\penalty0 1--34, 1955.

\bibitem[Bradley(1968)]{bradley1968distribution}
J.~V. Bradley.
\newblock \emph{Distribution-Free Statistical Tests}.
\newblock Upper Saddle River, NJ: Prentice Hall, 1968.

\bibitem[Brunner et~al.(1997)Brunner, Dette, and Munk]{Brunner97}
E.~Brunner, H.~Dette, and A.~Munk.
\newblock Box-type approximations in nonparametric factorial designs.
\newblock \emph{Journal of the American Statistical Association}, 92:\penalty0
  1494--1502, 1997.

\bibitem[Caughey et~al.(2017)Caughey, Dafoe, and Miratrix]{caughey2017beyond}
D.~Caughey, A.~Dafoe, and L.~Miratrix.
\newblock Beyond the sharp null: Randomization inference, bounded null
  hypotheses, and confidence intervals for maximum effects.
\newblock \emph{arXiv preprint arXiv:1709.07339}, 2017.

\bibitem[Charness and Gneezy(2009)]{Charness09}
G.~Charness and U.~Gneezy.
\newblock Incentives to exercise.
\newblock \emph{Econometrica}, 77:\penalty0 909--931, 2009.

\bibitem[Chung and Romano(2013)]{Romano13}
E.~Chung and J.~P. Romano.
\newblock Exact and asymptotically robust permutation tests.
\newblock \emph{The Annals of Statistics}, 41:\penalty0 484--507, 2013.

\bibitem[Chung and Romano(2016)]{ManovaRomano16}
E.~Chung and J.~P. Romano.
\newblock Multivariate and multiple permutation tests.
\newblock \emph{Journal of Econometrics}, 193:\penalty0 76--91, 2016.

\bibitem[Collier and Baker(1966)]{collier1966some}
R.~O. Collier and F.~B. Baker.
\newblock {Some Monte Carlo results on the power of the F-test under
  permutation in the simple randomized block design}.
\newblock \emph{Biometrika}, 53:\penalty0 199--203, 1966.

\bibitem[Dasgupta et~al.(2015)Dasgupta, Pillai, and Rubin]{DasFact15}
T.~Dasgupta, N.~Pillai, and D.~B. Rubin.
\newblock Causal inference from $2^{K}$ factorial designs by using potential
  outcomes.
\newblock \emph{Journal of the Royal Statistical Society, Series B (Statistical
  Methodology)}, 77:\penalty0 727--753, 2015.

\bibitem[Ding(2017)]{DingParadox}
P.~Ding.
\newblock A paradox from randomization-based causal inference (with
  discussion).
\newblock \emph{Statistical Science}, 32:\penalty0 331--345, 2017.

\bibitem[Ding and Dasgupta(2018)]{DD18}
P.~Ding and T.~Dasgupta.
\newblock A randomization-based perspective of analysis of variance: a test
  statistic robust to treatment effect heterogeneity.
\newblock \emph{Biometrika}, 105:\penalty0 45--56, 2018.

\bibitem[Ding et~al.(2016)Ding, Feller, and Miratrix]{DingTEV16}
P.~Ding, A.~Feller, and L.~Miratrix.
\newblock Randomization inference for treatment effect variation.
\newblock \emph{Journal of the Royal Statistical Society, Series B (Statistical
  Methodology)}, 78:\penalty0 655--671, 2016.

\bibitem[Duflo et~al.(2007)Duflo, Glennerster, and Kremer]{duflo2007using}
E.~Duflo, R.~Glennerster, and M.~Kremer.
\newblock \emph{Using randomization in development economics research: A
  toolkit}, volume~4, chapter~61, pages 3895--3962.
\newblock Elsevier, 2007.

\bibitem[Eden and Yates(1933)]{eden1933validity}
T.~Eden and F.~Yates.
\newblock {On the validity of Fisher's z test when applied to an actual example
  of non-normal data}.
\newblock \emph{The Journal of Agricultural Science}, 23:\penalty0 6--17, 1933.

\bibitem[Fisher(1925)]{Fisher25}
R.~A. Fisher.
\newblock \emph{Statistical Methods for Research Workers}.
\newblock Edinburgh: Oliver and Boyd, 1925.

\bibitem[Fisher(1935)]{Fisher35}
R.~A. Fisher.
\newblock \emph{The Design of Experiments}.
\newblock Edinburgh, London: Oliver and Boyd, 1st edition, 1935.

\bibitem[Fogarty(2018{\natexlab{a}})]{fogarty2018mitigating}
C.~B. Fogarty.
\newblock On mitigating the analytical limitations of finely stratified
  experiments.
\newblock \emph{Journal of the Royal Statistical Society: Series B (Statistical
  Methodology)}, 80:\penalty0 1035--1056, 2018{\natexlab{a}}.

\bibitem[Fogarty(2018{\natexlab{b}})]{fogarty2018regression}
C.~B. Fogarty.
\newblock Regression assisted inference for the average treatment effect in
  paired experiments.
\newblock \emph{Biometrika}, 105:\penalty0 994--1000, 2018{\natexlab{b}}.

\bibitem[Fogarty(2019)]{fogarty2016studentized}
C.~B. Fogarty.
\newblock Studentized sensitivity analysis for the sample average treatment
  effect in paired observational studies.
\newblock \emph{Journal of the American Statistical Association}, page in
  press, 2019.

\bibitem[Friedrich and Pauly(2018)]{MatsFriedr18}
S.~Friedrich and M.~Pauly.
\newblock Mats: Inference for potentially singular and heteroscedastic manova.
\newblock \emph{Journal of Multivariate Analysis}, 165:\penalty0 166--179,
  2018.

\bibitem[Friedrich et~al.(2017)Friedrich, Brunner, and Pauly]{LongFriedr17}
S.~Friedrich, E.~Brunner, and M.~Pauly.
\newblock Permuting longitudinal data in spite of the dependencies.
\newblock \emph{Journal of Multivariate Analysis}, 153:\penalty0 255--265,
  2017.

\bibitem[Gail et~al.(1996)Gail, Mark, Carroll, Green, and Pee]{gail1996design}
M.~H. Gail, S.~D. Mark, R.~J. Carroll, S.~B. Green, and D.~Pee.
\newblock On design considerations and randomization-based inference for
  community intervention trials.
\newblock \emph{Statistics in Medicine}, 15:\penalty0 1069--1092, 1996.

\bibitem[Gerber and Green(2012)]{gerber2012field}
A.~S. Gerber and D.~P. Green.
\newblock \emph{Field Experiments: Design, Analysis, and Interpretation}.
\newblock New York: W. W. Norton and Company, 2012.

\bibitem[Hall(1988)]{hall1988theoretical}
P.~Hall.
\newblock Theoretical comparison of bootstrap confidence intervals.
\newblock \emph{The Annals of Statistics}, 16:\penalty0 927--953, 1988.

\bibitem[Hennessy et~al.(2016)Hennessy, Dasgupta, Miratrix, Pattanayak, and
  Sarkar]{hennessy2016conditional}
J.~Hennessy, T.~Dasgupta, L.~Miratrix, C.~Pattanayak, and P.~Sarkar.
\newblock A conditional randomization test to account for covariate imbalance
  in randomized experiments.
\newblock \emph{Journal of Causal Inference}, 4:\penalty0 61--80, 2016.

\bibitem[Hodges and Lehmann(1963)]{HL63}
J.~L. Hodges and E.~L. Lehmann.
\newblock Estimates of location based on rank tests.
\newblock \emph{The Annals of Mathematical Statistics}, 34:\penalty0 598--611,
  1963.

\bibitem[Hoeffding(1952)]{hoeffding1952large}
W.~Hoeffding.
\newblock The large-sample power of tests based on permutations of
  observations.
\newblock \emph{The Annals of Mathematical Statistics}, 23:\penalty0 169--192,
  1952.

\bibitem[Imbens and Menzel(2018)]{imbens2018causal}
G.~W. Imbens and K.~Menzel.
\newblock A causal bootstrap.
\newblock Technical report, National Bureau of Economic Research, 2018.

\bibitem[Imbens and Rubin(2015)]{CausalImbens}
G.~W. Imbens and D.~B. Rubin.
\newblock \emph{Causal Inference for Statistics, Social, and Biomedical
  Sciences: An Introduction}.
\newblock Cambridge: Cambridge University Press, 2015.

\bibitem[Janssen(1997)]{Janssen97}
A.~Janssen.
\newblock Studentized permutation tests for non-iid hypotheses and the
  generalized behrens-fisher problem.
\newblock \emph{Statistics and Probability Letters}, 36:\penalty0 9--21, 1997.

\bibitem[Janssen(1999)]{janssen1999testing}
A.~Janssen.
\newblock Testing nonparametric statistical functionals with applications to
  rank tests.
\newblock \emph{Journal of Statistical Planning and Inference}, 81:\penalty0
  71--93, 1999.

\bibitem[Janssen and Pauls(2003)]{janssen2003bootstrap}
A.~Janssen and T.~Pauls.
\newblock How do bootstrap and permutation tests work?
\newblock \emph{Annals of Statistics}, 31:\penalty0 768--806, 2003.

\bibitem[Kempthorne(1952)]{kempthorne1952design}
O.~Kempthorne.
\newblock \emph{The Design and Analysis of Experiments}.
\newblock New York: John Wiley and Sons, 1952.

\bibitem[Kempthorne and Doerfler(1969)]{kempthorne1969behaviour}
O.~Kempthorne and T.~E. Doerfler.
\newblock The behaviour of some significance tests under experimental
  randomization.
\newblock \emph{Biometrika}, 56:\penalty0 231--248, 1969.

\bibitem[Konietschke et~al.(2015)Konietschke, Bathke, Harrar, and Pauly]{Kon15}
F.~Konietschke, A.~C. Bathke, S.~W. Harrar, and M.~Pauly.
\newblock Parametric and nonparametric bootstrap methods for general manova.
\newblock \emph{Journal of Multivariate Analysis}, 140:\penalty0 291--301,
  2015.

\bibitem[Lehmann(1975)]{lehmann1975nonparametrics}
E.~L. Lehmann.
\newblock \emph{Nonparametrics: Statistical Methods Based on Ranks}.
\newblock San Francisco: Holden-Day, Inc., 1975.

\bibitem[Lei and Ding(2018)]{lei2018regression}
L.~Lei and P.~Ding.
\newblock Regression adjustment in completely randomized experiments with a
  diverging number of covariates.
\newblock \emph{arXiv preprint arXiv:1806.07585}, 2018.

\bibitem[Li and Ding(2016)]{li2016exact}
X.~Li and P.~Ding.
\newblock Exact confidence intervals for the average causal effect on a binary
  outcome.
\newblock \emph{Statistics in Medicine}, 35:\penalty0 957--960, 2016.

\bibitem[Li and Ding(2017)]{DingCLT}
X.~Li and P.~Ding.
\newblock General forms of finite population central limit theorems with
  applications to causal inference.
\newblock \emph{Journal of the American Statistical Association}, 112:\penalty0
  1759--1169, 2017.

\bibitem[Lin(2013)]{Lin13}
W.~Lin.
\newblock Agnostic notes on regression adjustments to experimental data:
  reexamining freedman's critique.
\newblock \emph{The Annals of Applied Statistics}, 7:\penalty0 295--318, 2013.

\bibitem[Lin et~al.(2017)Lin, Halpern, Prasad~Kerlin, and
  Small]{lin2017placement}
W.~Lin, S.~D. Halpern, M.~Prasad~Kerlin, and D.~S. Small.
\newblock A ``placement of death'' approach for studies of treatment effects on
  {ICU} length of stay.
\newblock \emph{Statistical Methods in Medical Research}, 26:\penalty0
  292--311, 2017.

\bibitem[Loh et~al.(2017)Loh, Richardson, and Robins]{loh2017apparent}
W.~W. Loh, T.~S. Richardson, and J.~M. Robins.
\newblock An apparent paradox explained.
\newblock \emph{Statistical Science}, 32:\penalty0 356--361, 2017.

\bibitem[Lu(2016{\natexlab{a}})]{FactLu16b}
J.~Lu.
\newblock On randomization-based and regression-based inferences for $2^{K}$
  factorial designs.
\newblock \emph{Statistics and Probability Letters}, 112:\penalty0 72--78,
  2016{\natexlab{a}}.

\bibitem[Lu(2016{\natexlab{b}})]{lu2016covariate}
J.~Lu.
\newblock {Covariate adjustment in randomization-based causal inference for
  $2^{K}$ factorial designs}.
\newblock \emph{Statistics and Probability Letters}, 119:\penalty0 11--20,
  2016{\natexlab{b}}.

\bibitem[Middleton(2018)]{middleton2018unified}
J.~A. Middleton.
\newblock A unified theory of regression adjustment for design-based inference.
\newblock \emph{arXiv preprint arXiv:1803.06011}, 2018.

\bibitem[Middleton and Aronow(2015)]{MiddletonCl15}
J.~A. Middleton and P.~M. Aronow.
\newblock Unbiased estimation of the average treatment effect in
  cluster-randomized experiments.
\newblock \emph{Statistics, Politics and Policy}, 6:\penalty0 39--75, 2015.

\bibitem[Morris(2010)]{morris2010design}
M.~Morris.
\newblock \emph{Design of Experiments: An Introduction Based on Linear Models}.
\newblock London: Chapman and Hall/CRC, 2010.

\bibitem[Mukerjee et~al.(2018)Mukerjee, Dasgupta, and Rubin]{mukerjee2018using}
R.~Mukerjee, T.~Dasgupta, and D.~B. Rubin.
\newblock Using standard tools from finite population sampling to improve
  causal inference for complex experiments.
\newblock \emph{Journal of the American Statistical Association}, 113:\penalty0
  868--881, 2018.

\bibitem[Mutze et~al.(2017)Mutze, Konietschke, Munk, and Friede]{Mutze3arm}
T.~Mutze, F.~Konietschke, A.~Munk, and T.~Friede.
\newblock A studentized permutation test for three-arm trials in the gold
  standard design.
\newblock \emph{Statistics in Medicine}, 36:\penalty0 883--898, 2017.

\bibitem[Neuhaus(1993)]{neuhaus1993conditional}
G.~Neuhaus.
\newblock Conditional rank tests for the two-sample problem under random
  censorship.
\newblock \emph{The Annals of Statistics}, 21:\penalty0 1760--1779, 1993.

\bibitem[Neyman(1923/1990)]{Neyman23}
J.~Neyman.
\newblock On the application of probability theory to agricultural experiments.
\newblock \emph{Statistical Science}, 5:\penalty0 465--472, 1923/1990.

\bibitem[Neyman(1935)]{neyman1935statistical}
J.~Neyman.
\newblock Statistical problems in agricultural experimentation (with
  discussion).
\newblock \emph{Supplement to the Journal of the Royal Statistical Society},
  2:\penalty0 107--180, 1935.

\bibitem[Nolen and Hudgens(2011)]{Nolen:2011fk}
T.~L. Nolen and M.~G. Hudgens.
\newblock {Randomization-based inference within principal strata}.
\newblock \emph{Journal of the American Statistical Association}, 106:\penalty0
  581--593, 2011.

\bibitem[Page(1963)]{page1963ordered}
E.~B. Page.
\newblock Ordered hypotheses for multiple treatments: a significance test for
  linear ranks.
\newblock \emph{Journal of the American Statistical Association}, 58:\penalty0
  216--230, 1963.

\bibitem[Pauly et~al.(2015)Pauly, Brunner, and Konietschke]{Pauly15}
M.~Pauly, E.~Brunner, and F.~Konietschke.
\newblock Asymptotic permutation tests in general factorial designs.
\newblock \emph{Journal of the Royal Statistical Society, Series B (Statistical
  Methodology)}, 77:\penalty0 461--473, 2015.

\bibitem[Pitman(1937)]{pitman1937significance}
E.~J.~G. Pitman.
\newblock Significance tests which may be applied to samples from any
  populations.
\newblock \emph{Supplement to the Journal of the Royal Statistical Society},
  4:\penalty0 119--130, 1937.

\bibitem[Rigdon and Hudgens(2015)]{Rigdon:2015}
J.~Rigdon and M.~G. Hudgens.
\newblock Randomization inference for treatment effects on a binary outcome.
\newblock \emph{Statistics in Medicine}, 34:\penalty0 924--935, 2015.

\bibitem[Romano(1990)]{Romano90}
J.~P. Romano.
\newblock On the behavior of randomization tests without a group invariance
  assumption.
\newblock \emph{Journal of the American Statistical Association}, 85:\penalty0
  686--692, 1990.

\bibitem[Rosenbaum(1999)]{rosenbaum1999reduced}
P.~R. Rosenbaum.
\newblock Reduced sensitivity to hidden bias at upper quantiles in
  observational studies with dilated treatment effects.
\newblock \emph{Biometrics}, 55:\penalty0 560--564, 1999.

\bibitem[Rosenbaum(2001)]{rosenbaum2001effects}
P.~R. Rosenbaum.
\newblock Effects attributable to treatment: Inference in experiments and
  observational studies with a discrete pivot.
\newblock \emph{Biometrika}, 88:\penalty0 219--231, 2001.

\bibitem[Rosenbaum(2002{\natexlab{a}})]{CovAdjRosen02}
P.~R. Rosenbaum.
\newblock Covariance adjustment in randomized experiments and observational
  studies.
\newblock \emph{Statistical Science}, 17:\penalty0 286--327,
  2002{\natexlab{a}}.

\bibitem[Rosenbaum(2002{\natexlab{b}})]{ObsRosenbaum}
P.~R. Rosenbaum.
\newblock \emph{Observational Studies}.
\newblock New York: Springer, 2 edition, 2002{\natexlab{b}}.

\bibitem[Rosenbaum(2003)]{Rosenbaum:2003en}
P.~R. Rosenbaum.
\newblock Exact confidence intervals for nonconstant effects by inverting the
  signed rank test.
\newblock \emph{The American Statistician}, 57:\penalty0 132--138, 2003.

\bibitem[Rubin(1974)]{Rubin74}
D.~B. Rubin.
\newblock Estimating causal effects of treatments in randomized and
  nonrandomized studies.
\newblock \emph{Journal of Educational Psychology}, 66:\penalty0 688--701,
  1974.

\bibitem[Rubin(1980)]{rubin1980comment}
D.~B. Rubin.
\newblock {Comment on D. Basu}.
\newblock \emph{Journal of the American Statistical Association}, 75:\penalty0
  591--593, 1980.

\bibitem[Rubin(2005)]{Rubin05}
D.~B. Rubin.
\newblock Causal inference using potential outcomes: Design, modeling,
  decisions.
\newblock \emph{Journal of the American Statistical Association}, 100:\penalty0
  322--331, 2005.

\bibitem[Sabbaghi and Rubin(2014)]{sabbaghi2014comments}
A.~Sabbaghi and D.~B. Rubin.
\newblock {Comments on the Neyman--Fisher controversy and its consequences}.
\newblock \emph{Statistical Science}, 29:\penalty0 267--284, 2014.

\bibitem[Schochet(2018)]{schochet2018multi}
P.~Z. Schochet.
\newblock Design-based estimators for average treatmenteffects for multi-armed
  {RCT}s.
\newblock \emph{Journal of Educational and Behavioral Statistics}, 43:\penalty0
  586--593, 2018.

\bibitem[Srivastava and Kubokawa(2013)]{srivastava2013tests}
M.~S. Srivastava and T.~Kubokawa.
\newblock Tests for multivariate analysis of variance in high dimension under
  non-normality.
\newblock \emph{Journal of Multivariate Analysis}, 115:\penalty0 204--216,
  2013.

\bibitem[Tukey(1993)]{tukey1993tightening}
J.~W. Tukey.
\newblock Tightening the clinical trial.
\newblock \emph{Controlled Clinical Trials}, 14:\penalty0 266--285, 1993.

\bibitem[Zheng and Zelen(2008)]{zheng2008multi}
L.~Zheng and M.~Zelen.
\newblock Multi-center clinical trials: Randomization and ancillary statistics.
\newblock \emph{The Annals of Applied Statistics}, 2:\penalty0 582--600, 2008.

\end{thebibliography}

\clearpage 
\pagenumbering{arabic} 
\renewcommand*{\thepage}{A \arabic{page}} 
\begin{center}
{\bf \Large   
\begin{spacing}{1.25}
Supplementary Material for
``Randomization Tests for Weak Null Hypotheses in Randomized Experiments'' 
\end{spacing}
}
\end{center}

\setcounter{equation}{0}
\setcounter{section}{0}
\setcounter{figure}{0}
\setcounter{example}{0}
\setcounter{prop}{0}
\setcounter{cor}{0}
\setcounter{thm}{0}
\setcounter{table}{0}

\renewcommand {\theprop} {A\arabic{prop}}
\renewcommand {\theexample} {A\arabic{example}}
\renewcommand {\thefigure} {A\arabic{figure}}
\renewcommand {\thetable} {A\arabic{table}}
\renewcommand {\theequation} {A\arabic{equation}}
\renewcommand {\thelem} {A\arabic{lem}}
\renewcommand {\thesection} {A\arabic{section}}
\renewcommand {\thethm} {A\arabic{thm}}
\renewcommand {\thecor} {A\arabic{corollary}}
\renewcommand {\theassume} {A\arabic{assume}}

\allowdisplaybreaks

Let $| \cdot |$ be the absolute value of a scalar or the Euclidean norm of a vector. Let $\| \cdot \|_F$ be the Frobenius norm of a matrix. For $A,B \in \bb{R}^{m \times n}$, let $A*B$ be the component-wise product of $A$ and $B$: $(A*B)_{ij}=A_{ij}B_{ij}$.
Let $\max_i$, $\max_j$, and $\max_{i,j}$ denote the maximums over $\{ i=\ot{n}\}$, $\{ j=\ot{J}\}$, and both. Let $a \vee b= \max(a,b)$ be the maximum value of $a$ and $b$.

Appendix \ref{a::lemmas} gives several useful lemmas and their proofs. 
Appendix \ref{s:pf} gives the proofs of the main theorems.
Appendix \ref{a::proofothers} gives the proofs of other corollaries and propositions. 

\section{Lemmas}\label{a::lemmas}
\begin{lem}\label{l:DD18-LA}
\begin{enumerate}[(i)]
\item If $X \sim \cl{N}(0_J,A)$, then $X^\transp BX \eqd \sum_{j=1}^J \lambda_j(AB) \xi_j^2$. If $A$ is a projection matrix, then each $\lambda_j(AB) \leq \lambda_1(B)$.
\item If $A,B \succeq 0$ and $B$ is a correlation matrix, then $\lambda_1(A*B) \leq \lambda_1(A)$.
\item If $X_n \ind \cl{N}(0_m,A)$, and $B_n \inP B \succ 0$, then $X_n^\transp B_n^{-1}X_n \ind \sum_{j=1}^m \lambda_j(AB^{-1}) \xi_j^2$. If $B \succeq A$, then each $\lambda_j(AB^{-1}) \in[0,1]$.
\end{enumerate}
\end{lem}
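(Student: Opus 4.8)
The three parts are essentially independent linear-algebra and probability facts, so I would prove each in turn, with part (iii) reusing part (i).

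For part (i), the plan is to diagonalize after taking a symmetric square root. Write $A = A^{1/2} A^{1/2}$ with $A^{1/2}$ the symmetric positive semi-definite root, and represent $X = A^{1/2} Z$ with $Z \sim \mathcal{N}(0_J, I_J)$, so that $X^\transp B X = Z^\transp (A^{1/2} B A^{1/2}) Z$; here I take $B$ symmetric without loss, since only its symmetric part enters the quadratic form. Diagonalizing the symmetric matrix $A^{1/2} B A^{1/2} = Q \Lambda Q^\transp$ and using $Q^\transp Z \sim \mathcal{N}(0_J, I_J)$ gives $X^\transp B X \overset{d}{=} \sum_j \mu_j \xi_j^2$, where the $\mu_j$ are its eigenvalues. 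The one identity I need is that $A^{1/2} B A^{1/2}$ and $AB$ share the same multiset of eigenvalues: the nonzero ones coincide because $PQ$ and $QP$ have identical nonzero spectra (take $P = A^{1/2}$, $Q = B A^{1/2}$), and the zero eigenvalues then match by a dimension count. For the projection bound, I would use $A^{1/2} = A$ and $B \preceq \lambda_1(B) I$, so that congruence gives $ABA \preceq \lambda_1(B) A$; Weyl monotonicity together with $\lambda_1(A) \le 1$ forces $\lambda_j(AB) = \lambda_j(ABA) \le \lambda_1(B)$.

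For part (ii), the key tool is the Schur product theorem: the Hadamard product of two positive semi-definite matrices is positive semi-definite. The trick is to apply it to $\lambda_1(A) I - A \succeq 0$ and $B \succeq 0$, yielding $(\lambda_1(A) I - A) * B \succeq 0$. Expanding gives $(\lambda_1(A) I - A) * B = \lambda_1(A)(I * B) - A * B$, and since $B$ is a correlation matrix its diagonal is all ones, so $I * B = I$. Hence $\lambda_1(A) I - A * B \succeq 0$, i.e. $A * B \preceq \lambda_1(A) I$, which is exactly $\lambda_1(A * B) \le \lambda_1(A)$.

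For part (iii), the plan is a Slutsky-plus-continuous-mapping argument reducing to part (i). Since $B \succ 0$ and $B_n \overset{\mathbb{P}}{\to} B$, matrix inversion (continuous at invertible matrices) gives $B_n^{-1} \overset{\mathbb{P}}{\to} B^{-1}$, with $B_n$ invertible with probability tending to one. As the limit $B^{-1}$ is deterministic, $(X_n, B_n^{-1}) \overset{d}{\to} (X, B^{-1})$ jointly with $X \sim \mathcal{N}(0_m, A)$; applying the continuous map $(x, M) \mapsto x^\transp M x$ yields $X_n^\transp B_n^{-1} X_n \overset{d}{\to} X^\transp B^{-1} X$, which by part (i) equals $\sum_j \lambda_j(A B^{-1}) \xi_j^2$ in distribution. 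For the range of the weights, I would pass to the congruent matrix $B^{-1/2} A B^{-1/2}$, whose eigenvalues coincide with those of $A B^{-1}$: it is positive semi-definite (so the weights are $\ge 0$, using $A \succeq 0$ as a covariance), and $B \succeq A$ gives $B^{-1/2} A B^{-1/2} \preceq I$ (so the weights are $\le 1$). The main obstacle is not any single hard estimate but the bookkeeping in part (iii): I must ensure that the in-probability convergence of $B_n$ to a constant legitimately upgrades to joint convergence in distribution with $X_n$, which is precisely where Slutsky is needed, and that the invertibility caveat genuinely relies on $B \succ 0$ since matrix inversion is continuous only away from singular matrices. Beyond that, the Schur product theorem in part (ii) and the identity $\mathrm{spec}(PQ) \setminus \{0\} = \mathrm{spec}(QP) \setminus \{0\}$ in part (i) are the only nonroutine ingredients, and both are standard.
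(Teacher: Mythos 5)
Your proposal is correct and follows the same route as the paper: the paper's proof simply cites \cite{DD18} for parts (i) and (ii) and establishes part (iii) exactly as you do, via continuity of matrix inversion at $B \succ 0$, Slutsky's theorem, the continuous mapping theorem, and an appeal to part (i) together with the congruence $B^{-1/2}AB^{-1/2} \preceq I$ when $B \succeq A$. Your self-contained arguments for (i) and (ii) --- the square-root representation with the identity $\mathrm{spec}(PQ)\setminus\{0\} = \mathrm{spec}(QP)\setminus\{0\}$, and the Schur product theorem applied to $(\lambda_1(A)I - A) * B$ --- are the standard ones and supply precisely what the paper's citation leaves implicit.
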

\begin{proof}
(i) and (ii) come from \cite{DD18}. We prove (iii). The Continuous Mapping Theorem implies $B_n^{-1}\inP B^{-1}$, and Slutsky's Theorem then implies $X_n^\transp B_n^{-1}X_n \ind X^\transp B^{-1}X$.
By (i), $X^\transp B^{-1}X \eqd \sum_{j=1}^m \lambda_j(AB^{-1}) \xi_j^2$. If $B \succeq A$, then each $\lambda_j(AB^{-1}) \in[0,1]$.
\end{proof}
\begin{lem}\label{l:Massart}
A finite population $(Y_1, \ldots,Y_N)$ has mean $\bar{Y}_N$ and variance $S_N = (N-1)^{-1}\sum_{i=1}^N(Y_i- \bar{Y}_N)^2$. Let $\cl{A}\subseteq \{ \ot{N}\}$ be a simple random sample of size $N_1$, and $\hat{\bar{Y}}_N=N_1^{-1}\sum_{i \in \cl{A}}Y_i$. Then for $t \geq 0$,
\[ \bb{P}( \hat{\bar{Y}}_N- \bar{Y}_N \geq t) \vee \bb{P}( \hat{\bar{Y}}_N- \bar{Y}_N \leq -t) \leq
\exp \left\{ - \frac{Np_{N,1}^2t^2}{C_NS_N}\right\} \leq \exp \left\{ - \frac{Np_{N,1}^2t^2}{CS_N}\right\}, \]
where $p_{N,1}=N_1/N$, $C_N= \left[ 1+ \min\left\{ 1,9p_{N,1}^2,9(1-p_{N,1})^2 \right\}/70 \right]^2$ and $C=(71/70)^2$.
\end{lem}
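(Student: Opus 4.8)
The plan is to reduce the statement to a one-sided exponential tail bound for a mean under sampling without replacement and then carry explicit constants through. First I would normalize: replacing each $Y_i$ by $Y_i - \bar{Y}_N$ changes neither $S_N$ nor the centered statistic $\hat{\bar{Y}}_N - \bar{Y}_N$, so I may assume $\bar{Y}_N = 0$. Since the map $Y_i \mapsto -Y_i$ preserves $S_N$ while interchanging the events $\{\hat{\bar{Y}}_N - \bar{Y}_N \ge t\}$ and $\{\hat{\bar{Y}}_N - \bar{Y}_N \le -t\}$, it suffices to bound the single upper tail $\mathbb{P}(\hat{\bar{Y}}_N \ge t)$; the stated maximum of the two probabilities then follows verbatim.

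The feature that makes a variance-only exponential bound possible for an a priori unbounded finite population is the elementary estimate $\max_i |Y_i| \le \sqrt{(N-1)S_N}$, valid once the population is centered (from $\sum_i Y_i^2 = (N-1)S_N$ we get $Y_i^2 \le (N-1)S_N$ for every $i$). Thus the population is automatically bounded, with its range controlled by the variance, so any range-dependent correction term can eventually be re-expressed through $S_N$. With this in hand I would apply Hoeffding's reduction, dominating the moment generating function of the without-replacement mean $\hat{\bar{Y}}_N$ by that of the corresponding with-replacement (i.i.d.) mean, whose summands are drawn uniformly from the population with mean $0$ and variance $\sigma^2 = \tfrac{N-1}{N}S_N$. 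A Bennett/Bernstein-type bound on the single-draw cumulant, followed by the standard Chernoff optimization over the exponential parameter, then produces an exponent of the $-\,c\,N p_{N,1}^2 t^2 / S$ shape after substituting the range estimate $\sqrt{(N-1)S_N}$.

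The remaining work is to make this uniform in $t \ge 0$ and to pin down the constant. For moderate $t$ the quadratic (Gaussian) part of the Bennett bound dominates and already yields the $t^2/S$ behaviour; for large $t$ the bounded support forces the probability to vanish once $t$ exceeds the maximal attainable deviation, which scales with $(1-p_{N,1})$, and this cutoff is what feeds the refinement $\min\{1, 9 p_{N,1}^2, 9(1-p_{N,1})^2\}$ inside $C_N$. Optimizing the two regimes against each other delivers the first inequality with the stated $C_N$. The second inequality is then immediate: since $\min\{1, 9 p_{N,1}^2, 9(1-p_{N,1})^2\} \le 1$, we have $C_N \le (1 + 1/70)^2 = (71/70)^2 = C$, and enlarging $C_N$ to $C$ only weakens the exponent.

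The main obstacle is precisely the constant bookkeeping in the last step: the reductions (centering, symmetry, Hoeffding domination) and the Chernoff optimization are routine, but showing that the Bernstein correction term, which a priori scales with the range, folds cleanly into the multiplicative factor with denominator $70$ after the substitution $\max_i|Y_i| \le \sqrt{(N-1)S_N}$ — and that the worst case is the maximally spread (near single-outlier) population, for which the bound must still be verified uniformly over $t$ — is where the delicate numerical work lies, including checking that the finite-support cutoff produces exactly the $9 p_{N,1}^2$ and $9(1-p_{N,1})^2$ terms.
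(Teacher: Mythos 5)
There is a genuine gap, on two levels. First, your write-up is a strategy outline rather than a proof: the decisive step --- showing that after the substitution $\max_i|Y_i|\le\sqrt{(N-1)S_N}$ the Chernoff/Bernstein optimization produces exactly the exponent $Np_{N,1}^2t^2/(C_NS)$ with $C_N=\bigl[1+\min\{1,9p_{N,1}^2,9(1-p_{N,1})^2\}/70\bigr]^2$ --- is precisely the part you defer as ``delicate numerical work,'' so the first (and only substantive) inequality is asserted, not established. For comparison, the paper does not prove it either: its proof of this lemma consists of citing \cite{LassoTE16} for the first inequality and observing that the second follows from $C_N\le C$; only that last trivial step is actually argued, and your proposal handles it the same way.

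Second, and more seriously, the route you propose cannot deliver the stated constant. Hoeffding's convex-ordering reduction replaces the without-replacement mean by an i.i.d.\ mean of $N_1$ draws with variance $\sigma^2=\tfrac{N-1}{N}S_N$, and Bernstein then gives an exponent of at most $N_1t^2/(2\sigma^2+\tfrac{2}{3}Mt)\le Np_{N,1}t^2/(2\sigma^2)\approx Nt^2\,p_{N,1}/(2S)$ even as $t\to 0$. The target exponent is $Np_{N,1}^2t^2/(C_NS)$ with $C_N\to 1$ as $p_{N,1}\to 1$ (since then $9(1-p_{N,1})^2\to 0$), i.e.\ roughly $Nt^2/S$, which exceeds the i.i.d.\ Bernstein exponent by a factor of $2$ in that regime; already for $p_{N,1}>C_N/2\approx 0.515$ the Hoeffding-dominated bound is strictly weaker than the claim in the Gaussian regime. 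The reason is structural: the i.i.d.\ comparison discards the finite-population variance reduction $\Var(\hat{\bar Y}_N)=S(1-p_{N,1})/(Np_{N,1})$, which is essential when the sampling fraction is large, and your ``support cutoff for large $t$'' does not bridge the intermediate range of $t$ where neither regime applies. A correct proof of such a two-sided, all-$p$ bound must exploit the complementary-sample symmetry between $\mathcal{A}$ and $\mathcal{A}^c$ (or a Serfling-type martingale argument), which is what the cited reference does; the centering, sign-flip symmetry, and range bound $\max_i|Y_i|\le\sqrt{(N-1)S_N}$ in your first paragraph are all correct but are only the easy reductions.
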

\begin{proof}
\cite{LassoTE16} prove the first inequality. The second follows from $C_N\leq C.$
\end{proof}
Lemma \ref{l:Massart} is crucial for our proof of almost sure convergence for sampling without replacement, as we are about to see.
\begin{lem}\label{l:asconv}
Let $\big( \{ Y_{N,i}:i=\ot{N}\} \big) $ be a sequence of populations with means $( \bar{Y}_N)$ and variances $(S_N)$. Suppose we take a simple random sample from each population of size $N_1 \geq 2$ with sample mean $\hat{\bar{Y}}_N$ and variance $\hat{S}_N$. Assume $\lmt{N}N_1/N=p_1>0$.
\begin{enumerate}[(i)]
\item If the sequence $(S_N)$ is bounded above by $S_\T{max}< \infty$, then $| \hat{\bar{Y}}_N- \bar{Y}_N| \as 0$. If we also have $\lmt{N}\bar{Y}_N= \bar{Y}_\infty$, then $\hat{\bar{Y}}_N \as \bar{Y}_\infty$. Assumption \ref{AsuA} implies these results.
\item If there is $L< \infty$ such that $\sum_{i=1}^N(Y_{N,i}- \bar{Y}_N)^4/N \leq L$ for all $N$, then $| \hat{S}_N-S_N| \as 0$. If we also have $\lmt{N}S_N=S_\infty$, then $\hat{S}_N \as S_\infty$. Assumption \ref{AsuB} implies these results.
\end{enumerate}
\end{lem}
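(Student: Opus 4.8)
My plan is to obtain part (i) from the exponential concentration bound of Lemma \ref{l:Massart} combined with the first Borel--Cantelli lemma, and then to reduce part (ii) to part (i) by expressing the sample variance through sample moments. For part (i), I would fix $\varepsilon > 0$ and apply Lemma \ref{l:Massart} to bound $\bb{P}(|\hat{\bar{Y}}_N - \bar{Y}_N| > \varepsilon) \leq 2\exp\{-N p_{N,1}^2 \varepsilon^2/(C S_N)\}$. Since $p_{N,1} \to p_1 > 0$ and $S_N \leq S_{\max}$, the exponent is at most $-\kappa N$ for some $\kappa = \kappa(\varepsilon) > 0$ and all large $N$, so these probabilities are summable in $N$; the first Borel--Cantelli lemma (which requires no independence across $N$) then gives $|\hat{\bar{Y}}_N - \bar{Y}_N| \as 0$, and the triangle inequality $|\hat{\bar{Y}}_N - \bar{Y}_\infty| \leq |\hat{\bar{Y}}_N - \bar{Y}_N| + |\bar{Y}_N - \bar{Y}_\infty|$ upgrades this to $\hat{\bar{Y}}_N \as \bar{Y}_\infty$ when $\bar{Y}_N \to \bar{Y}_\infty$. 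Under Assumption \ref{AsuA}, $(S_N)$ converges and is hence bounded, which supplies the hypotheses.

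For part (ii), I would first exploit the translation invariance of both $\hat{S}_N$ and $S_N$ to center the population, assuming without loss of generality that $\bar{Y}_N = 0$. Setting $\hat{m}_1 = N_1^{-1}\sum_{i\in\cl{A}} Y_{N,i}$, $\hat{m}_2 = N_1^{-1}\sum_{i\in\cl{A}} Y_{N,i}^2$, and $m_2 = N^{-1}\sum_{i=1}^N Y_{N,i}^2$, one has $\hat{S}_N = \frac{N_1}{N_1-1}(\hat{m}_2 - \hat{m}_1^2)$ and $S_N = \frac{N}{N-1}m_2$. The single fourth-moment hypothesis bounds the population variances of both the raw and the squared observations: $m_2 \leq (N^{-1}\sum Y_{N,i}^4)^{1/2} \leq L^{1/2}$ bounds the variance of $Y_{N,i}$, while the variance of $Y_{N,i}^2$ is at most $N^{-1}\sum Y_{N,i}^4 \leq L$. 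Both sequences of population variances being bounded, I would apply part (i) once to $\{Y_{N,i}\}$ and once to $\{Y_{N,i}^2\}$ to conclude $\hat{m}_1 \as 0$ and $\hat{m}_2 - m_2 \as 0$.

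To finish, I would note that $N_1/N \to p_1 > 0$ forces $N_1 \to \infty$, so $N_1/(N_1-1)$ and $N/(N-1)$ both tend to $1$, and decompose $\hat{S}_N - S_N = \frac{N_1}{N_1-1}(\hat{m}_2 - m_2) + (\frac{N_1}{N_1-1} - \frac{N}{N-1})m_2 - \frac{N_1}{N_1-1}\hat{m}_1^2$. Each summand tends to $0$ almost surely---the first since $\hat{m}_2 - m_2 \as 0$ with bounded prefactor, the second since the bracket vanishes while $m_2 \leq L^{1/2}$ stays bounded, the third since $\hat{m}_1^2 \as 0$---so $|\hat{S}_N - S_N| \as 0$. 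Writing $\hat{S}_N = (\hat{S}_N - S_N) + S_N$ then gives $\hat{S}_N \as S_\infty$ when $S_N \to S_\infty$, and Assumption \ref{AsuB} furnishes both the moment bound and this convergence.

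The step I expect to be the crux is not any single estimate but the bookkeeping in part (ii): verifying that one fourth-moment bound simultaneously controls the population variances of $Y_{N,i}$ and $Y_{N,i}^2$ so that part (i) can be invoked for each, and tracking the $N_1/(N_1-1)$ and $N/(N-1)$ corrections so the sample-variance decomposition collapses cleanly. The centering reduction and the fact that the first Borel--Cantelli lemma tolerates arbitrary dependence across $N$ are what keep the argument short.
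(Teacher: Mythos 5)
Your proposal is correct and follows essentially the same route as the paper's proof: part (i) via Lemma \ref{l:Massart} plus Borel--Cantelli, and part (ii) by applying part (i) to the squared centered observations (whose population variance is bounded via the fourth-moment condition) and absorbing the $-\tfrac{N_1}{N_1-1}\hat{m}_1^2$ correction, which is exactly the paper's $\hat{S}_N-\tilde{S}_N$ term written in centered coordinates. The only difference is cosmetic bookkeeping, namely your WLOG centering versus the paper's explicit intermediate quantity $\tilde{S}_N$.
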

\begin{proof}
(i) Because $p_{N,1}=N_1/N \to p_1$, we can pick a positive integer $N^*$ such that $N \geq N^*$ implies $p_{N,1}>p_1/2$. Then by Lemma \ref{l:Massart}, there is a universal constant $C \in (0, \infty)$, independent of $N$, such that, for $N \geq N^*$ and $t \geq 0$,
\[ \begin{split}
\bb{P}(| \hat{\bar{Y}}_N- \bar{Y}_N| \geq t) \leq 2 \exp \left\{ - \frac{Np_{N,1}^2}{CS_N}t^2 \right\} \leq 2 \exp \left\{ - \frac{p_1^2}{4CS_\T{max}}Nt^2 \right\} \\
\Longrightarrow \sum_{N \geq N^*}\bb{P}(| \hat{\bar{Y}}_N- \bar{Y}_N| \geq t) \leq 2 \sum_{N \geq N^*}\exp \left\{ - \frac{p_1^2}{4CS_\T{max}}Nt^2 \right\} < \infty.
\end{split}\]
By the Borel--Cantelli Lemma, $| \hat{\bar{Y}}_N- \bar{Y}_N| \as 0$.

(ii) First, by the Cauchy--Schwarz Inequality, we have that for all $N$
\[ S_N= \frac{1}{N-1}\sum_{i=1}^N(Y_{N,i}- \bar{Y}_N)^2 \leq \frac{N^{1/2}}{N-1}\left\{ \frac{1}{N}\sum_{i=1}^N(Y_{N,i}- \bar{Y}_N)^4 \right\}^{1/2}\leq \frac{N}{N-1}L^{1/2}, \]
which is bounded above as $N \to \infty$, so by (i), $| \hat{\bar{Y}}_N- \bar{Y}_N| \as 0$.

Second, let $W_{N,i}$ be the indicator for $Y_i$ being in the simple random sample. Define as an intermediate quantity $\tilde{S}_N= \sum_{i=1}^NW_{N,i}(Y_{N,i}- \bar{Y}_N)^2/(N_1-1)$, which differs from $\hat{S}_N$ by an almost surely zero quantity as $N \to \infty$:
\[ \begin{split}
\hat{S}_N- \tilde{S}_N=& \frac{1}{N_1-1}\sum_{i=1}^NW_{N,i} \big\{ (Y_{N,i}- \hat{\bar{Y}}_N)^2-(Y_{N,i}- \bar{Y}_N)^2 \big\} \\
=&\frac{1}{N_1-1}\sum_{i=1}^NW_{N,i} ( \bar{Y}_N- \hat{\bar{Y}}_N)( 2 Y_{N,i} - \hat{\bar{Y}}_N - \bar{Y}_N ) \\
=& \frac{1}{N_1-1} \left\{ 2( \bar{Y}_N- \hat{\bar{Y}}_N) \sum_{i=1}^NW_{N,i}Y_{N,i}+N_1( \hat{\bar{Y}}_N^2- \bar{Y}_N^2) \right\} \\
=& \frac{N_1}{N_1-1} \big\{ 2( \bar{Y}_N- \hat{\bar{Y}}_N) \hat{\bar{Y}}_N+ \hat{\bar{Y}}_N^2- \bar{Y}_N^2 \big\} \\
=& \frac{-N_1}{N_1-1}( \hat{\bar{Y}}_N- \bar{Y}_N)^2 \as 0 .
\end{split}\]
Third, we note that the variance of $\{ (Y_{N,i}- \bar{Y}_N)^2 \}_{i=1}^N$ is bounded above for all $N$:
\[ \Var \left[ \{ (Y_{N,i}- \bar{Y}_N)^2 \}_{i=1}^N \right] \leq \frac{1}{N-1}\sum_{i=1}^N(Y_{N,i}- \bar{Y}_N)^4 \leq \frac{N}{N-1}L. \]
So by (i), $\big| \sum_{i=1}^NW_{N,i}(Y_{N,i}- \bar{Y}_N)^2/N_1- \sum_{i=1}^N(Y_{N,i}- \bar{Y}_N)^2/N \big| \as 0$, and therefore
\[ \begin{split}
| \tilde{S}_N-S_N|=& \left| \frac{N_1}{N_1-1}\frac{1}{N_1}\sum_{i=1}^NW_{N,i}(Y_{N,i}- \bar{Y}_N)^2- \frac{N_1}{N_1-1}\frac{1}{N}\sum_{i=1}^N(Y_{N,i}- \bar{Y}_N)^2+ \frac{N-N_1}{(N-1)(N_1-1)}\frac{N-1}{N}S_N \right| \\
\leq & \frac{N_1}{N_1-1}\left| \frac{1}{N_1}\sum_{i=1}^NW_{N,i}(Y_{N,i}- \bar{Y}_N)^2- \frac{1}{N}\sum_{i=1}^N(Y_{N,i}- \bar{Y}_N)^2 \right| + \frac{N-N_1}{(N-1)(N_1-1)}\frac{N-1}{N}S_N \\
\leq & \frac{N_1}{N_1-1}\left| \frac{1}{N_1}\sum_{i=1}^NW_{N,i}(Y_{N,i}- \bar{Y}_N)^2- \frac{1}{N}\sum_{i=1}^N(Y_{N,i}- \bar{Y}_N)^2 \right| + \frac{1}{N_1-1}L^{1/2}\as  0.
\end{split}\]
We now finally have $| \hat{S}_N-S_N| \leq | \hat{S}_N- \tilde{S}_N|+| \tilde{S}_N-S_N| \as 0$.
\end{proof}
\begin{lem}\label{l:maxto0}
Under Assumption \ref{AsuA} and for all sequences of $W$, the imputed potential outcomes in FRT-\ref{step::impute} satisfy
$\lmt{N}\max_{i,j}\{ Y_i^*(j)- \bar{Y}^*(j) \}^2/N=0$.
\end{lem}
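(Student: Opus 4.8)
The plan is to reduce the double maximum over $(i,j)$ to a single maximum over $i$ by exploiting the additive structure of the imputed outcomes, and then to bound the resulting quantity crudely using Assumption \ref{AsuA}.

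First I would substitute the explicit imputation formula $Y_i^*(j) = Y_i^\T{obs} + z_j - z_{W_i}$ from FRT-\ref{step::impute} and compute the cross-sectional mean $\bar{Y}^*(j) = \bar{Y}^\T{obs} + z_j - \bar{z}_W$, where $\bar{Y}^\T{obs} = N^{-1}\sum_i Y_i^\T{obs}$ and $\bar{z}_W = N^{-1}\sum_i z_{W_i}$. The key observation is that the additive shift $z_j$ cancels upon centering, so that $Y_i^*(j) - \bar{Y}^*(j) = U_i - \bar{U}$, where $U_i = Y_i^\T{obs} - z_{W_i}$ and $\bar{U} = N^{-1}\sum_k U_k$ carry no dependence on $j$. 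This collapses the target to $\max_{i,j}\{Y_i^*(j) - \bar{Y}^*(j)\}^2 = \max_i (U_i - \bar{U})^2$, so the maximum over $j$ disappears entirely.

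Next I would decompose $U_i = \{Y_i(W_i) - \bar{Y}(W_i)\} + \{\bar{Y}(W_i) - z_{W_i}\}$. The first bracket is bounded in absolute value by $M_N := \max_{i,j}|Y_i(j) - \bar{Y}(j)|$, and the second bracket equals $a_{W_i}$, where $a_j := \bar{Y}(j) - z_j$. Since $(\bar{Y}(j))$ converges by Assumption \ref{AsuA} and the $z_j$ are fixed, each $a_j$ converges, whence $A := \sup_N \max_j |a_j| < \infty$. Hence $|U_i| \leq M_N + A$ for every $i$, and $|\bar{U}| \leq M_N + A$ as an average of such terms, giving $|U_i - \bar{U}| \leq 2(M_N + A)$ and therefore $(U_i - \bar{U})^2 \leq 8(M_N^2 + A^2)$ uniformly in $i$.

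Finally I would divide by $N$ and pass to the limit: $\max_i(U_i - \bar{U})^2 / N \leq 8 M_N^2/N + 8 A^2/N$, where the first term vanishes because Assumption \ref{AsuA} states precisely that $M_N^2/N \to 0$, and the second vanishes because $A$ is a finite constant. I do not anticipate a genuine obstacle; the only step requiring care is the first, namely recognizing that centering annihilates the $j$-dependence, after which the estimate is a routine application of the triangle inequality. I would also emphasize that the bound on $|U_i - \bar{U}|$ holds deterministically for every treatment assignment, so the conclusion is uniform over all sequences of $W$, as the statement demands.
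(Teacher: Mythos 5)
Your proposal is correct and follows essentially the same route as the paper's proof: both exploit the additivity of the imputed outcomes to observe that $Y_i^*(j)-\bar{Y}^*(j)$ does not depend on $j$, and then bound the resulting single maximum by the triangle inequality, splitting it into the $o(N)$ term $\max_{i,j}\{Y_i(j)-\bar{Y}(j)\}^2$ from Assumption \ref{AsuA} plus constants that are bounded because $(\bar{Y}(j))$ converges and the $z_j$ are fixed. The only difference is cosmetic — you center unit by unit via $U_i=\{Y_i(W_i)-\bar{Y}(W_i)\}+\{\bar{Y}(W_i)-z_{W_i}\}$ whereas the paper bounds $\max_i(Y_i^{\textup{obs}}-\bar{Y}_\cdot^{\textup{obs}})^2$ and $\max_i(z_{W_i}-\bar{z})^2$ separately — and both yield deterministic bounds valid for every sequence of $W$.
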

\begin{proof}
Recall the $z_j$'s in FRT-\ref{step::impute} and define $\bar{z} = \sum_{i=1}^N z_{W_i} / N= \sum_{j=1}^J  N_j z_j/N $.
Because $( \bar{Y}(j))$ converges for all $j=\ot{J}$, and the $z_j$'s do not depend on $N$, we may pick $Y_\T{max}\in \bb{R}$ such that for all $N$, 
\[ \max_j| \bar{Y}(j)| \vee \max_j|z_j- \bar{z}| \leq Y_\T{max}. \]
Put $L_N= \max_{i,j}\{ Y_i(j)- \bar{Y}(j) \}^2$, which is $o(N)$ by Assumption \ref{AsuA}. Then  
\[ \max_{i,j}|Y_i(j)- \bar{Y}(j)|= \big[ \max_{i,j}\{ Y_i(j)- \bar{Y}(j) \}^2 \big]^{1/2}\leq L_N^{1/2}. \]
Next,
\[ \max_i|Y_i^\T{obs}| \leq \max_{i,j}|Y_i(j)| \leq \max_{i,j}|Y_i(j)- \bar{Y}(j)|+ \max_j| \bar{Y}(j)|  
\leq L_N^{1/2}+Y_\T{max}. \]
Recall that $\bar{Y}_\cdot^\T{obs}= \sum_{i=1}^NY_i^\T{obs}/N$, and we have the following bounds:
\[ | \bar{Y}_\cdot^\T{obs}| \leq  \max_i|Y_i^\T{obs}| \leq L_N^{1/2}+Y_\T{max}, \quad
\max_i|Y_i^\T{obs}- \bar{Y}_\cdot^\T{obs}| \leq  \max_i|Y_i^\T{obs}|+| \bar{Y}_\cdot^\T{obs}| \leq 2(L_N^{1/2}+Y_\T{max}). \]
Using the above bounds and the additional bound $(a+b)^2 \leq 2(a^2+b^2)$, we have
\[ \max_i(Y_i^\T{obs}- \bar{Y}_\cdot^\T{obs})^2= \big( \max_i|Y_i^\T{obs}- \bar{Y}_\cdot^\T{obs}| \big)^2 \leq 4(L_N^{1/2}+Y_\T{max})^2 \leq 8(L_N+Y_\T{max}^2). \]
In FRT-\ref{step::impute}, we have $Y_i^*(j) = Y_i^\T{obs} + z_j - z_{W_i}$ and therefore $\bar{Y}^*(j) = \bar{Y}_\cdot^\T{obs} + z_j - \bar{z}$. Finally, we have
\[ \begin{split}
\max_{i,j}\{ Y_i^*(j)- \bar{Y}^*(j) \}^2=& \max_i(Y_i^\T{obs}-z_{W_i}- \bar{Y}_\cdot^\T{obs}+ \bar{z})^2 \\
\leq & 2 \big\{ \max_i(Y_i^\T{obs}- \bar{Y}_\cdot^\T{obs})^2+ \max_i(z_{W_i}- \bar{z})^2 \big\} \\
\leq & 16(L_N+Y_\T{max}^2)+2Y_\T{max}^2,
\end{split}\]
which is $o(N)$ as desired.
\end{proof}
Now we visit the vector versions of Lemmas \ref{l:asconv} and \ref{l:maxto0}.
\begin{lem}\label{l:asconv-vec}
Let $( \{ Y_{N,i}:i=\ot{N}\} ) $ be a sequence of populations with means $\bar{Y}_N \in \bb{R}^d$ and covariances $S_N$. Suppose we take a simple random sample from each population of size $N_1 \geq d+1$ with sample mean $\hat{\bar{Y}}_N$ and covariance $\hat{S}_N$. Assume $\lmt{N}N_1/N=p_1>0$.
\begin{enumerate}[(i)]
\item If the sequence $( \| S_N \|_F)$ is bounded above by $S_\T{max}< \infty$, then $| \hat{\bar{Y}}_N- \bar{Y}_N| \as 0$. If we also have $\lmt{N}\bar{Y}_N= \bar{Y}_\infty$, then $\hat{\bar{Y}}_N \as \bar{Y}_\infty$. Assumption \ref{AsuA-vec} implies these results.
\item If there is $L< \infty$ such that $\sum_{i=1}^N|Y_{N,i}- \bar{Y}_N|^4/N \leq L$ for all $N$, then $\| \hat{S}_N-S_N \|_F \as 0$. If we also have $\lmt{N}S_N=S_\infty$, then $\hat{S}_N \as S_\infty$. Assumption \ref{AsuB-vec} implies these results.
\end{enumerate}
\end{lem}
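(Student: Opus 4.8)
The plan is to reduce the vector statement to the already-established scalar Lemma \ref{l:asconv} by exploiting that the dimension $d$ is fixed, so that a finite sum (or maximum) of almost surely convergent sequences again converges almost surely. Writing $\hat{\bar{Y}}_N(k)$ and $\bar{Y}_N(k)$ for the $k$-th coordinates, I have $|\hat{\bar{Y}}_N- \bar{Y}_N|^2= \sum_{k=1}^d \{ \hat{\bar{Y}}_N(k)- \bar{Y}_N(k) \}^2$ and $\| \hat{S}_N-S_N \|_F^2= \sum_{j,k}\{ \hat{S}_N(j,k)-S_N(j,k) \}^2$, so it suffices to show that each coordinate of the mean and each entry of the covariance converges almost surely.

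For part (i) I would apply Lemma \ref{l:asconv}(i) coordinate by coordinate to the scalar populations $\{ Y_{N,i}(k):i=\ot{N}\}$. The only hypothesis to verify is that each coordinate variance $S_N(k,k)$ is bounded, which is immediate from the assumed bound on $\| S_N \|_F$, since any single entry of a matrix is dominated by its Frobenius norm, giving $S_N(k,k) \leq \| S_N \|_F \leq S_{\T{max}}$. Hence $|\hat{\bar{Y}}_N(k)- \bar{Y}_N(k)| \as 0$ for each $k$, and summing the finitely many squared coordinates yields $|\hat{\bar{Y}}_N- \bar{Y}_N| \as 0$. The refinement under $\bar{Y}_N \to \bar{Y}_\infty$ follows from the triangle inequality, and Assumption \ref{AsuA-vec} supplies both the convergence of $\bar{Y}_N$ and, through convergence (hence boundedness) of $S_N$, the required variance bound.

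The crux is part (ii), since Lemma \ref{l:asconv}(ii) speaks only to scalar \emph{variances}, whereas the off-diagonal entries of $\hat{S}_N$ are \emph{covariances} with no direct scalar counterpart. The key device is polarization: for fixed $j,k$ I introduce the scalar populations $U_i=Y_{N,i}(j)+Y_{N,i}(k)$ and $V_i=Y_{N,i}(j)-Y_{N,i}(k)$, so that both the sample and the population covariance satisfy $\hat{S}_N(j,k)= \{ \hat{S}_N^U- \hat{S}_N^V \}/4$ and $S_N(j,k)= \{ S_N^U-S_N^V \}/4$, because the underlying quadratic forms obey $(a+b)^2-(a-b)^2=4ab$ entrywise with the same (possibly data-dependent) centering. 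I would then apply Lemma \ref{l:asconv}(ii) to $(U_i)$ and $(V_i)$, whose fourth-moment hypothesis transfers from that of $(Y_{N,i})$: writing $\delta_{ij}=Y_{N,i}(j)- \bar{Y}_N(j)$, the centered value satisfies $(\delta_{ij}\pm \delta_{ik})^2 \leq 2(\delta_{ij}^2+ \delta_{ik}^2) \leq 2|Y_{N,i}- \bar{Y}_N|^2$, so $\sum_i(U_i- \bar{U})^4/N \leq 4L$ and likewise for $V$. Thus $\hat{S}_N^U-S_N^U \as 0$ and $\hat{S}_N^V-S_N^V \as 0$, whence $\hat{S}_N(j,k)-S_N(j,k) \as 0$ for every $j,k$, and summing over the finitely many entries gives $\| \hat{S}_N-S_N \|_F \as 0$. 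The refinement under $S_N \to S_\infty$ again follows from the triangle inequality, and Assumption \ref{AsuB-vec} furnishes the fourth-moment bound and the convergence of $S_N$.

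I expect the main obstacle to be the bookkeeping in part (ii): confirming that the polarization identity holds simultaneously for the plug-in sample covariance and the population covariance, and that the fourth-moment condition survives the linear combination with only a harmless constant inflation. Everything else is a routine componentwise lift of the scalar lemma.
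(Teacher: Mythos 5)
Your proof is correct, and part (ii) takes a genuinely different route from the paper's. For part (i) you and the paper do the same thing: apply Lemma \ref{l:asconv}(i) coordinate by coordinate, with the entrywise bound $S_N(k,k)\leq\|S_N\|_F$ supplying the hypothesis. For part (ii), the paper does \emph{not} reduce to the scalar lemma for the off-diagonal entries; instead it reruns the entire three-step argument of Lemma \ref{l:asconv}(ii) on the cross-product population $\{(Y_{1Ni}-\bar{Y}_{1N})(Y_{2Ni}-\bar{Y}_{2N})\}_{i=1}^N$: it introduces the population-centered intermediate quantity $\tilde{S}_{12N}$, shows $\hat{S}_{12N}-\tilde{S}_{12N}=-\tfrac{N_1}{N_1-1}(\bar{Y}_{1N}-\hat{\bar{Y}}_{1N})(\bar{Y}_{2N}-\hat{\bar{Y}}_{2N})\as 0$, and bounds the variance of the products via Cauchy--Schwarz on the fourth moments so that part (i) applies to their sample mean. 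Your polarization device $U_i=Y_{N,i}(j)+Y_{N,i}(k)$, $V_i=Y_{N,i}(j)-Y_{N,i}(k)$ sidesteps all of that: the identity $4\hat{S}_N(j,k)=\hat{S}_N^U-\hat{S}_N^V$ does hold exactly at both the sample and population level (the sample means of $U$ and $V$ are the corresponding combinations of $\hat{\bar{Y}}_N(j)$ and $\hat{\bar{Y}}_N(k)$, so the centerings match), and your fourth-moment transfer $\sum_i(U_i-\bar{U})^4/N\leq 4L$ is right, so Lemma \ref{l:asconv}(ii) applies verbatim to $U$ and $V$. Your route buys a shorter, purely reductive argument that reuses the scalar lemma as a black box; the paper's direct computation is longer but keeps the error decomposition explicit, which would matter if one wanted rates or to track constants. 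Both are valid.
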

\begin{proof}
(i) Note that each component of $Y_{Ni}$ meets Lemma \ref{l:asconv}, so $| \hat{\bar{Y}}_N- \bar{Y}_N| \as 0$ holds component by component.

(ii) Because each component of $Y_{Ni}$ meets Lemma \ref{l:asconv}, each entry on the main diagonal of $\hat{S}_N-S_N$ converges almost surely to 0.
It is thus enough to show convergence of the $(1,2)$th entry, for then identical logic will show convergence of an arbitrary off-diagonal entry. Let $Y_{1Ni}$ and $Y_{2Ni}$ be the first and second entries of $Y_{Ni}$.
 
We follow the steps of Lemma \ref{l:asconv} closely. First, $\| S_N \|_F$ is bounded above:
\[ \begin{split}
 \| S_N \|_F=& \frac{1}{N-1}\big\| \sum_{i=1}^N \op{(Y_{Ni}- \bar{Y}_N)}\big\|_F \\
\leq & \frac{1}{N-1}\sum_{i=1}^N|Y_{Ni}- \bar{Y}_N|^2 
 \leq \frac{N^{1/2}}{N-1}\left( \sum_{i=1}^N|Y_{Ni}- \bar{Y}_N|^4 \right)^{1/2}\leq \frac{NL^{1/2}}{N-1},
\end{split}\]
where the first inequality follows from the Triangle Inequality and $\| ab^\transp \|_F=|a| \cdot |b|$ for two vectors $a$ and $b$, and the second inequality by the Cauchy--Schwarz Inequality. By (i), $| \hat{\bar{Y}}_N- \bar{Y}_N| \as 0$.

Second, let $W_{N,i}$ be the indicator for $Y_i$ being in the simple random sample. Define as an intermediate quantity $\tilde{S}_{12N}= \sum_{i=1}^N  W_{N,i}(Y_{1Ni}- \bar{Y}_{1N})(Y_{2Ni}- \bar{Y}_{2N})/(N_1-1)$, which differs from $\hat{S}_{12N}$ by an almost surely zero quantity as $N \to \infty$:
\[ \begin{split}
\hat{S}_{12N}- \tilde{S}_{12N}=& \frac{1}{N_1-1}\sum_{i=1}^NW_{N,i}\{ (Y_{1Ni}- \hat{\bar{Y}}_{1N})(Y_{2Ni}- \hat{\bar{Y}}_{2N})- (Y_{1Ni}- \bar{Y}_{1N})(Y_{2Ni}- \bar{Y}_{2N}) \} \\
=& \frac{1}{N_1-1}\sum_{i=1}^NW_{N,i} \{ ( \bar{Y}_{1N}- \hat{\bar{Y}}_{1N})Y_{2Ni}+( \bar{Y}_{2N}- \hat{\bar{Y}}_{2N})Y_{1Ni}+ \hat{\bar{Y}}_{1N}\hat{\bar{Y}}_{2N}- \bar{Y}_{1N}\bar{Y}_{2N}\} \\
=& \frac{N_1}{N_1-1}\{ ( \bar{Y}_{1N}- \hat{\bar{Y}}_{1N}) \hat{\bar{Y}}_{2N}+( \bar{Y}_{2N}- \hat{\bar{Y}}_{2N}) \hat{\bar{Y}}_{1N}+ \hat{\bar{Y}}_{1N}\hat{\bar{Y}}_{2N}- \bar{Y}_{1N}\bar{Y}_{2N}\} \\
=& \frac{-N_1}{N_1-1}( \bar{Y}_{1N}- \hat{\bar{Y}}_{1N})( \bar{Y}_{2N}- \hat{\bar{Y}}_{2N}) \as 0.
\end{split}\] 
Third, we note that the variance of $\{ (Y_{1Ni}- \bar{Y}_{1N})(Y_{2Ni}- \bar{Y}_{2N}) \}_{i=1}^N$ is bounded above for all $N$:
\[ \begin{split}
\Var \left[ \{ (Y_{1Ni}- \bar{Y}_{1N})(Y_{2Ni}- \bar{Y}_{2N}) \}_{i=1}^N \right]
\leq & \frac{1}{N-1}\sum_{i=1}^N(Y_{1Ni}- \bar{Y}_{1N})^2(Y_{2Ni}- \bar{Y}_{2N})^2 \\
\leq & \frac{1}{N-1}\left\{ \sum_{i=1}^N(Y_{1Ni}- \bar{Y}_{1N})^4 \sum_{i=1}^N(Y_{2Ni}- \bar{Y}_{2N})^4 \right\}^{1/2}\\
\leq & \frac{1}{N-1}\left\{ \sum_{i=1}^N(Y_{1Ni}- \bar{Y}_{1N})^4 \vee \sum_{i=1}^N(Y_{2Ni}- \bar{Y}_{2N})^4 \right\} \\
\leq & \frac{NL}{N-1}.
\end{split}\]
So by (i), $\big| \sum_{i=1}^NW_{N,i}(Y_{1Ni}- \bar{Y}_{1N})(Y_{2Ni}- \bar{Y}_{2N})/N_1- \sum_{i=1}^N(Y_{1Ni}- \bar{Y}_{1N})(Y_{2Ni}- \bar{Y}_{2N})/N \big| \as 0$. In addition, $S_{12N}\leq \| S_N \|_F$ is bounded from above. These imply that 
\[ \begin{split}
| \tilde{S}_{12N}-S_{12N}|=& \left| \frac{N_1}{N_1-1}\left\{ \frac{1}{N_1}\sum_{i=1}^NW_{N,i}(Y_{1Ni}- \bar{Y}_{1N})(Y_{2Ni}- \bar{Y}_{2N})- \frac{1}{N}\sum_{i=1}^N(Y_{1Ni}- \bar{Y}_{1N})(Y_{2Ni}- \bar{Y}_{2N}) \right\} \right. \\
& \left. + \left( \frac{N_1}{(N_1-1)N}- \frac{1}{N-1}\right) \sum_{i=1}^N(Y_{1Ni}- \bar{Y}_{1N})(Y_{2Ni}- \bar{Y}_{2N}) \right| \\
\leq & \frac{N_1}{N_1-1} \left| \frac{1}{N_1}\sum_{i=1}^NW_{N,i}(Y_{1Ni}- \bar{Y}_{1N})(Y_{2Ni}- \bar{Y}_{2N})- \frac{1}{N}\sum_{i=1}^N(Y_{1Ni}- \bar{Y}_{1N})(Y_{2Ni}- \bar{Y}_{2N}) \right| \\
&+ \frac{N-N_1}{(N-1)(N_1-1)}\frac{N-1}{N}S_{12N}  \as 0.
\end{split}\]
We now finally have $|\hat{S}_{12N}-S_{12N}| \leq |\hat{S}_{12N}- \tilde{S}_{12N}|+| \tilde{S}_{12N}-S_{12N}| \as 0$.
\end{proof}
\begin{lem}\label{l:maxto0-vec}
Under Assumption \ref{AsuA-vec} and for all sequences of $W$, the imputed potential outcomes satisfy $\lmt{N}\max_{i,j}|Y_i^*(j)- \bar{Y}^*(j)|^2/N=0$.
\end{lem}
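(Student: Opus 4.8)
The plan is to reduce the vector statement to the scalar Lemma \ref{l:maxto0}, applied one coordinate at a time, exploiting that the imputation \eqref{e:impute-vec} acts separately on each coordinate and that the dimension $d$ is fixed.

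First I would record that, for each coordinate $c \in \{ \ot{d}\}$, the imputed value is $Y_i^*(j)_c = Y_{i,c}^\T{obs}+z_{cj}-z_{c W_i}$, where $z_c=(z_{c1}, \ldots,z_{cJ})^\transp$ is the coordinate-$c$ imputation vector from \eqref{e:impute-vec} and does not depend on $N$ or $i$. This is exactly the scalar imputation of FRT-\ref{step::impute} applied to the $c$-th coordinate of the potential outcomes. Bounding the Euclidean norm by the sum of its squared coordinates, I would then write
\[ \max_{i,j}|Y_i^*(j)- \bar{Y}^*(j)|^2/N \leq \sum_{c=1}^d \max_{i,j}\{ Y_i^*(j)_c- \bar{Y}^*(j)_c \}^2/N. \]

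Next I would verify that, for each fixed $c$, the scalar family $\{ Y_i(j)_c:i=\ot{N},j=\ot{J}\}$ satisfies Assumption \ref{AsuA}. The proportions $N_j/N \to p_j \in (0,1)$ are inherited directly. The coordinate means $\bar{Y}(j)_c$ converge to finite limits since $( \bar{Y}_N)$ converges with $| \bar{Y}_\infty |< \infty$; the coordinate variances $S(j,j)_{cc}$ converge to $[S_\infty(j,j)]_{cc}$, which is strictly positive because $S_\infty(j,j)$ is positive definite under Assumption \ref{AsuA-vec}. Finally, the last condition of Assumption \ref{AsuA} holds coordinatewise because $\{ Y_i(j)_c- \bar{Y}(j)_c \}^2 \leq |Y_i(j)- \bar{Y}(j)|^2$, so its scaled maximum is dominated by the vanishing quantity in Assumption \ref{AsuA-vec}.

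Having checked the hypotheses, I would invoke Lemma \ref{l:maxto0} for each $c$ to get $\max_{i,j}\{ Y_i^*(j)_c- \bar{Y}^*(j)_c \}^2/N \to 0$. Since the displayed sum has only $d$ terms and $d$ is fixed, the right-hand side tends to $0$, completing the argument. No serious obstacle arises; the only point requiring care is confirming that restricting the vector Assumption \ref{AsuA-vec} to a single coordinate yields a genuine instance of the scalar Assumption \ref{AsuA} — in particular the strict positivity of the coordinate variances, which rests on the positive-definiteness of the diagonal blocks $S_\infty(j,j)$ rather than on mere positive-semidefiniteness of $S_\infty$.
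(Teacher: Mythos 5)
Your proposal is correct and follows essentially the same route as the paper, which likewise reduces the claim coordinatewise to Lemma \ref{l:maxto0} by observing that \eqref{e:impute-vec} applies the scalar imputation of FRT-\ref{step::impute} to each entry separately. The paper states this reduction in one line; your additional checks (the sum-of-coordinates bound on the Euclidean norm and the verification that each coordinate satisfies Assumption \ref{AsuA}, including positivity of the limiting variances via positive-definiteness of $S_\infty(j,j)$) are exactly the details it leaves implicit.
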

\begin{proof}
From \eqref{e:impute-vec}, we obtain $\{ Y_i^*(j)_1:i=\ot{N},j=\ot{J}\}$ from $\{ W_i,(Y_i^\T{obs})_1:i=\ot{N}\}$ in the same way as FRT-\ref{step::impute}. So by Lemma \ref{l:maxto0}, we have $\lmt{N}\max_{i,j}\{ Y_i^*(j)_1- \bar{Y}^*(j)_1 \}^2/N=0$. Doing the same for the other $d-1$ entries gives the desired result.
\end{proof}

\section{Proofs of the Main Theorems}\label{s:pf}
We make some preliminary observations and extend the notation to handle the randomization distributions as required by Theorems \ref{t:CYx}, \ref{t:Brunner}, and \ref{t:LSF}. Throughout, we make heavy use of the mean of the observed values:
\[ \bar{Y}_\cdot^\T{obs}= \frac{1}{N}\sum_{i=1}^NY_i^\T{obs}= \sum_{j=1}^J \frac{N_j}{N}\hat{\bar{Y}}(j) \]
Recall the imputed potential outcomes FRT-\ref{step::impute} are $Y_i^*(j)=Y_i^\T{obs}+z_j-z_{W_i}$. They agree with the data in the sense $Y_i^*(W_i)=Y_i^\T{obs}$ for all $i=\ot{N}$.
They are also strictly additive, as $Y_i^*(j)-Y_i^*(k)= ( Y_i^\T{obs} + z_j-z_{W_i} ) -(Y_i^\T{obs} + z_k-z_{W_i})=z_j-z_k$ does not depend on the unit $i$. The imputed potential outcomes have means $\bar{Y}^*=( \bar{Y}^*(1), \ldots, \bar{Y}^*(J))^\transp$ and  covariance  $s^*1_J1_J^\transp$, due to strict additivity. Recalling that $\bar{z}= \sum_{j=1}^J N_jz_j/N$, we have
\begin{align} 
\bar{Y}^*(j)=& \frac{1}{N}\sum_{i=1}^N(Y_i^\T{obs}+z_j-z_{W_i})= \sum_{k=1}^J \frac{N_k}{N}\hat{\bar{Y}}(k)+z_j- \bar{z},
\label{e:imputemv}\\
s^*=& S^*(1,1)= \frac{1}{N-1}\sum_{i=1}^N \{ Y_i^*(1)- \bar{Y}^*(1) \}^2 \nonumber \\
=& \frac{1}{N-1}\sum_{j=1}^J \sum_{i=1}^NW_i(j) \{ Y_i^*(j)- \bar{Y}^*(j) \}^2 \label{e:constant} \\
=& \sum_{j=1}^J \frac{N_j-1}{N-1}\hat{S}(j,j)+ \sum_{j=1}^J \frac{N_j}{N-1}\{ \hat{\bar{Y}}(j)- \bar{Y}^*(j) \}^2, \label{e:biasvariance}
\end{align}
where \eqref{e:constant} follows from the facts that $Y_i^*(j)- \bar{Y}^*(j)$ does not depend on $j$ due to strict additivity and $\sum_{j=1}^JW_i(j)=1$, and \eqref{e:biasvariance} follows from the bias-variance decomposition (add and subtract $\hat{\bar{Y}}(j)$) and noting $Y_i^*(j)=Y_i^\T{obs}$ when $W_i=j$.

For asymptotic purposes, note that $C,x, \tilde{C}, \tilde{x}$ are fixed with respect to $N$, hence $z$ is as well. They may be regarded as constants as we take $N \to \infty$.

The analogs of $\hat{D}$ and $V$, for imputed potential outcomes are, respectively
\begin{equation}\label{e:PermCov}
\hat{D}_\pi=N \cdot \diag\{ \hat{S}_\pi(1,1)/N_1, \ldots, \hat{S}_\pi(J,J)/N_J \}, \quad V^*=s^*(P^{-1}-1_J1_J^\transp ).
\end{equation}
Compare these to \eqref{e:Dob} and \eqref{e:ObsCov}. We also have, conditional on $W$, that $\hat{D}_\pi-s^*P^{-1}\inP 0$. In general, consistent with previous patterns, analogs of population quantities have superscript ``$*$'', while those of observed quantities have subscript ``$\pi$''.
\begin{proof}[Proof of Theorems \ref{t:CYx}, \ref{t:Brunner}, and \ref{t:LSF}]
We prove the sampling, followed by the randomization distribution claims.
\paragraph{Sampling distributions of $X^2$, $F$, and $B$.}
Let Assumption \ref{AsuA} and $H_{0 \T{N}}(C,x)$ hold. We have $N^{1/2}(C \hat{\bar{Y}}-x) \ind \cl{N}(0_m,CVC^\transp )$, $C \hat{D}C^\transp \inP CDC^\transp \succ 0$ and $CDC^\transp \succeq CVC^\transp$ by Proposition \ref{t:Ding5} and \eqref{e:Dob}. Hence, by Lemma \ref{l:DD18-LA}
\[ X^2= N^{1/2}(C \hat{\bar{Y}}-x)^\transp (C \hat{D}C^\transp )^{-1}N^{1/2}(C \hat{\bar{Y}}-x) \ind \sum_{j=1}^ma_j \xi_j^2, \quad \T{ with }a_j \in [0,1] \quad (j=\ot{m}). \]
We deal with $B,F$ similarly. Assume $x=0_m$. By \eqref{e:Dob} and the Continuous Mapping Theorem, $\tr(M \hat{D})CC^\transp \inP \tr(MD)CC^\transp$. By Lemma \ref{l:DD18-LA},
\[ \begin{split}
B=& N^{1/2}(C \hat{\bar{Y}})^\transp ( \tr(M \hat{D})CC^\transp )^{-1}N^{1/2}C \hat{\bar{Y}}\ind \sum_{j=1}^m \lambda_j \big( CVC^\transp ( \tr(MD)CC^\transp )^{-1}\big) \xi_j^2 \\
\eqd & \sum_{j=1}^m \frac{1}{\tr(MD)}\lambda_j(VC^\transp (CC^\transp )^{-1}C) \xi_j^2 \eqd \frac{\sum_{j=1}^m \lambda_j(MV) \xi_j^2}{\tr(MD)}.
\end{split}\]
Recall $\cl{X}$ and $\hat{\sigma}^2$ in \eqref{e:Fstat}. Then $\hat{\sigma}^2 \inP \sum_{j=1}^Jp_jS(j,j)= \bar{S}$ by Proposition \ref{p:inP}, $(N_j-1)/(N-J) \to p_j$, and
\[ ( \cl{X}^\transp \cl{X}/N)^{-1}= \diag(N_1/N, \ldots,N_J/N)^{-1}\inP P^{-1}. \]
Therefore, by Lemma \ref{l:DD18-LA},
\[ mF= N^{1/2}(C \hat{\bar{Y}}) \{ \hat{\sigma}^2C( \cl{X}^\transp \cl{X})^{-1}C^\transp \}^{-1}N^{1/2}C \hat{\bar{Y}}\ind \sum_{j=1}^m \lambda_j(CVC^\transp ( \bar{S}CP^{-1}C^\transp )^{-1}) \xi_j^2. \]
\paragraph{Randomization distributions.}
We first show, for almost all realizations of the sequence of treatment assignments $W$, that Assumption \ref{AsuA} holds for $\{ U_i^*(j):i=\ot{N},j=\ot{J}\}$ where $U_i^*(j)= \{ Y_i^*(j)- \bar{Y}^*(j) \} /(s^*)^{1/2}$ are the standardized imputed potential outcomes.
Clearly they always have mean 0 and variance 1, so it is enough to verify that, almost surely
\begin{equation}\label{e:mainpf1}
\lmt{N}\max_{i,j}\frac{1}{N}\{ U_i^*(j)- \bar{U}^*(j) \}^2= \lmt{N}\max_{i,j}\frac{\{ Y_i^*(j)- \bar{Y}^*(j) \}^2}{Ns^*}=0.
\end{equation}
Starting with \eqref{e:biasvariance}, we have
$$
s^*= \sum_{j=1}^J \frac{N_j-1}{N-1}\hat{S}(j,j)+ \sum_{j=1}^J \frac{N_j}{N-1}\{ \hat{\bar{Y}}(j)- \bar{Y}^*(j) \}^2  
\geq  \frac{N_1-1}{N-1}\hat{S}(1,1) \as p_1S(1,1),
$$
where the last step is by Lemma \ref{l:asconv}. This shows the sequence $(s^*)_{N \geq 2J}$ is bounded away from 0, as $p_1>0$ and $S(1,1)>0$.
Now we also have $\lmt{N}N^{-1}\max_{i,j}\{ Y_i^*(j)- \bar{Y}^*(j) \}^2=0$, no matter what the realization of the sequence $\{ W \}_{N=1}^\infty$ is, by Lemma \ref{l:maxto0}. These two facts together show \eqref{e:mainpf1}.

Because $\hat{S}(1,1) \as S(1,1)$ by Lemma \ref{l:asconv}, we for the rest of the proof fix a sequence of $(W)$ along which $\hat{S}(1,1) \to S(1,1)$. The only remaining randomness then comes from $\pi \sim \Unif( \Pi_N)$.
Note for $i=\ot{N}$ that $CU_i^*=C(Y_i^*- \bar{Y}^*)/ (s^*)^{1/2}=0_m$ because $CY_i^*=x$ from the fact that the imputed potential outcomes satisfy \eqref{e:HFx}.
In particular, the standardized imputed potential outcomes satisfy $H_{0 \T{N}}(C,0_m)$, i.e., $C \bar{U}^*=0_m$. Hence, by Proposition \ref{t:Ding5}, we have
\[ \begin{split}
(N/s^*)^{1/2}(C \hat{\bar{Y}}_\pi -x)=& N^{1/2}C( \hat{\bar{Y}}_\pi- \bar{Y}^*)/(s^*)^{1/2}=N^{1/2}C \hat{\bar{U}}_\pi \\  
\ind & \cl{N}\big( 0_m,C(P^{-1}-1_J1_J^\transp )C^\transp \big) \eqd \cl{N}(0_m,CP^{-1}C^\transp )
\end{split}\]
because the standardized imputed potential outcomes have covariance structure $1_J1_J^\transp$ and $C1_J=0_m$. Next, for $j=\ot{J}$, we have
\[ \frac{\hat{S}_\pi(j,j)}{s^*}= \frac{1}{N_j-1}\sum_{i=1}^NW_{\pi(i)}(j) \frac{\{ Y_i^*(j)- \bar{Y}^*(j) \}^2}{s^*}= \frac{1}{N_j-1}\sum_{i=1}^NW_{\pi(i)}(j)U_i^*(j)^2 \inP 1 
 \]
by Proposition \ref{p:inP} and because the standardized imputed potential outcomes have variances 1. It follows by \eqref{e:PermCov} that
\[ \hat{D}_\pi/s^* \inP P^{-1}, \quad \hat{\sigma}_\pi^2/s^*= \sum_{j=1}^J \frac{N_j-1}{(N-J)s^*}\hat{S}_\pi(j,j) \inP 1,\quad \tr(M\hat{D}_\pi)/s^* \inP \tr(MP^{-1}). \]
We thus finally have by Lemma \ref{l:DD18-LA}
\[ X_\pi^2=(N/s^*)^{1/2}(C \hat{\bar{Y}}_\pi -x)^\transp (C \hat{D}_\pi C^\transp /s^*)^{-1}(N/s^*)^{1/2}(C \hat{\bar{Y}}_\pi -x) \ind \sum_{j=1}^m \lambda_j \big( CP^{-1}C^\transp (CP^{-1}C^\transp)^{-1}\big) \xi_j^2 \eqd \chi_m^2, \]
and with $x=0_m$ for the $B$ and $F$ statistics:
\[ \begin{split} 
B_\pi=& (N/s^*)^{1/2}(C \hat{\bar{Y}}_\pi)^\transp \{ \tr(M \hat{D}_\pi)CC^\transp /s^* \}^{-1}(N/s^*)^{1/2}C \hat{\bar{Y}}_\pi \\
\ind & \sum_{j=1}^m \lambda_j \big( CP^{-1}C^\transp ( \tr(MP^{-1})CC^\transp )^{-1}\big) \xi_j^2 \eqd \sum_{j=1}^m \lambda_j(MP^{-1}) \xi_j^2/ \tr(MP^{-1}), \\
mF_\pi=& (N/s^*)^{1/2}(C \hat{\bar{Y}}_\pi )^\transp \left\{ \frac{\hat{\sigma}_\pi^2}{s^*}C( \cl{X}^\transp \cl{X}/N)^{-1}C^\transp \right\}^{-1}(N/s^*)^{1/2}C \hat{\bar{Y}}_\pi \\
\ind & \sum_{j=1}^m \lambda_j \big( CP^{-1}C^\transp (CP^{-1}C^\transp)^{-1} \big) \xi_j^2 \eqd \chi_m^2. \qedhere
\end{split}\]
\end{proof}
Extending Theorem \ref{t:CYx} to the case of stratified experiments or vector potential outcomes is straightforward. We also supply their proofs for completeness.
\begin{proof}[Proof of Theorem \ref{t:CYx-bl}]
We prove the sampling, followed by the randomization distribution claims.
\paragraph{Sampling distribution of $X^2$.}
For $h=\ot{H}$, we have that $\bb{E} ( \hat{\bar{Y}}_{[h]} ) = \bar{Y}_{[h]}$, and that Assumption \ref{AsuA} holds in each stratum $h$. By Proposition \ref{t:Ding5},
\[ N_{[h]}^{1/2}C( \hat{\bar{Y}}_{[h]}- \bar{Y}_{[h]}) \ind \cl{N}(0_m,CV_{[h]}C^\transp ), \T{ where }
V_{[h]}=  \plim_{N \to \infty}\hat{D}_{[h]}-S_{[h]}. \]
Under $H_{0 \T{N}}(C,x)$, we have $x=C \bar{Y}= \sum_{h=1}^HN_{[h]}C \bar{Y}_{[h]}/N$. Because $(\hat{\bar{Y}}_{[1]}, \ldots, \hat{\bar{Y}}_{[H]})$ are mutually independent in a SRE, we have
\[ \begin{split}
N^{1/2}(C \breve{\bar{Y}}-x)=& \sum_{h=1}^H \left(\frac{N_{[h]}}{N}\right)^{1/2}N_{[h]}^{1/2}C( \hat{\bar{Y}}_{[h]}- \bar{Y}_{[h]}) \\
\ind & \sum_{h=1}^H \omega_{[h]}^{1/2}\cl{N}(0_m,CV_{[h]}C^\transp ) \eqd \cl{N}\left( 0_m, \sum_{h=1}^H \omega_{[h]}CV_{[h]}C^\transp \right) .
\end{split}\]
Next, note that $\plim_{N \to \infty}\hat{D}_{[h]}\succeq V_{[h]}$ implies $\plim_{N \to \infty}\sum_{h=1}^HN_{[h]}C\hat{D}_{[h]}C^\transp /N \succeq  \sum_{h=1}^H \omega_{[h]}CV_{[h]}C^\transp$, so by Lemma \ref{l:DD18-LA}, we have
\[ X^2=N^{1/2}(C \breve{\bar{Y}}-x)^\transp \left( C \sum_{h=1}^H \frac{N_{[h]}}{N}\hat{D}_{[h]}C^\transp \right)^{-1}N^{1/2}(C \breve{\bar{Y}}-x) \ind \sum_{j=1}^ma_j \xi_j^2. \]
\paragraph{Randomization distribution of $X^2$.}
We first show Assumption \ref{AsuA} holds almost surely within each stratum for the imputed potential outcomes $Y_i^*(j)$. Because the original potential outcomes satisfy Assumption \ref{AsuA} in each stratum, Lemma \ref{l:maxto0} gives $\lmt{N}\max_j \max_{i:X_i=h}\{ Y_i^*(j)- \bar{Y}_{[h]}^*(j) \}^2/N_{[h]}=0$.
Put $\bar{z}_{[h]}= \sum_{j=1}^JN_{[h]j}z_{[h],j}/N_{[h]}$. In stratum $h$, the mean vector is $\bar{Y}_{[h]}^*$ and the covariance structure is $s_{[h]}^*1_J1_J^\transp$, where
\[ \begin{split}
\bar{Y}_{[h]}^*(j)=& \sum_{k=1}^J \frac{N_{[h]k}}{N_{[h]}}\hat{\bar{Y}}_{[h]}(k)+z_{[h],j}- \bar{z}_{[h]}\\
s_{[h]}^*=& \sum_{j=1}^J \frac{N_{[h]j}-1}{N_{[h]}-1}\hat{S}_{[h]}(j,j)+ \sum_{j=1}^J \frac{N_{[h]j}}{N_{[h]}-1}\{ \hat{\bar{Y}}_{[h]} (j)- \bar{Y}_{[h]}^*(j) \}^2,
\end{split}\]
by applying \eqref{e:imputemv} and \eqref{e:biasvariance} to stratum $h$. $\hat{\bar{Y}}_{[h]}(j)$ and $\hat{S}_{[h]}(j,j)$ converge almost surely because of Lemma \ref{l:asconv}, applicable because Assumption \ref{AsuB} holds within stratum $h$. Then $\bar{Y}_{[h]}^*(j)$ and $s_{[h]}^*$ converge almost surely because all quantities on the right-hand side do.
This shows Assumption \ref{AsuA} holds within each stratum almost surely.

For the rest of the proof, fix a sequence $(W)$ along which $(s_{[h]}^*)$ converges. Because each $CY_i^*=x_{[h]}$ whenever $X_i=h$, we have $C \bar{Y}_{[h]}^*=x_{[h]}$, and by Proposition \ref{t:Ding5},
\[ N_{[h]}^{1/2}C( \hat{\bar{Y}}_{[h],\pi}- \bar{Y}_{[h]}^*) \ind \cl{N}\big( 0_m,s_{[h]}^*C(P^{-1}-1_J1_J^\transp )C^\transp \big) \eqd \cl{N}(0_m,s_{[h]}^*CP^{-1}C^\transp ). \]
Since $x= \sum_{h=1}^HN_{[h]}x_{[h]}/N= \sum_{h=1}^HN_{[h]}C \bar{Y}_{[h]}^*/N$, it follows that
\[ \begin{split}
N^{1/2}(C \breve{\bar{Y}}_\pi-x)=& \sum_{h=1}^H \left( \frac{N_{[h]}}{N}\right)^{1/2}N_{[h]}^{1/2}C( \hat{\bar{Y}}_{[h],\pi}- \bar{Y}_{[h]}^*) \\
\ind &  \sum_{h=1}^H \omega_{[h]}^{1/2}\cl{N}\left(0_m,s_{[h]}^*CP^{-1}C^\transp \right) \eqd \cl{N}\left( 0_m, \sum_{h=1}^H \omega_{[h]}s_{[h]}^*CP^{-1}C^\transp \right)
\end{split}\]
because, conditioning on $W$, the $(\hat{\bar{Y}}_{[1],\pi}, \ldots, \hat{\bar{Y}}_{[H],\pi})$ are mutually independent. 
Next, from Proposition \ref{p:inP}, we have $\hat{D}_{[h],\pi}\inP s_{[h]}^*P^{-1}$, so $C \sum_{h=1}^HN_{[h]}\hat{D}_{[h],\pi}C^\transp /N \inP \sum_{h=1}^H \omega_{[h]}s_{[h]}^*CP^{-1}C^\transp$, and we finally have from Lemma \ref{l:DD18-LA}
\[ X_\pi^2=N^{1/2}(C \breve{\bar{Y}}_\pi-x)^\transp \left( C \sum_{h=1}^H \frac{N_{[h]}}{N}\hat{D}_{[h],\pi}C^\transp \right)^{-1}N^{1/2}(C \breve{\bar{Y}}_\pi-x) \ind \chi_m^2. \qedhere \]
\end{proof}
\begin{proof}[Proof of Theorem \ref{t:MPO}]
We prove the sampling, followed by the randomization distribution claims.
\paragraph{Sampling distribution of $X^2$.}
Under Assumption \ref{AsuA-vec} and $H_{0 \T{N}}(C,x)$, we use \cite{DingCLT} to prove the following results in parallel with Propositions \ref{p:inP} and \ref{t:Ding5}. First, $\hat{\bar{Y}}\inP \bar{Y}$ and $\hat{S}(j,j) \inP S(j,j)$ for $j=\ot{J}$. Second, $N^{1/2}(C \hat{\bar{Y}}-x) \ind \cl{N}(0_m,CVC^\transp )$, where we have the vector potential outcomes analog of \eqref{e:ObsCov}:
\begin{equation}\label{e:ObsCov-vec}
V= \lmt{N}N \cdot \Cov( \hat{\bar{Y}})= \lmt{N}\begin{pmatrix}
\frac{N-N_1}{N_1}S(1,1) & -S(1,2) & \cdots & -S(1,J) \\
-S(2,1) & \frac{N-N_2}{N_2}S(2,2) & \cdots & -S(2,J) \\
\vdots & \vdots & \ddots & \vdots \\
-S(J,1) & -S(J,2) & \cdots& \frac{N-N_J}{N_J}S(J,J) \end{pmatrix}.
\end{equation}
Because $C\hat{D} C^\transp \inP C(V+S)C^\transp \succeq CVC^\transp$, it follows from Lemma \ref{l:DD18-LA} that $X^2=N(C \hat{\bar{Y}}-x)^\transp (C\hat{D} C^\transp )^{-1}(C \hat{\bar{Y}}-x) \ind \sum_{j=1}^ma_j \chi_j^2$.
\paragraph{Randomization distribution of $X^2$.}
We first show Assumption \ref{AsuA-vec} holds almost surely for the imputed potential outcomes $Y_i^*(j)$. Because the original potential outcomes satisfy Assumption \ref{AsuA-vec}, Lemma \ref{l:maxto0-vec} gives $\lmt{N}\max_{i,j}| Y_i^*(j)- \bar{Y}^*(j)|^2/N=0$.
Their means satisfy
\[ 
\bar{Y}^*(j)_1= \frac{1}{N}\sum_{i=1}^N( Y_{i,1}^\T{obs} + z_{1j} -z_{1,W_i})
= \frac{1}{N}\sum_{k=1}^JN_j \hat{\bar{Y}}(k)_1 +  z_{1j}- \bar{z}_1 ,
\]
where $\bar{z}_1 = \sum_{j=1}^J N_j  z_{1j}/N.$ 
Hence, the $\bar{Y}^*(j)_1$ converge almost surely because $\hat{\bar{Y}}(j) \as \bar{Y}(j)$ by Lemma \ref{l:asconv-vec}. By the same reasoning, the other entries of $\bar{Y}^*(j)$ also converge almost surely.
The covariance structure of the imputed potential outcomes is $(1_J1_J^\transp ) \otimes S^*(1,1)$, where following the same steps to derive \eqref{e:biasvariance}, we get
\[ \begin{split}
S^*(1,1)=& \frac{1}{N-1}\sum_{i=1}^N \op{\{ Y_i^*(1)- \bar{Y}^*(1) \}}\\
=& \frac{1}{N-1}\sum_{j=1}^J \sum_{i=1}^NW_i(j) \op{\{ Y_i^*(j)- \bar{Y}^*(j) \}}\\
=& \sum_{j=1}^J \frac{N_j-1}{N-1}\hat{S}(j,j)+ \sum_{j=1}^J \frac{N_j}{N-1}\op{\{ \hat{\bar{Y}}(j)- \bar{Y}^*(j) \}}.
\end{split}\]
This converges almost surely because all quantities in the last line do. For instance, $\hat{S}(j,j)$ converge almost surely because of Lemma \ref{l:asconv-vec}, applicable because of Assumption \ref{AsuB-vec}. This shows Assumption \ref{AsuA-vec} holds almost surely.

For the rest of the proof, fix a sequence $ ( W) $ along which Assumption \ref{AsuA-vec} is met. The limit of $S^*(1,1)$ must be invertible because the above calculation shows $S^*(1,1) \succeq (N_1-1)S(1,1)/(N-1) \succ 0$. Because each $CY_i^*=x$, the vector potential outcomes analog of Proposition \ref{t:Ding5} gives us
\[ \begin{split}
N^{1/2}(C \hat{\bar{Y}}_\pi-x)=N^{1/2}C( \hat{\bar{Y}}_\pi- \bar{Y}^*) \ind & \cl{N}\big( 0_m,C \{ (P^{-1}-1_J1_J^\transp ) \otimes S^*(1,1) \} C^\transp \big) \\
\eqd & \cl{N}\big( 0_m,C \{ P^{-1}\otimes S^*(1,1) \} C^\transp \big) .
\end{split}\]
The cancellation in the last line occurred, for instance because the $(1,2)$-block of $C \{ (1_J1_J^\transp ) \otimes S^*(1,1) \} C^\transp$ is $(C_1 \otimes e_1^\transp ) \{ (1_J1_J^\transp ) \otimes S^*(1,1) \} (C_2 \otimes e_2^\transp )^\transp =(C_11_J1_J^\transp C_2^\transp ) \otimes \{ e_1^\transp S^*(1,1)e_2 \}$, which vanishes because $C_1,C_2$ are themselves contrast matrices. Next,
\[ \hat{D}_\pi \inP \diag\left\{ \frac{S^*(1,1)}{p_1}, \ldots, \frac{S^*(1,1)}{p_J}\right\} =P^{-1}\otimes S^*(1,1), \]
so $C \hat{D}_\pi C^\transp \inP C \{ P^{-1}\otimes S^*(1,1) \} C^\transp$, and we finally have from Lemma \ref{l:DD18-LA} that $X^2_\pi=N(C \hat{\bar{Y}}_\pi-x)^\transp (C\hat{D}_\pi C^\transp )^{-1}(C \hat{\bar{Y}}_\pi-x) \ind \chi_m^2$.
\end{proof}

\section{Proofs of other results}\label{a::proofothers}
\begin{proof}[Proof of Proposition \ref{p:BimpA}]
The conclusion follows from
\[ \begin{split}
\max_{i,j}\frac{1}{N}\{ Y_i(j)- \bar{Y}(j) \}^2=& \frac{1}{N}\left[ \max_{i,j}\{ Y_i(j)- \bar{Y}(j) \}^4 \right]^{1/2}\\
\leq & \frac{1}{N}\left[ \max_j \sum_{i=1}^N \{ Y_i(j)-\bar{Y}(j) \}^4 \right]^{1/2}\leq (L/N)^{1/2}
\end{split}\]
which converges to $0$ as $N \to \infty$.
\end{proof}

\begin{proof}[Proof of Proposition \ref{p:inP}]
It follows from Theorem 1 and Proposition 1 of \citet{DingCLT}. 
\end{proof}

\begin{proof}[Proof of Proposition \ref{t:Ding5}]
It follows from Theorem 5 of of \citet{DingCLT}.
\end{proof}

\begin{proof}[Proof of Proposition \ref{p:suitable}] 
Assume $H_{0 \T{N}}(C,x)$ throughout. Let $U \sim \Unif(0,1)$.  Define
\[ 
F(x) = \bb{P}(T \leq x), \quad G(x) =  \bb{P}(T<x), \quad F_W(x) =  \bb{P}(T_\pi \leq x|W), \quad G_W(x) = \bb{P}(T_\pi <x|W). 
\]
Fix $\alpha \in (0,1)$. Note $G_W(T)=(N!)^{-1}\sum_{\pi \in \Pi_N}1(T_\pi<T)$, so
\[ \bb{P}\left\{ \frac{1}{N!}\sum_{\pi \in \Pi_N}1(T_\pi \geq T) \leq \alpha \right\} = \bb{P}\{ 1-G_W(T) \leq \alpha \} \leq \bb{P}\{ G(T) \geq 1- \alpha \} \leq \bb{P}(U \geq 1- \alpha)= \alpha \]
where we have used $T \leq_\T{st}T_\pi |W$ if and only if $G_W \leq G$ on $\bb{R}$ and $G(T) \leq_\T{st}U$.

\end{proof}

\begin{proof}[Proof of Corollary \ref{c:Brunner}]
First, if $S(1,1)= \cdots=S(J,J)$, then $D=S(1,1)P^{-1}$ from \eqref{e:Dob}. Recall from $V \preceq D$ that each $\lambda_j(MV) \leq \lambda_j(MD)$.
Therefore, under $H_{0 \T{N}}(C,x)$, Theorem \ref{t:Brunner} implies that
\[ B \ind \frac{\sum_{j=1}^m \lambda_j(MV) \xi_j^2}{\tr(MD)}\leq_\T{st}\frac{\sum_{j=1}^m \lambda_j(MD) \xi_j^2}{\tr(MD)}= \frac{\sum_{j=1}^m S(1,1) \lambda_j(MP^{-1}) \xi_j^2}{S(1,1)\tr(MP^{-1})}= \frac{\sum_{j=1}^m \lambda_j(MP^{-1}) \xi_j^2}{\tr(MP^{-1})}\eqd B_\pi|W. \]
So the criterion of Proposition \ref{p:suitable} is met.

Second, if $C$ is a row vector, then $M=C^\transp C/CC^\transp$. Therefore
\[ B= \frac{\hat{\bar{Y}}^\transp C^\transp C \hat{\bar{Y}}/CC^\transp}{\tr(C^\transp C \hat{D})/CC^\transp}= \frac{(C \hat{\bar{Y}})^\transp C \hat{\bar{Y}}}{C \hat{D}C^\transp}=(C \hat{\bar{Y}})^\transp (C \hat{D}C^\transp )^{-1}C \hat{\bar{Y}}=X^2. \qedhere \]
\end{proof}
\begin{proof}[Proof of Proposition \ref{p:BeqF}]
Under a balanced design we have $N_1= \ldots=N_J=N/J$, $X^\transp X=N_1I_J$ and $\hat{\sigma}^2= \sum_{j=1}^J \hat{S}(j,j)/J$. Thus, $F=N_1 \hat{\bar{Y}}^\transp M \hat{\bar{Y}}/(m \hat{\sigma}^2)$.
If $M$ has the same values on its main diagonal, then each value is in fact $m/J$ because the trace and rank of a projection matrix are the same. This implies
\[ \frac{N}{\tr(M \hat{D})}= N/ \left\{ \sum_{j=1}^J \frac{N}{N_j}\hat{S}(j,j) \frac{m}{J}\right\} = \frac{N}{m \sum_{j=1}^J \hat{S}(j,j)}= \frac{N_1}{m \hat{\sigma}^2}\Longrightarrow
B= \frac{N( \hat{\bar{Y}})^\transp M \hat{\bar{Y}}}{\tr(M \hat{D})}= \frac{N_1 \hat{\bar{Y}}^\transp M \hat{\bar{Y}}}{m \hat{\sigma}^2}=F. \qedhere \]
\end{proof}
\begin{proof}[Proof of Corollary \ref{c:LSF}]
If $S(1,1)= \cdots=S(J,J)$, then $\bar{S}= \sum_{j=1}^Jp_jS(j,j)=S(1,1)$ and $D= \bar{S}\cdot P^{-1}$. Therefore, $0 \leq \lambda_j \big( CVC^\transp( \bar{S}CP^{-1}C^\transp)^{-1}\big) = \lambda_j \big( CVC^\transp( CDC^\transp)^{-1}\big) \leq 1$ because $V \preceq D$.
By Theorem \ref{t:LSF}, under $H_{0 \T{N}}(C,0_m)$, we have
\[ m \cdot F \ind \sum_{j=1}^m\lambda_j \big( CVC^\transp( \bar{S}CP^{-1}C^\transp)^{-1}\big) \xi_j^2 \leq_\T{st}\chi_m^2, \qquad
m \cdot F_\pi|W \ind \chi_m^2. \qedhere \]
\end{proof}
\begin{proof}[Proof of Proposition \ref{p:eqX2}]
The conclusions follow from simple linear algebra facts. They seem to be known, but we give a proof for completeness. 

We first equate the $X^2$. As stated, in the ANOVA setting, $C=(1_{J-1}, \; -I_{J-1})$ and $x= 0_{J-1}$. Put $Q_j=N_j/ \hat{S}(j,j)$ and $Q= \sum_{j=1}^JQ_j$. Then by block matrix multiplication
\[ \frac{1}{N}C\hat{D}C^\transp =(1_{J-1}, \; -I_{J-1}) \diag(1/Q_1, \ldots,1/Q_J) \vect{1_{J-1}^\transp}{-I_{J-1}}= \frac{1}{Q_1}1_{J-1}1_{J-1}^\transp + \diag(1/Q_2, \ldots,1/Q_J). \]
Thus, using the Sherman--Morrison formula, we have
\[ \begin{split}
\left( \frac{1}{N}C\hat{D}C^\transp \right)^{-1} 
=& \diag(Q_2, \ldots, Q_J)- \left\{ \frac{1}{Q_1}\vecth{Q_2}{\vdots}{Q_J}(Q_2, \ldots,Q_J) \right\} 
\Big/ \left\{ 1+ \frac{1}{Q_1}\sum_{j=2}^JQ_j \right \} \\
=& \diag(Q_2, \ldots,Q_J)- \frac{1}{Q}\vecth{Q_2}{\vdots}{Q_J}(Q_2, \ldots,Q_J).
\end{split} \]
Finally, from \eqref{e:X2}, we have
\[ \begin{split}
X^2=& {\footnotesize \big( \hat{\bar{Y}}(1)- \hat{\bar{Y}}(2), \ldots, \hat{\bar{Y}}(1)- \hat{\bar{Y}}(J) \big) 
\left\{ \diag(Q_2, \ldots,Q_J)- \frac{1}{Q}\vecth{Q_2}{\vdots}{Q_J}(Q_2, \ldots, Q_J) \right\} \vecth{\hat{\bar{Y}}(1)- \hat{\bar{Y}}(2)}{\vdots}{\hat{\bar{Y}}(1)- \hat{\bar{Y}}(J)}}\\
=& \sum_{j=2}^JQ_j \{ \hat{\bar{Y}}(1)- \hat{\bar{Y}}(j) \}^2- \frac{1}{Q}\left[ \sum_{j=2}^JQ_j \{ \hat{\bar{Y}}(1)- \hat{\bar{Y}}(j) \} \right]^2.
\end{split}\]
Now we recognize the expression in \eqref{e:DD18-X2} as $Q$ times the variance of $\{ \hat{\bar{Y}}(1), \ldots, \hat{\bar{Y}}(J) \}$ under the probabilities $Q_1/Q, \ldots, Q_J/Q$.
But variance is unaffected by switching signs, and then adding the constant $\hat{\bar{Y}}(1)$ to all quantities, so \eqref{e:DD18-X2} is $Q$ times the variance of $\{ 0, \hat{\bar{Y}}(1)-\hat{\bar{Y}}(2), \ldots, \hat{\bar{Y}}(1)-\hat{\bar{Y}}(J) \}$ under the same probabilities, which is precisely what $X^2$ is above.

Next, we equate the $F$. Recall that $m=J-1$. It is thus enough to show
\[ (C \hat{\bar{Y}})^\transp \{ C( \cl{X}^\transp \cl{X})^{-1}C^\transp \}^{-1}C \hat{\bar{Y}}= \sum_{j=1}^JN_j \{ \hat{\bar{Y}}(j)- \bar{Y}_\cdot^\T{obs}\}^2. \]
This follows an identical argument to showing the $X^2$ coincide, with $N_j,N$ in place of $Q_j,Q$.
\end{proof}

\begin{proof}[Proof of Corollary \ref{c:tc}]
The expression \eqref{e:X2-tc} follows by matrix algebra. Because $C = (1, -1)$ is a row vector, Corollary \ref{c:Brunner} implies $B=X^2$, which is proper. \citet{DD18} have proved the rest of the corollary.
\end{proof}

\begin{proof}[Proof of Corollary \ref{c:CYx-ineq}]
Under Assumption \ref{AsuA} and $H_{0 \T{N}}(C,x)$ with a row vector $C$, we have $N^{1/2}(C \hat{\bar{Y}}-x) \ind \cl{N}(0,CVC^\transp )$ by Proposition \ref{t:Ding5}, $C\hat{D}C^\transp \inP CDC^\transp >0$ and $CDC^\transp \geq CVC^\transp$ by \eqref{e:Dob}. Hence,  
\[ t= \frac{N^{1/2}(x-C \hat{\bar{Y}})}{(C\hat{D}C^\transp )^{1/2}}\ind \cl{N}(0,a), \T{ where }a= \frac{CVC^\transp}{CDC^\transp}\in [0,1]. \]
To show the randomization distribution under Assumption \ref{AsuB}, we have $\hat{S}(1,1) \as S(1,1)$ by Lemma \ref{l:asconv}, so fix a sequence of $(W)$ along which $\hat{S}(1,1) \to S(1,1)$.
Then $(N/s^*)^{1/2}(C \hat{\bar{Y}}_\pi-x) \ind \cl{N}(0,CP^{-1}C^\transp )$ and $\hat{D}_\pi/s^* \inP P^{-1}$ (these are intermediate steps in the proof of Theorem \ref{t:CYx}), so
\[ t_\pi|W= \frac{N^{1/2}(x-C \hat{\bar{Y}}_\pi)}{(C\hat{D}_\pi C^\transp )^{1/2}}=(N/s^*)^{1/2}\frac{x-C \hat{\bar{Y}}_\pi}{(C\hat{D}_\pi C^\transp )^{1/2}}\ind \cl{N}(0,1). \]

To argue $t_+$ is proper for \eqref{e:HNx-ineq}, we let $x=x_0$. Then we want to test $\tilde{H}_{0 \T{N}}(C,x_0):C \bar{Y}\geq x_0$. The notation switch frees up $x$ as a dummy variable.
Let $p(x)$ be the $p$-value from testing $C \bar{Y}=x$ with $t_+=t_+(x)$. Then the $p$-value for $\tilde{H}_{0 \T{N}}(C,x_0)$ is $\sup_{x \geq x_0}p(x)$.
When $x \leq C \hat{\bar{Y}}$, we have $t_+=0$, so $p(x)=1$. If $C \hat{\bar{Y}}\geq x_0$, then $t_+(x_0)=0$, so $p(x_0)=1$ (see also the Hodges--Lehmann discussion), and $\sup_{x \geq x_0}p(x)=1=p(x_0)$.
The more interesting case is $C \hat{\bar{Y}}<x_0$. Then $t_+(x_0) \leq t_+(x)$ when $x \geq x_0$. The fact that $t_\pi(x)|W \ind \cl{N}(0,1)$ a.s. for all $x \in \bb{R}$ suggests asymptotically that $p(x_0) \geq p(x)$ when $x \geq x_0$, so $\sup_{x \geq x_0}p(x)=p(x_0)$.
Asymptotically speaking, we thus always have $\sup_{x \geq x_0}p(x)=p(x_0)$. This is why we can test $\tilde{H}_{0 \T{N}}(C,x)$ with $t_+$ as if we were testing $H_{0 \T{N}}(C,x)$.
\end{proof}

\begin{proof}[Proof of Proposition \ref{p:BimpA-vec}]
We omit it because it is similar to the proof of Proposition \ref{p:BimpA}. 
\end{proof}
\end{document}